\title[Upper Bounds on Maximum Lengths of Singleton-Optimal LRCs]{Upper Bounds on Maximum Lengths of Singleton-Optimal Locally Repairable Codes}
\author{Shu Liu}\address{National Key Laboratory of Science and Technology on Communications, University of Electronic Science and Technology of China, China} \email{shuliu@uestc.edu.cn}
\author{Tingyi Wu}\address{ Theory Lab, Central Research Institute, 2012 Labs, Huawei Technology Co. Ltd.} \email{wu.ting.yi@huawei.com}
\author{Chaoping Xing} \address{School of Electronic Information and Electric Engineering, Shanghai Jiao Tong University, China} \email{xingcp@sjtu.edu.cn}
\author{Chen Yuan} \address{School of Electronic Information and Electric Engineering, Shanghai Jiao Tong University, China} \email{{chen}${\_}${yuan}@sjtu.edu.cn}
\date{}
\newtheorem{lemma}{Lemma}[section]
\newtheorem{theorem}[lemma]{Theorem}
\newtheorem{cor}[lemma]{Corollary}
\newtheorem{prop}[lemma]{Proposition}
\newtheorem{ex}[lemma]{Example}
\newtheorem{defn}{Definition}
\newtheorem{exm}{Example}
\newtheorem{open}{Open Problem}
\theoremstyle{remark}
\newtheorem{rmk}{Remark}
\renewcommand{\epsilon}{\varepsilon}
\renewcommand{\le}{\leqslant}
\renewcommand{\ge}{\geqslant}
\newcommand{\vnote}[1]{}
\def\F{\mathbb{F}}
\def \mC {\mathcal{C}}
\def \mC {\mathcal{C}}
\def \Xi {{X^{[i]}}}
\newcommand{\Ga}{\alpha}
\newcommand{\Gb}{\beta}
\newcommand{\Ge}{\epsilon}
\newcommand{\Gl}{\lambda}
\def \bc {{\bf c}}
\def \bu {{\bf u}}
\def \bv {{\bf v}}
\def \bo {{\bf 0}}
\def\supp {{\rm supp }}
\def\LRC {{\rm locally repairable code\ }}
\def\LRCs {{\rm locally repairable codes\ }}
\begin{document}
\maketitle

\begin{abstract} A locally repairable code is called Singleton-optimal if it achieves the Singleton-type bound. Such codes are of great theoretic interest in the study of locally repairable codes. In the recent years there has been a great amount of work on this topic. One of the main problems in this topic is to determine the largest length of a $q$-ary Singleton-optimal locally repairable code for given locality and minimum distance. Unlike classical MDS codes, the maximum length of Singleton-optimal locally repairable codes are very sensitive to minimum distance and locality. Thus, it is more challenging and complicated to investigate  the maximum length of Singleton-optimal locally repairable codes.

In literature, there has been already some research on this problem. However, most of work is concerned with some specific parameter regime such as small minimum distance and locality, and  rely on the constraint that $(r+1)|n$ and recovery sets are disjoint, where $r$ is locality and $n$ is the code length. In this paper we study the problem for large range of parameters including the case where minimum distance is proportional to length. In addition, we also derive  some upper bounds on  the maximum length of Singleton-optimal locally repairable codes with small minimum distance by removing this constraint. It turns out that even without the constraint we still get better upper bounds for codes with small locality and distance compared with known results. Furthermore, based on our upper bounds for codes with small distance and locality and some propagation rule that we propose in this paper, we are able to derive some upper bounds for codes  with relatively large distance and locality assuming that $(r+1)|n$ and recovery sets are disjoint.
\end{abstract}

\section{Introduction}
Modern distributed storage systems have great demand for erasure coding based schemes with good storage efficiency in order to cope with the explosion in the amount of data stored online. Locally Repairable Codes (LRCs) have emerged as the codes of choice for many such scenarios and have been implemented in a number of large scale systems e.g., Microsoft Azure~\cite{HSX+12} and Hadoop~\cite{SAP+13}. The topic has attracted many researchers and a great amount of work has been done in the literature \cite{HL07,HCL,GHSY12,PKLK12,SRKV13,FY14,PD14,TB14,TPD16,BTV17,CMST20}.

A block code is called a locally repairable code  with locality $r$ if every symbol in the encoding is a function of
$r$ other symbols. This enables recovery of any single erased symbol in a local fashion by downloading at most $r$ other symbols. On the other hand, one would like the code to have a good minimum distance to enable recovery of many erasures in the worst-case. Locally repairable codes have been the subject of extensive study in recent years.
Locally repairable codes offer a good balance between efficient erasure recovery in the typical case in distributed storage systems where a single node fails (or becomes temporarily unavailable due to maintenance or other causes), and still allowing recovery of the data from a larger number of erasures and thus safeguarding the data in more worst-case scenarios.

Codes with information locality property were also studied in \cite{FY14,GHSY12}. In this paper we focus on linear codes. For an $[n,k,d]$-linear code with locality $r$, we denote it by $[n,k,d;r]$  (see the definition of locally repairable codes in Section 2).  An $[n,k,d;r]$-locally repairable code must obey various upper bounds such as the Singleton-type bound \cite{HSX+12}, the Cadambe-Mazumdar bound \cite{CM15}, etc. A locally repairable code achieving any of these upper bounds is called optimal. Among these bounds, the Singleton-type bound is neat and independent of alphabet size. Precisely speaking, for a linear locally repairable code ${\mC}$ of length $n$ with $k$ information symbols and locality $r$,    the minimum distance $d({\mC})$ of ${\mC}$ is upper bounded by
 \begin{equation}\label{eq:1}
 d({\mC})\le n-k-\left\lceil \frac kr\right\rceil+2.
 \end{equation}
 The bound \eqref{eq:1} is called the Singleton-type bound for locally repairable codes and it was first proved in \cite{GHSY12}. A linear code achieving the above Singleton-type bound is called Singleton-optimal.

 A code achieving  the classical Singleton bound is called a Maximal Distance Separable (MDS) code. There is a well-known Main MDS Conjecture stating that a $q$-ary nontrivial MDS code has length at most $q+2$. This conjecture is believed to be true widely and the conjecture has been proved when the ground field has a prime order \cite{B12}. The Main MDS Conjecture is very neat in the sense that the maximum length of a nontrivial MDS code depends only on $q$ and is independent of minimum distance. Unlike the Main MDS Conjecture, the maximum length of a Singleton-optimal locally repairable code is very sensitive to minimum distance and locality. This makes the problem more challenging and complicated.
  The current paper will focus on study of maximum lengths of  Singleton-optimal locally repairable codes, namely we will drive some upper bounds.

 \subsection{Known results}
One natural question is how large the maximum length of a Singleton-optimal locally repairable code could be. There is already some research on this problem though it is still far from complete. The problem was first studied in the  paper by Guruswami et al. \cite{GXY19} where they considered the cases of relatively small distance $d$ and locality $r$ under the assumption that $(r+1)|n$ and recovery sets are disjoint, where $n$ is length of the code.
\begin{prop}[see \cite{GXY19}]\label{prop:1.1}
\label{prop:GXY} Assume that ${\mC}$ is a $q$-ary Singleton-optimal $[n,k,d;r]$-locally repairable code with $d \ge 5$ and $n \ge \Omega(d r^2)$. If  $(r+1)|n$ and recovery sets are disjoint, then  $n \le O(d q^{3-\frac{4}{d-1}})$ when $d=1 \pmod 4$, $n \le O(d q^{3})$ when $d=2,3 \pmod 4$ and $n \le O(d q^{3+\frac{4}{d-4}})$ when $d=0\pmod 4$. In particular, $n=O(q^2)$ if $d=5$.
\end{prop}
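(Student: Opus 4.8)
\textbf{Proof proposal (following the strategy of \cite{GXY19}).}
\emph{Step 1: from the code to a family of $r$-dimensional subspaces.} Since $(r+1)\mid n$ and the recovery sets are disjoint, the $n$ coordinates split into $m=n/(r+1)$ repair groups $G_1,\dots,G_m$, each of size $r+1$, and on each group $\mC$ restricts to a $[r+1,r,2]$ MDS local code $C_i=\ba_i^{\perp}$ with $\ba_i$ fully supported on $G_i$. Plugging this into \eqref{eq:1} pins the remaining parameters: after first reducing (by shortening inside the ``information'' groups) to the clean case $r\mid k$, one gets $k=rt$, $n=(r+1)m$ and $d=(r+1)(m-t)+2$, so $m-t=(d-2)/(r+1)$; the non-divisible cases are carried along as bounded perturbations. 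Folding $\mC$ by replacing each group's $r+1$ symbols with the $r$ free symbols of its local MDS code identifies $\mC$ with $\{(\phi_1(x),\dots,\phi_m(x)):x\in\F_q^{k}\}$ for surjective $\F_q$-linear maps $\phi_i\colon\F_q^{k}\to\F_q^{r}$; set $W_i=(\ker\phi_i)^{\perp}$, an $r$-dimensional subspace of $\F_q^{k}$. The local codes being well defined is then exactly the condition that \emph{any $t$ of the $W_i$ are in direct sum} (equivalently span $\F_q^{k}$), and the $\ba_i$ become a ``coordinate'' flag of $r+1$ marked hyperplanes inside $\PP(W_i)\cong\PP^{r-1}$.

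\emph{Step 2: encoding the distance condition as a forbidden configuration.} For $r\ge 2$ the requirement $d(\mC)=d$ is strictly stronger than Step 1. A nonzero codeword vanishing on a maximal set of groups vanishes on exactly $t-1$ of them (it cannot vanish on $t$, since any $t$ of the $W_i$ span), hence is supported on exactly $m-t+1=(d-2)/(r+1)+1$ groups; the space of codewords with that support is forced by the $W_i$, and each such codeword is a sum of local codewords of weights in $\{2,\dots,r+1\}$, so for \emph{every} $(m-t+1)$-subset of groups this forced codeword must have total weight $\ge d$. Intrinsically this says: across every $(m-t+1)$-subset of groups, the relative position of the $W_i$ — recorded by how the marked hyperplane-flags of the $C_i$ sit relative to one another — avoids a short list of degenerate patterns, namely the ones that would let the forced codeword be weight $2$ in too many groups at once. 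The minimal forbidden pattern has size $\Theta(d)$; tracking how the ``excess weight'' $\sum_i(\mathrm{wt}_i-2)$ can be spread over the groups shows the minimal pattern, and hence the extremal exponent obtained from it, changes with $d\bmod 4$, which is the source of the four cases and of the corrections $\pm 4/(d\mp c)$.

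\emph{Step 3: the extremal bound, and the main obstacle.} With the forbidden configuration identified, regard the $m$ groups as points carrying $\PGL$-gluings and apply a Tur\'an/Moore-type counting argument: ``$\mC$ has distance $d$'' becomes a high-girth hypothesis, forcing $m=O(q^{\,3\pm 4/(d\mp c)})$ in the four residue classes, with the base cases $d\in\{5,6,7,8\}$ done by direct double counting; multiplying by $r+1=O(d)$ gives the stated bounds on $n$, and for $d=5$ the forbidden pattern degenerates to a conic/near-pencil in $\PP^{2}$, improving the count to $n=O(q^{2})$. The heart of the argument is Steps~2--3: isolating the \emph{minimal} forbidden configuration precisely enough to see the $d\bmod 4$ dichotomy and then matching it with an extremal estimate of exactly the right exponent (rather than a lossy one); keeping the non-divisible cases ($r\nmid k$, or $(r+1)\nmid d-2$) from blurring that extremal problem is the remaining technical nuisance.
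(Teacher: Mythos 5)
Your proposal does \emph{not} follow the strategy of \cite{GXY19}, despite the label. The actual argument there (of which Theorem~\ref{thm:3.9} in this paper is a direct generalization) is a plain parity-check-matrix/Hamming-bound computation, not a projective/extremal argument. One fixes a parity-check matrix $H=\begin{pmatrix}H_1\\H_2\end{pmatrix}$ whose top block $H_1$ records the disjoint recovery sets (under $(r+1)\mid n$ this is exactly $\ell=n/(r+1)$ rows with disjoint supports of size $r+1$), performs column operations subtracting the first column of each block, and extracts a residual parity-check matrix $K$ with $h=n-k-\ell$ rows and about $\tfrac{rn}{r+1}$ columns; because $(r+1)\mid n$ one computes $h\le d-2$ directly from Lemma~\ref{lem:2.4}, and because any $d-1$ columns of $H$ are independent, any $\lfloor(d-1)/2\rfloor$ columns of $K$ are independent. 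Feeding the resulting $[\,\approx rn/(r+1),\ \ge\cdot-h,\ \ge\lfloor(d-1)/2\rfloor+1\,]$ code into the Hamming bound gives the stated exponents, and the $d=5$ case ($h\le 3$, residual distance $3$) gives $n=O(q^2)$. No subspace packing, no ``forbidden configuration,'' and no Tur\'an/Moore estimate are used.

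This also means your diagnosis of the $d\bmod 4$ split is off. It does not come from tracking how ``excess weight'' can spread across groups; it comes from two stacked ``$\bmod 2$'' effects in the Hamming-bound computation. Writing $d=4d_1+t$ with $t\in\{1,2,3,4\}$, the column reduction halves the distance to $\lfloor(d-1)/2\rfloor+1$, which is $2d_1+1$ for $t\in\{1,2\}$ and $2d_1+2$ for $t\in\{3,4\}$; the Hamming-ball radius then divides by two again, giving radius $d_1$ in both cases. This is precisely where the pairs $\{1,2\}$ versus $\{3,4\}$, and the corrections $\pm 4/(d\mp c)$, arise. Beyond that, your Steps~2--3 are stated at a level where they cannot be checked: you assert without construction that the distance condition encodes a ``minimal forbidden pattern'' of size $\Theta(d)$ in the relative position of the $W_i$ and that there is a matching Tur\'an-type estimate of ``exactly the right exponent.'' This is a genuinely different (and much heavier) route than \cite{GXY19}, and as written it has a gap exactly at its center.
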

The above result was further improved in \cite{XY18}. More precisely speaking, the authors of \cite{XY18} proved the following.
\begin{prop}[see \cite{XY18}]\label{prop:1.2}
Let ${\mC}$ be a $q$-ary   Singleton-optimal $[n,k,d;r]$-locally repairable code with $d \ge 5$ and $n \ge \Omega(d r^2)$. If  $(r+1)|n$ and recovery sets are disjoint, then
\begin{equation}\label{eq:2}
n\leq\left\{ \begin{array}{ll}
O(q^3) &\mbox{if $d\pmod {r+1}> 5$ or $d\pmod {r+1}<2$},\\
O(q^2)&\mbox{if $2\leq d\pmod {r+1}\leq 5$.}\end{array}\right.
\end{equation}
\end{prop}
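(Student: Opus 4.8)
The plan is to exploit the rigidity that Singleton-optimality forces on the block structure, and then to reduce the length bound to a point-counting problem in a projective space whose dimension is governed by the residue $\delta := d \bmod (r+1)$.

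The first step is structural. Since $(r+1)\mid n$ and the recovery sets are disjoint, the coordinate set $[n]$ partitions into $m=n/(r+1)$ local groups $\Gamma_1,\dots,\Gamma_m$ of size $r+1$, on each of which $\mathcal{C}$ restricts to a single-parity-check $[r+1,r,2]$-code; equivalently $\mathcal{C}^{\perp}$ contains a full-support weight-$(r+1)$ codeword $\mathbf{h}_i$ on $\Gamma_i$, and the $\mathbf{h}_i$ are linearly independent. Using $d=n-k-\lceil k/r\rceil+2$, a shortening argument is available: shortening $\mathcal{C}$ at one local group lowers the dimension by exactly $r$, keeps the minimum distance at $d$, and preserves Singleton-optimality as well as the divisibility and disjointness; iterating $\lceil k/r\rceil-1$ times leaves a ``core'' code that is an $[n_0,k_0,n_0-k_0+1]$ MDS code with $1\le k_0\le r$, again carrying local groups. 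Since an MDS arc has at most $d-2$ points on a hyperplane, $n_0\le d+r-1$, so the core is short; the real content is how the $m$ local groups of the \emph{original} code must fit together, which shortening alone does not settle.

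For that, the second step studies minimum-weight codewords and the mutual position of the local parities. A weight-$d$ support decomposes into full local groups together with a partial one of size $\delta$, and Singleton-optimality forces many coordinate subsets to be information sets; normalizing each block by its parity relation turns the resulting independence constraints into the statement that a configuration of points $P_1,\dots,P_m$ — one bounded-size list per local group — lies in a projective space $\mathrm{PG}(s,q)$ subject to constraints of the type ``no too many $P_i$ on a curve of bounded degree''. A direct count of the forced independencies should show that the ambient dimension stays $s=3$ exactly when $\delta\le 1$ or $\delta\ge 6$, whereas $2\le\delta\le 5$ forces $s=2$. A Bose/Segre-type arc bound, or the Weil bound for the relevant curve, then bounds the number of admissible points, and hence $n$, by $O(q^2)$ in the planar case $2\le\delta\le 5$ and by $O(q^3)$ otherwise, the hypothesis $n\ge\Omega(dr^2)$ ensuring there are enough local groups for the reductions to leave a nondegenerate configuration.

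The main obstacle is the second step: proving rigorously that Singleton-optimality really does confine $\{P_i\}$ to $\mathrm{PG}(s,q)$ with the claimed bounded-degree-curve restrictions, and pinning down the exact thresholds $\delta=1,2$ and $\delta=5,6$ at which $s$ jumps between $2$ and $3$ — this dichotomy is the whole content of the improvement over Proposition~\ref{prop:1.1}. Once the geometry is set up correctly the counting is routine, but the bookkeeping around the boundary residues and the bounded-size point lists is delicate.
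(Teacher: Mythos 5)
The paper does not actually prove Proposition~\ref{prop:1.2}; it is quoted verbatim from \cite{XY18} and used as a black box (see Section~4, where it is invoked together with the propagation rule of Lemma~\ref{lem:4.2}). So there is no ``paper proof'' to compare you against, but your proposal can still be judged on its own terms, and it has a genuine gap.

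Your first step is correct and is essentially the content of Lemma~\ref{lem:3.6}: shortening one full local group drops the dimension by exactly $r$, preserves the distance $d$ exactly, keeps Singleton-optimality, and iterating $\lceil k/r\rceil-1$ times yields a short MDS core. But, as you yourself concede, this tells you nothing about $n$. The entire burden of the bound rests on your second step, and there you assert, without any argument, that the local-group ``points'' can be packed into $\mathrm{PG}(s,q)$ with $s=2$ when $2\le d\bmod(r+1)\le 5$ and $s=3$ otherwise. Nothing in the sketch explains \emph{why} Singleton-optimality should force such a projective embedding, what the ``bounded-degree-curve'' constraints are, or why the ambient dimension should jump precisely at the residues $2$ and $5$. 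Your closing paragraph acknowledges this, so you have a proof \emph{plan} whose load-bearing lemma is missing. Moreover, the projective-geometry route is not the mechanism actually used in this line of work. In \cite{XY18} and in the proof technique this paper reuses (see the proof of Theorem~\ref{thm:3.9}, in particular equations~\eqref{eq:8}--\eqref{gn} and~\eqref{a-l}), the argument is entirely a parity-check-matrix manipulation: one writes $H=\binom{H_1}{H_2}$ with $H_1$ the $\ell=n/(r+1)$ local parities, normalizes each block, extracts a residual code $\mathcal{C}_K$ whose redundancy equals $h=n-k-\ell=d-2-\lfloor(d-2)/(r+1)\rfloor$ and whose distance is about $\lfloor(d-1)/2\rfloor+1$, and then applies the Hamming or Griesmer/MDS-type bound to $\mathcal{C}_K$. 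The dichotomy between $O(q^2)$ and $O(q^3)$ comes from whether $h$ is small enough relative to $\lfloor(d-1)/4\rfloor$ that the Hamming exponent stays $\le 2$, and $h$ is controlled exactly by the residue $d\bmod(r+1)$ (which is where the magic constants $2$ and $5$ enter), not by the dimension of some arc configuration. To salvage your approach you would need to prove the projective-dimension claim from scratch; the standard route would instead be to compute $h$ as a function of $d\bmod(r+1)$ under the $(r+1)\mid n$ and disjoint-recovery-set hypotheses and feed it into the Hamming-bound machinery already set up in Section~3.
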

Recently, Chen et al.\cite{CFXHF} derived some upper bounds for small distance and locality such as $d=5,6$ and $r=1,2,3$. Let us quote their result below for an easy reference.
 \begin{prop}[see \cite{CFXHF}]\label{prop:1.3}
Let ${\mC}$ be a $q$-ary   Singleton-optimal $[n,k,d;r]$-locally repairable code with the constraint that $(r+1)|n$ and recovery sets are disjoint, then
\begin{itemize}
\item[{\rm (i)}] for $d=5$, one has $n=O(q^2)$. Furthermore, $n=O(q)$ if $r=1,2$;
\item[{\rm (ii)}] for $d=6$, one has $n=O(q^3)$. Furthermore, $n=O(q^{1.5})$ if $r=2$ and $n=O(q^{2})$ if $r=3$.
\end{itemize}
\end{prop}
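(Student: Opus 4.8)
\emph{Proof idea.} The plan is to reduce ${\mC}$ to the combinatorial skeleton imposed by the two hypotheses, translate $d({\mC})\ge d$ into an ``arc-type'' configuration in a projective space of small dimension, and then read off the length bounds from extremal results about such configurations. Concretely, I would first use that $(r+1)\mid n$ together with the disjointness of the recovery sets forces the coordinates of ${\mC}$ to split into $m=n/(r+1)$ repair groups $G_1,\dots,G_m$, each of size $r+1$, with the restriction of ${\mC}$ to $G_i$ an $[r+1,r,2]$ code. A parity-check matrix of ${\mC}$ can then be written $\binom{H_{\mathrm{loc}}}{H_{\mathrm{glob}}}$, where the $m$ rows of $H_{\mathrm{loc}}$ are the local parities (their supports are the $G_i$, hence pairwise disjoint) and $H_{\mathrm{glob}}$ has $\delta:=(n-k)-m$ ``global'' rows. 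Writing \eqref{eq:1} as an equality and putting $\eps:=r\lceil k/r\rceil-k\in\{0,1,\dots,r-1\}$, one gets $(r+1)\delta=(r+1)(n-k)-n=r(n-k)-k=r(d-2)+\eps$; since $0\le\eps<r$ this determines $\eps$ and $\delta$ and says when the parameters are feasible. For $d=5$ one finds $\delta=2$ if $r=2$ and $\delta=3$ if $r\ge4$, with $r\in\{1,3\}$ infeasible; for $d=6$ one finds $\delta=3$ if $r\in\{2,3\}$ and $\delta=4$ if $r\ge5$, with $r=4$ infeasible. Rescaling the columns within each group, we may finally assume every local parity is the all-ones vector on its group.

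Next I would set up the dictionary between $d({\mC})\ge d$ and the $m$ blocks $M_i\in\F_q^{\delta\times(r+1)}$ of $H_{\mathrm{glob}}$ indexed by $G_i$. Since any nonzero codeword meets every group it touches in at least two coordinates, a codeword of weight $\le d-1\le5$ meets at most two groups; examining the ``one group'' and ``two groups'' cases shows that $d({\mC})\ge d$ is equivalent to: (i) for each $i$ the length-$(r+1)$ code $\ker M_i\cap\{x\in\F_q^{r+1}:\mathbf 1\cdot x=0\}$ has minimum distance $\ge d$, which for $r+1\le d-1$ forces $\binom{M_i}{\mathbf 1}$ to have full column rank, hence $\delta\ge r$; (ii) the difference vectors $M_i(e_a-e_b)$, with $a\ne b$ in a common group, are pairwise $\F_q$-independent across \emph{all} $m$ groups --- from forbidding weight-$(2,2)$ codewords on pairs of groups, with (i) excluding dependences inside a single group --- so the $m\binom{r+1}{2}$ points they determine in $\mathrm{PG}(\delta-1,q)$ are pairwise distinct; and (iii), relevant when $d=6$: no difference vector of $G_i$ lies in the image $M_{i'}\{x\in\F_q^{r+1}:\mathbf 1\cdot x=0\}$ for $i'\ne i$ (from forbidding weight-$(2,3)$ codewords), a condition which for $r=2$ says that a line of $\mathrm{PG}(2,q)$ avoids a triple of points.

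For every case except one, the bound now follows by packing in $\mathrm{PG}(\delta-1,q)$, which has $(q^{\delta}-1)/(q-1)=O(q^{\delta-1})$ points: by (ii), $m\binom{r+1}{2}=O(q^{\delta-1})$, hence $n=(r+1)m=O(q^{\delta-1})$. For $d=5$: $\delta\le3$ gives $n=O(q^2)$; for $r=2$, $\delta=2$, so $3m\le q+1$ and $n=3m=O(q)$; for $r=1$ there is no such code, so $n=O(q)$ holds vacuously. For $d=6$: $\delta\le4$ gives $n=O(q^3)$; for $r=3$, $\delta=3$, so (ii) alone gives $6m\le q^2+q+1$ and $n=4m=O(q^2)$.

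The remaining case, $d=6$ with $r=2$, is where I expect the real difficulty. Here $\delta=3$, every group $G_i$ contributes a triple of collinear points $\{A_i,B_i,A_i+B_i\}$ on a line $\ell_i:=\mathbb{P}\bigl(M_i\{x\in\F_q^{3}:\mathbf 1\cdot x=0\}\bigr)$ of $\mathrm{PG}(2,q)$ (well defined by (i)), and (ii)--(iii) say exactly that for $i'\ne i$ the line $\ell_{i'}$ avoids the triple attached to $G_i$; equivalently, the $3m$ special points are covered by the $m$ lines, exactly one line per triple. Crude point/line counting recovers only $n=O(q^2)$, whereas the goal is $n=O(q^{1.5})$, so the crux is the extremal claim that a family of $m$ lines in $\mathrm{PG}(2,q)$, each carrying a distinguished triple of points and meeting every other member away from all triples, has $m=O(q^{1.5})$. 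I would try to obtain this from a K\H{o}v\'ari--S\'os--Tur\'an--type double count on the point--line incidences of the configuration, or by matching it to a subconfiguration of a known extremal object of size $\Theta(q^{3/2})$ in $\mathrm{PG}(2,q)$; pinning this down --- and with it the exponent $3/2$ --- is the main obstacle. A further, bookkeeping, difficulty throughout is to track the ceiling $\lceil k/r\rceil$ and the divisibility $(r+1)\mid n$ carefully, since they move $\delta$ by at most one and sometimes empty a parameter set, and must not be allowed to inflate the exponents above.
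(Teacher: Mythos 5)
This proposition is quoted from \cite{CFXHF} and is not proved in the present paper, so there is no in-paper proof to compare against; I assess your proposal on its own.

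Your reduction is sound: with $(r+1)\mid n$ and disjoint recovery sets, splitting the parity checks into $m=n/(r+1)$ local rows and $\delta=(n-k)-m$ global rows, and using Singleton-optimality to get $(r+1)\delta=r(d-2)+\eps$ with $\eps=r\lceil k/r\rceil-k\in\{0,\dots,r-1\}$, correctly pins down $\delta$ (and the infeasible $r$). Conditions (i)--(iii) are exactly equivalent to $d(\mC)\ge d$ (the case analysis of which groups a weight-$(\le d-1)$ codeword can touch is complete for $d\le 6$), and the pairwise distinctness of the $m\binom{r+1}{2}$ difference points in $\mathrm{PG}(\delta-1,q)$ gives $n=O(q^{\delta-1})$. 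This yields every claim in the proposition except the $n=O(q^{1.5})$ bound for $(d,r)=(6,2)$, as you yourself flag.

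That last case is a genuine gap in your writeup: you name the right tool (a second-moment/KST double count on the incidence structure) but do not carry it out, and without it part (ii) for $r=2$ is unproved. For what it is worth, the double count you propose does close the gap. In the $(d,r)=(6,2)$ configuration every pair $\ell_i,\ell_{i'}$ meets in a unique point which, by your condition (iii), is never one of the $3m$ distinguished points. Letting $t_P$ be the number of lines through a non-distinguished point $P$, one has $\sum_P t_P=m(q-2)$ and $\sum_P\binom{t_P}{2}=\binom{m}{2}$, hence $\sum_P t_P^2=m(q+m-3)$. Cauchy--Schwarz over the $N=q^2+q+1-3m$ non-distinguished points gives $m^2(q-2)^2\le N\cdot m(q+m-3)$, which rearranges to $3m^2-2m(q+3)\le(q^2+q+1)(q-3)$, so $m=O(q^{3/2})$ and $n=3m=O(q^{3/2})$. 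You should also state explicitly why within-group dependences are excluded when $r+1\ge d$ (for $d=5$, $r\ge4$ your ``full column rank'' argument does not apply, but any in-group dependence among the $M_i(e_a-e_b)$ would produce a codeword of weight $\le 4<d$ supported in $G_i$, which is what your condition (i) forbids); as written this is glossed over.
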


Apart from the above bounds, there are various constructions of Singleton-optimal locally repairable codes in the literature. A class of codes constructed earlier and known as pyramid codes \cite{HCL} are shown to be codes that
are Singleton-optimal.  In \cite{SRKV13},  Silberstein et al. proposed a two-level construction based on the  Gabidulin codes combined with a single parity-check $(r+1,r)$ code of length $r+1$ and $r$ information symbols. Another construction \cite{TPD16} used two layers of MDS codes, a Reed-Solomon code and a special $(r+1,r)$ MDS code. A common shortcoming of these constructions relates to the size of the code alphabet which in all the papers is an exponential function of the code length, complicating the implementation. There  was an earlier construction of optimal locally repairable codes given in \cite{PKLK12} with  alphabet  size comparable to code length. However, the construction in \cite{PKLK12} only produces  a specific value of the length $n$, i.e., $n=\left\lceil \frac kr\right\rceil(r+1)$. Thus, the rate of the code is very close to $1$.

  The first breakthrough construction was given  in \cite{TB14}. This construction naturally generalizes Reed-Solomon construction which relies on the alphabet of cardinality comparable to the code length $n$. The idea behind the construction is very nice. The only shortcoming of this construction is restriction on locality $r$. Namely,  $r+1$ must be a divisor of either $q-1$ or $q$, or $r+1$ is equal to a product of a divisor of $q-1$ and a divisor of $q$ for certain $q$, where $q$ is the code alphabet. There are also some existence results given in \cite{PKLK12} and \cite{TB14} with less restriction on locality $r$. But both results require large alphabet which is an exponential function of the code length.
This construction was extended via automorphism group of rational function fields by Jin, Ma and Xing \cite{JMX19} and it turns out that there are more flexibility on locality and the code length can be $q+1$.

Recently, there are also some constructions of Singleton-optimal locally repairable codes. For instance, the construction in \cite{Jin19} provides Singleton-optimal locally repairable codes with distance $d=5$ and $6$. By using elliptic curves, some Singleton-optimal locally repairable codes with distance proportional to length are constructed in \cite{LMX20}. For $(d,r)=(6,2)$, three Singleton-optimal locally repairable codes with length $\Omega(q)$ are constructed in \cite{CFXHF}. In addition, a class of Singleton-optimal locally repairable codes with a large range of distance and locality are presented in \cite{GXY19}. Furthermore, the paper \cite{GMS21} provides three families of Singleton-optimal \LRCs based on matrix-product codes. The reader may refer to these papers for the detail.

 \subsection{Our results and comparison}
 Our paper studies maximum lengths of $q$-ary Singleton-optimal $[n,k,d;r]$-locally repairable codes for various parameter regimes with or without the constraint that  $(r+1)|n$ and recovery sets are disjoint.

 Our main results can be divided into two parts. The first part gives upper bounds on lengths of Singleton-optimal $[n,k,d;r]$-locally repairable codes without the constraint that  $(r+1)|n$ and recovery sets are disjoint, while the result of the second part requires this constraint.

 For two integers $m,\ell$ with $m\ge 2$, denote by $[\ell\pmod{m}]$ the principle remainder $s$ of $\ell$ divided by $m$ with $0\le s\le m-1$. Denote by $\{\ell\pmod{m}\}$ the  remainder $s$ of $\ell$ divided by $m$ with $1\le s\le m$.

 \begin{theorem}\label{thm:1.4}  If there is a $q$-ary Singleton-optimal $[n,k,d;r]$-locally repairable code, then
 \begin{itemize}
 \item[{\rm (i)}] (see Theorem \ref{thm:3.5})
 $n=O(q)$ if $d$ is proportional to $n$;
\item[{\rm (ii)}] (see Corollary \ref{cor:3.8})
$n\le q+O(1)$ if the Main MDS conjecture holds and $k$ is a constant and $k\not\equiv1\pmod{r}$ (note that in this case, $d$ must be proportional to $n$).
\item[{\rm (iii)}] (see Theorem \ref{thm:3.13}) If $r+2-\{k\pmod{r}\}<d<r+2$, we have $n< \frac{(d+r)(r-1+(d-2)r)}{r-d+2}$. In particular, $n=O(1)$ if both $r$ and $d$ are constant.
\item[{\rm (iv)}] (see Theorem \ref{thm:3.9} and Remark \ref{rmk:5}) If $dr=o(n)$,
 we have{\small
\[n\le \left\{\begin{array}{ll}
(1+o(1))\frac{r+1}{r}\times\frac{d-1}{4(q-1)}\times \min\left\{q^{4r(d-1-\Ge)/((d-t)(r+1))},q^{4(d-2)/(d-t)}\right\} &\mbox{if $d\equiv1, 2\pmod{4}$}\\
(1+o(1))\frac{r+1}{r}\times\frac{d-1}{4(q-1)}\times \min\left\{q^{4r(d-2-\Ge)/((d-t)(r+1))},q^{4(d-3)/(d-t)}\right\} &\mbox{if $d\equiv3, 4\pmod{4}$},
\end{array}
\right.\]}
where  $t=\{ d\pmod{4}\}$ and $\epsilon=\frac{\{k\pmod{r}\}}{r}$.
\item[{\rm (v)}] (see Theorems \ref{thm:3.10}-\ref{thm:3.12})  for $d=5,6,7$, we have\end{itemize}
{\footnotesize
\begin{center}
{\rm Table I\\  Upper bounds  for $d=5,6$ and $7$ without the constraint \medskip
\setlength{\tabcolsep}{0.7mm}
\begin{tabular}{|c|c|c|c|c|clclc|}\hline\hline
\multicolumn{8}{|c|}{$d=5$}\\ \hline
\multirow{2}{*}{$r$} &{\multirow{2}{*}{$1$}} & \multicolumn{2}{c|}{\multirow{2}{*}{$2,3$}} & \multicolumn{2}{c|}{$\ge 4$ and} & \multicolumn{2}{c|}{$\ge 4$ and}\\
 &&\multicolumn{2}{c|}{} & \multicolumn{2}{c|}{$k\equiv0,-1,-2\pmod{r}$} & \multicolumn{2}{c|}{$r=o(n)$}\\ \hline
 $n$&{$O(1)$}&\multicolumn{2}{c|}{$O(q)$}& \multicolumn{2}{c|}{$O(r)$}& \multicolumn{2}{c|}{$O(q^2)$}\\ \hline\hline
\multicolumn{8}{|c|}{$d=6$}\\ \hline
\multirow{2}{*}{$r$} & \multirow{2}{*}{$1$} &{{$2$ and}} & {{$2$ and}} & \multirow{2}{*}{$3,4$} & \multicolumn{2}{c|}{$\ge 5$ and}& \multicolumn{1}{c|}{$\ge 5$ and}\\
& & $k\equiv 0\pmod{2}$ & $k\equiv 1\pmod2$ & &  \multicolumn{2}{c|}{$k\equiv0,-1,-2,-3\pmod{r}$} & \multicolumn{1}{c|}{$r=o(n)$}\\ \hline
$n$&$O(q)$&$O(q)$&$O(q^2)$&$O(q^2)$&\multicolumn{2}{c|}{$O(r)$}&\multicolumn{1}{c|}{$O\left(q^{\frac{4r-2}{r+1}}\right)$}\\ \hline\hline
\multicolumn{8}{|c|}{$d= 7$}\\ \hline
\multirow{2}{*}{$r$} & \multirow{2}{*}{$1$} & \multirow{2}{*}{$2$} & {$3$}  &{$3$} & \multicolumn{1}{c|}{\multirow{2}{*}{$4,5$}}  &\multicolumn{1}{c|}{$\ge 6$ and} &\multicolumn{1}{c|}{$\ge 6$ and} \\
& & & $k\equiv 0\pmod{3}$ & \multicolumn{1}{c|}{$k\not\equiv 0\pmod{3}$} &\multicolumn{1}{c|}{}  & \multicolumn{1}{c|}{$k\equiv -1,-2,-3,-4\pmod{r}$}& \multicolumn{1}{c|}{$r=o(n)$} \\ \hline
$n$&$O(1)$&$O(q)$&$O(q)$&$O(q^2)$&\multicolumn{1}{c|}{$O(q^2)$}&\multicolumn{1}{c|}{$O(r)$} &\multicolumn{1}{c|}{$O\left(q^{\frac{4r-2}{r+1}}\right)$}\\ \hline\hline
\end{tabular}
}
\end{center}
}
 \end{theorem}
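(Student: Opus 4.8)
The plan is first to observe that Theorem~\ref{thm:1.4} is an umbrella statement: its five parts restate Theorem~\ref{thm:3.5}, Corollary~\ref{cor:3.8}, and Theorems~\ref{thm:3.9}, \ref{thm:3.10}--\ref{thm:3.13} together with Remark~\ref{rmk:5}, so what one really wants is one framework for Singleton-optimal codes which can then be specialized in each parameter regime. Starting from a Singleton-optimal $[n,k,d;r]$ code $\mathcal{C}$, the first step is to record the consequences of Singleton-optimality at the level of a parity-check matrix $H$: its $n-k$ rows can be arranged into roughly $\lceil k/r\rceil$ ``local'' parities of weight $\le r+1$ supported on repair sets together with $d-2$ remaining ``global'' parities, each repair set carrying an MDS-type single-parity relation, and every $d-1$ columns of $H$ being linearly independent. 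Since I do \emph{not} assume $(r+1)\mid n$ or that the repair sets are disjoint, the one genuinely new preliminary move is to pass to a maximal family of pairwise-disjoint repair sets and to control how many coordinates fall outside it; this is where the general definition of locality from Section~2 is used. Viewing the columns of $H$ (grouped by repair set) as a multiset of points in a projective space then turns the distance requirement into a generalized-arc-type independence condition and the locality into a prescribed collection of short linear dependencies.

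For the large-distance parts (i) and (ii) the plan is to shorten $\mathcal{C}$ along repair sets. Fully deleting $\lceil k/r\rceil-1$ pairwise-disjoint repair sets --- thereby consuming the corresponding local parities --- produces, by the structural computation above, an MDS code of dimension $K=\{k\pmod{r}\}$ and codimension $d-1$, still of minimum distance $d$, hence of length $L=d+\{k\pmod{r}\}-1$. When $d\ge 3$ and $K\ge 2$, the classical length bound for MDS codes whose dimension and codimension are both at least $2$ forces $d\le L\le q+O(1)$, so that $d$ proportional to $n$ immediately yields $n=O(q)$ (part (i)), while $k=O(1)$ together with the Main MDS Conjecture yields $n\le q+O(1)$ (part (ii)). The hypothesis $k\not\equiv 1\pmod{r}$ in (ii) is exactly what forces $K=\{k\pmod{r}\}\ge 2$; in (i) the residual case $k\equiv 1\pmod{r}$ (where a slightly less aggressive shortening yields a near-MDS code of dimension $O(r)$ instead, to be bounded by the length bound for almost-MDS codes) and the case $r=1$ (treated directly through the ``contracted'' code) are handled separately, but always reduce to bounding the length of an MDS or near-MDS configuration. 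Part (iii) is of a different flavour: in the narrow window $r+2-\{k\pmod{r}\}<d<r+2$ the Singleton bound is so tight that a direct count of the repair sets against the condition that every $d-1$ columns of $H$ be independent yields $n<\frac{(d+r)(r-1+(d-2)r)}{r-d+2}$.

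The core of the theorem, and the step I expect to be the main obstacle, is part (iv), the regime $dr=o(n)$. Here the plan is to refine the double-counting arguments of \cite{GXY19,XY18}: attach a point of a low-dimensional projective space to each repair set and estimate, in two different ways, the number of dependent configurations of size $\Theta(d)$ among these points --- a ``projective'' count producing the exponent $\frac{4r(d-1-\varepsilon)}{(d-t)(r+1)}$ with $\varepsilon=\frac{\{k\pmod{r}\}}{r}$, and a ``combinatorial'' count producing $\frac{4(d-2)}{d-t}$ --- and then retain the smaller. The ``$d-1$'' and ``$d-2$'' are sharpened to ``$d-2$'' and ``$d-3$'' when $d\equiv 3,4\pmod{4}$ because of a parity obstruction between intersecting families of even and of odd size, which is the source of the case split on $t=\{d\pmod{4}\}$. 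The two features that go beyond \cite{GXY19,XY18} are (a) carrying the correction $\{k\pmod{r}\}$ through the count, which is precisely what makes $\varepsilon$ appear, and (b) running the count \emph{without} the divisibility/disjointness hypothesis, which forces the reduction of the first paragraph; the prefactor $(1+o(1))\frac{r+1}{r}\times\frac{d-1}{4(q-1)}$ is pure bookkeeping from the same estimate.

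Finally, for part (v) the plan is to assemble Table~I from the earlier parts: the ``$r=o(n)$'' columns come from part (iv) with $t=\{d\pmod{4}\}\in\{1,2,3\}$ for $d=5,6,7$; the ``$O(r)$'' columns are exactly part (iii); and the remaining small-$r$ entries ($r\le 3$) come from the dedicated analyses of Theorems~\ref{thm:3.10}--\ref{thm:3.12}. In those, for $r=1$ the ``contracted'' code obtained by identifying parallel columns is forced to be an MDS code of codimension about $d/2$, which gives $O(q)$ when $d$ is even and, via a more rigid analysis, $O(1)$ when $d$ is odd; for $r=2,3$ the contracted code is a generalized arc in a projective space of small dimension, and tracking whether it is (near-)MDS, which is governed by the value of $k$ modulo $r$, is what separates the $O(q)$ cells from the $O(q^2)$ ones. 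Apart from part (iv), where genuinely new ideas are needed to balance the two counts while dispensing with the $(r+1)\mid n$ assumption, every step above is either a clean reduction to the length bound for MDS/arc-type configurations or a direct specialization of one.
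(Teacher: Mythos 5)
Your high-level framing is correct -- Theorem~\ref{thm:1.4} is an umbrella over Theorems~\ref{thm:3.5}, \ref{thm:3.9}, \ref{thm:3.10}--\ref{thm:3.13} and Corollary~\ref{cor:3.8}, all proved by manipulating a parity-check matrix $H$ built from short dual codewords (the boxed construction before Lemma~\ref{lemma:3.1}). Your treatment of part~(ii) essentially matches the paper: shorten by $|I|=\lceil k/r\rceil-1$ repair sets, observe that $k\not\equiv1\pmod r$ forces the resulting MDS code $\mC_I$ to be nontrivial, then invoke the Main MDS Conjecture. Part~(iii) is also in the right spirit (block-length averaging against the ``any $d-1$ columns independent'' condition), though the paper's argument hinges specifically on exhibiting a single over-long block $M$ of $L_1$ that would need $d-1$ independent columns despite having at most $d-2$ nonzero rows.

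There are, however, two places where your route diverges or leaves a gap. First, for part~(i) you shorten one step further than the paper (by $\lceil k/r\rceil-1$ repair sets) to land on an MDS code of \emph{small} dimension $K=\{k\pmod r\}$, and then invoke an ``unconditional length bound for MDS codes.'' But the only clean unconditional bound of that form scales like $q+K$, and $K$ can be as large as $r$, which for part~(i) is not a priori $O(q)$; this does not obviously close. The paper instead shortens by $|I|=\lceil k/r\rceil-2$ (Lemma~\ref{lem:3.3}), getting a nontrivial $[n_I,n_I-d,\ge d]$ code of dimension $\ge r\ge 2$, and then applies the \emph{Griesmer bound} (Lemma~\ref{lem:3.4}) to conclude $d\le 2q$ directly -- no MDS hypothesis and no dependence on $r$. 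That is the clean mechanism for~(i), and your plan would need to replace the MDS-length-bound invocation by some such Griesmer-type step.

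Second, and more consequentially, your description of part~(iv) does not match the paper's argument and would not reproduce it. You propose two separate counts of ``dependent configurations'' in a projective space (a ``projective'' count and a ``combinatorial'' count) and attribute the $d-1$ vs.~$d-2$ (or $d-2$ vs.~$d-3$) distinction to a parity obstruction between intersecting even/odd families. In the paper's Theorem~\ref{thm:3.9} there is only a single bound: one normalizes the columns of $H$ indexed by $A$, subtracts off the first column of each block to obtain the ``contracted'' parity matrix $K=(K_1,\dots,K_{\ell_1})$ with $h$ rows, notes that any $\lfloor(d-1)/2\rfloor$ columns of $L_3$ (hence of $K$) remain independent, and applies the \emph{Hamming bound} to the code $\mC_K$ with this parity-check matrix. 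The two exponents you see in the statement come not from two counts but from two upper bounds on the \emph{same} quantity $h$: the bound $h\le\frac{r}{r+1}(d-1-\varepsilon)$ always holds and gives the $\varepsilon$-corrected exponent, while the bound $h\le d-2$ (Remark~\ref{rmk:5}(ii), valid when $d\le r+1$) recovers the exponent of~\cite{GXY19}; the $\min$ in the statement records whichever of these is smaller. The $d\equiv3,4\pmod4$ shift is not a parity obstruction: the paper simply applies the propagation rule (Lemma~\ref{lem:2.3}) to shorten $\mC_K$ by one coordinate when $\lfloor(d-1)/2\rfloor+1$ is even, then applies Hamming as before. Your plan would need to construct $\mC_K$ explicitly and run the Hamming bound to recover the stated exponents and the prefactor $\frac{r+1}{r}\cdot\frac{d-1}{4(q-1)}$; the abstract double-count you sketch does not produce them.
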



 Theorem \ref{thm:1.4} (i) says that if minimum distance $d$ is proportional to length $n$, then $n$ is bounded by $O(q)$. So far,  in case $d$ is proportional to length $n$ the largest known code length  is $q+2\sqrt{q}+O(1)$ due to the construction via elliptic curves (see \cite{LMX20}). Thus, it is natural to ask the following open problem.
 \begin{open} What is the maximum length of a $q$-ary Singleton-optimal $[n,k,d;r]$-locally repairable code when the minimum distance $d$ is proportional to length $n$? More precisely speaking, if $d=\Gl n$ for some $\Gl\in (0,1)$, do we have $n\ge \frac{2q}{\Gl}$ ?
\end{open}
 Note that in Theorem \ref{thm:1.4} (ii), the length $n$ is upper bounded by $q+O(1)$ if $k$ is a constant and $k\not\equiv1\pmod{r}$. On the other hand,
 by using elliptic curves, Li et al. \cite{LMX20} provided a construction of  $q$-ary Singleton-optimal $[n,k,d;r]$-locally repairable codes with $n=q+2\sqrt{q}+O(1)$, $d=\Omega(n)$ and $k\equiv1\pmod{r}$. Thus,  the following open problem arises.
\begin{open} What is the maximum length of a $q$-ary Singleton-optimal $[n,k,d;r]$-locally repairable code if $k$ is a constant and $k\equiv1\pmod{r}$? Can the length exceed $n=q+2\sqrt{q}+O(1)$ in this case?
\end{open}

Based on Theorem \ref{thm:1.4} (iii), we checked  all known constructions of $q$-ary Singleton-optimal $[n,k,d;r]$-locally repairable codes with $n\rightarrow\infty$ appeared in literature and found that these constructions give code parameters beyond our range, i.e., $d\not\in (r+2-\{k\pmod{r}\},r+2)$. For instance, in \cite{Jin19}, $[n,k,d;r]$-locally repairable codes of length $\Omega(q^2)$ with $(d=5,r\ge 4, k\equiv -3\pmod{r})$ and $(d=6, r\ge 5, k\equiv -4\pmod{r})$ are constructed. Thus, $d=5\not\in (5,r+2)=(r+2-\{k\pmod{r}\},r+2)$ and $d=6\not\in (6,r+2)=(r+2-\{k\pmod{r}\},r+2)$, respectively. Another example is that $[n,k,d;r]$-locally repairable codes of length $\Omega(q)$ with $(d=6,r=2)$ are constructed in \cite{CFXHF}. For this example, we have $d\not\in (r+2-\{k\pmod{r}\},r+2)$.

Note that the upper bound  given in \cite{GXY19} is more restrictive compared with Theorem~\ref{thm:1.4} (iv). Our bound is less restrictive. Furthermore,  our bound improves the upper bound  given in \cite{GXY19} as well. This is because $q^{4r(d-1-\Ge)/((d-t)(r+1))}<q^{4(d-2)/(d-t)}$ and
$q^{4r(d-2-\Ge)/((d-t)(r+1))}<q^{4(d-3)/(d-t)}$ if $d>r+2$.

Note that in \cite{XY18} and \cite{CFXHF} they give some upper bounds for $d=5,6$ and $r=1,2,3$ and $r\ge 4$. Our Table I in Theorem~\ref{thm:1.4} (v) outperforms the upper bounds given in \cite{XY18} and \cite{CFXHF} for the some cases. The detailed comparison is given in the table below.
{\footnotesize
\begin{center}
Table II\\ Comparison of upper bounds  for $d=5$ with locality $r$ \\ \smallskip
{\rm
\begin{tabular}{|c|c|c|cl}\hline\hline
\multicolumn{4}{|c|}{Comparison for $d=5$}\\ \hline
${\rm Locality}$ &  \cite{XY18} & \cite{CFXHF} & \multicolumn{1}{c|}{Theorem \ref{thm:1.4} (v)} \\ \hline
$r=1$ &$O(q^2)$ & $O(q)$ & \multicolumn{1}{c|}{\bf ${\bf{\textit O(1)}}$} \\ \hline
{$r=2$} &$O(q^2)$ & {$O(q)$} &\multicolumn{1}{c|}{$O(q)$} \\ \hline
$r=3$ & $O(q^3)$ &$O(q^2)$ &\multicolumn{1}{c|}{\bf ${\bf{\textit O(\textit q)}}$} \\ \hline
{$r\ge 4$ and} & \multirow{2}{*}{$O(q^3)$} &  \multirow{2}{*}{$O(q^2)$} & \multicolumn{1}{c|}{\multirow{2}{*}{\bf ${\bf{\textit O(\textit r)}}$}} \\
{$k\equiv0,-1,-2\pmod{r}$} & & & \multicolumn{1}{c|}{}\\ \hline
$r\ge 4$ and $r=o(n)$ & $O(q^3)$ & $O(q^2)$ & \multicolumn{1}{c|}{$O(q^2)$} \\ \hline\hline
\end{tabular}
}
\end{center}
}
\bigskip\bigskip\bigskip\bigskip
{\footnotesize
\begin{center}
Table III\\  Comparison of upper bounds  for $d=6$ with locality $r$ \\ \smallskip
{\rm
\begin{tabular}{|c|c|c|cl}\hline\hline
\multicolumn{4}{|c|}{Comparison for $d=6$}\\ \hline
${\rm Locality}$ & \cite{XY18} & \cite{CFXHF} &  \multicolumn{1}{c|}{Theorem \ref{thm:1.4} (v)} \\ \hline
$r=1$&$O(q^2)$ & $O(q)$ & \multicolumn{1}{c|} {\textbf{$O(q)$}} \\ \hline
$r=2$ and  &\multirow{2}{*}{$O(q^2)$} &  \multirow{2}{*}{$O(q^{1.5})$} &  \multicolumn{1}{c|}{ \multirow{2}{*}{\bf ${\bf{\textit O(\textit q)}}$}}\\
 $k\equiv 0\pmod{2}$ & &  & \multicolumn{1}{c|}{} \\  \hline
 $r=2$ and  &\multirow{2}{*}{$O(q^2)$} &  \multirow{2}{*}{\bf ${\bf{\textit O(\textit q^{1.5})}}$}&   \multicolumn{1}{c|}{\multirow{2}{*}{ $O(q^2)$}}\\
 $k\equiv 1\pmod{2}$ & &  & \multicolumn{1}{c|}{} \\  \hline
$r=3$ &$O(q^2)$& $O(q^2)$ &  \multicolumn{1}{c|}{$O(q^2)$} \\ \hline
$r= 4$ &$O(q^3)$&  $O(q^3)$ &  \multicolumn{1}{c|}{\bf ${\bf{\textit O(\textit q^2)}}$}\\ \hline
 $r\ge 5$ and   &\multirow{2}{*}{$O(q^3)$}&  \multirow{2}{*}{$O(q^{3})$} &   \multicolumn{1}{c|}{\multirow{2}{*}{\bf ${\bf{\textit O(\textit r)}}$}}\\
 $k\equiv 0,-1,-2,-3\pmod{r}$  & &  & \multicolumn{1}{c|}{} \\  \hline\hline
\end{tabular}
}
\end{center}
}

{It seems that Propositions \ref{prop:1.1} and \ref{prop:1.2} give better upper bound than Table I for the cases where $(d,r)=(6,\ge 5)$ and $(d,r)=(7,\ge 6)$. However, both  Propositions \ref{prop:1.1} and \ref{prop:1.2} require the condition that $(r+1)|n$ and recovery sets are disjoint. Thus, the bounds given by \ref{prop:1.1} and \ref{prop:1.2} are not compatible with the upper bound given in Table I.}

 {\begin{open} What is the maximum length of a $q$-ary Singleton-optimal $[n,k,d;r]$-locally repairable code for the case $(d,r)=(6,\ge 5)$ and $(d,r)=(7,\ge 6)$ if we do not assume the condition that $(r+1)|n$ and recovery sets are disjoint.
\end{open}}

Recall that both the papers \cite{Jin19} and \cite{GXY19} presented constructions of  $q$-ary Singleton-optimal $[n,k,d;r]$-locally repairable codes with $d=5$ and $6$. {For instance, in \cite{Jin19}, two classes of codes, namely  $q$-ary Singleton-optimal $[n,k,d;r]$-locally repairable codes with (i) $d=5$, $4\le r\le q-1,$ $(r+1)|n$ and $k\equiv-3\pmod{r}$; and (ii)  $d=6$, $r\ge 5,$ $(r+1)|n$ and $k\equiv-4\pmod{r}$, are constructed. Furthermore, both have length upper bounded by $O(q^2)$.} On the other hand, from  Table I, we know that a $q$-ary Singleton-optimal $[n,k,d;r]$-locally repairable codes with (i) $d=5$, $r\ge 4$ and $k\not\equiv-3\pmod{r}$; {and (ii)  $d=6$, $r\ge 5$ and $k\not\equiv-4\pmod{r}$ has length upper bounded by $O(r)$.}

We note that Theorem \ref{thm:1.4} does not assume that $(r+1)|n$ and recovery sets are disjoint. However, our second result requires this constraint.
By a simple propagation rule, we are able to derive upper bounds on  maximum length of a $q$-ary Singleton-optimal $[n,k,d;r]$-locally repairable codes for relatively large minimum distance and locality.

\begin{theorem}\label{thm:1.5} If there is a $q$-ary Singleton-optimal $[n,k,d;r]$-locally repairable code satisfying that $n\ge\Omega(dr^2)$, $(r+1)|n$ and  recovery sets are disjoint, then
\begin{itemize}
\item[{\rm (i)}] (see Theorem \ref{thm:4.5}) for $r=1,2,3,4$, we have upper bounds in the following table
\end{itemize}
\bigskip\bigskip\bigskip\bigskip\bigskip\bigskip
\begin{center}
{\rm Table IV\\
Upper bounds  for $r=1,2,3,4$\\ \smallskip
\setlength{\tabcolsep}{0.5mm}
\begin{tabular}
{|m{1cm}<{\centering}|m{2.7cm}<{\centering}|m{2.7cm}<{\centering}|m{2.7cm}<{\centering}|m{3.15cm}<{\centering}|m{2.7cm}<{\centering}|}
\hline\hline $r$ & $1$ & $1$ & $2$ & $2$ & $2$\\\hline $d$ & $d\equiv0\pmod{2}$ & $d\equiv1\pmod{2}$ & $d\not\equiv0\pmod{3}$ & {\shortstack{$d\equiv0\pmod{3}$\\ with $2|k$}} & $d\equiv0\pmod{3}$ \\\hline  $n$ & $O(q)$ & $O(1)$ & $O(q)$ &$O(q)$&$O(q^{1.5})$ \\\hline\hline $r$ & $3$ & $3$ & $3$ & $3$ & $3$ \\\hline $d$ & $d\equiv0\pmod{4}$ & {\shortstack{$d\equiv0\pmod{4}$\\ $k\not\equiv1\pmod{3}$}} & $d\equiv1\pmod{4}$  & $d\equiv2,3\pmod{4}$&{\shortstack{$d\equiv3\pmod{4}$\\ with $3|k$}}\\\hline $n$ & $O(q^2)$ & $O(q^2)$ &$O(q)$&$O(q^{2})$&$O(q)$\\\hline\hline $r$ & $4$ & $4$ & $4$ & $4$ & $4$ \\\hline $d$ & $d\equiv0\pmod{5}$ & $d\equiv1\pmod{5}$ & $d\equiv2\pmod{5}$  & $d\equiv3\pmod{5}$&$d\equiv4\pmod{5}$ \\\hline $n$ & $O(q^2)$ & $O(q^2)$ &$O(q^2)$&$O(q^{3})$&$O(q^3)$
\\ \hline\hline\end{tabular}
}\end{center}
\begin{itemize}
\item[{\rm (ii)}]  (see Theorem \ref{thm:4.7}) if $d\equiv1,2,3,4,5 \pmod {r+1}$, we have $n=O(q^{2})$.
\item[{\rm (iii)}]  (see Theorem \ref{thm:4.8}) if $ d\equiv4t-i \pmod{ r+1}$ with $t\geq 1$ and $i\in \{-1,0,1,2\}$, we have $n=O(q^{3-\frac{1}{t}})$.
\end{itemize}

 \end{theorem}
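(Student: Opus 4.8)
The engine behind all three parts is a short \emph{propagation lemma}: if $\mathcal{C}$ is a $q$-ary Singleton-optimal $[n,k,d;r]$-LRC with $(r+1)\mid n$, pairwise disjoint recovery sets, and $d\ge r+2$, then puncturing $\mathcal{C}$ on the $r+1$ coordinates of one recovery set $R$ yields a $q$-ary Singleton-optimal $[\,n-(r+1),\,k,\,d-(r+1);\,r\,]$-LRC that again has $(r+1)\mid(n-r-1)$ and pairwise disjoint recovery sets. The check is elementary: since $d>r+1=|R|$, no nonzero codeword is supported inside $R$, so the dimension stays $k$; the other $n/(r+1)-1$ recovery sets and their local parity checks are untouched and still certify locality $r$ with disjoint recovery sets covering every surviving coordinate; $(r+1)$ still divides the new length; and the punctured code has distance $\ge d-(r+1)$, while the Singleton-type bound~\eqref{eq:1} forces it to be $\le(n-r-1)-k-\lceil k/r\rceil+2=d-(r+1)$, so equality holds and the punctured code is again Singleton-optimal. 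Iterating, from a Singleton-optimal $[n,k,d;r]$-LRC satisfying the three constraints one obtains a Singleton-optimal $[\,n-(d-d^{*}),\,k,\,d^{*};\,r\,]$-LRC with the same constraints for every integer $d^{*}$ with $2\le d^{*}\le d$ and $d^{*}\equiv d\pmod{r+1}$ (each intermediate distance stays $\ge r+2$, so every puncturing is legal). Note also that $d\not\equiv1\pmod{r+1}$ for any such $\mathcal{C}$: writing $k=ar+b$ with $0\le b\le r-1$, Singleton-optimality together with $(r+1)\mid n$ forces $d\equiv2\pmod{r+1}$ when $b=0$ and $d\equiv1-b\pmod{r+1}$ when $1\le b\le r-1$, and neither equals $1$.

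Given the lemma, the length shrinks only by the constant $d-d^{*}$, so any upper bound $n^{*}=O(q^{c})$ for Singleton-optimal LRCs of small distance $d^{*}$ (with the same locality $r$) propagates to $n=O(q^{c})$ with the \emph{same} $q$-exponent; for each part one takes $d^{*}$ to be the representative of $d\bmod(r+1)$ for which Theorem~\ref{thm:1.4} and Propositions~\ref{prop:1.1}--\ref{prop:1.2} are sharpest. Part~(ii) is then immediate: Proposition~\ref{prop:1.2} already gives $n=O(q^{2})$ when $2\le d\bmod(r+1)\le5$, and the residue $d\equiv1\pmod{r+1}$ cannot occur by the remark above, so $n=O(q^{2})$ for all $d\equiv1,2,3,4,5\pmod{r+1}$. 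For part~(iii): given $d\equiv4t-i\pmod{r+1}$ with $t\ge1$ and $i\in\{-1,0,1,2\}$, reduce $d$ to a bounded representative $d^{*}$ of its residue class and apply Theorem~\ref{thm:1.4}(iv) (or Proposition~\ref{prop:1.1}) to the reduced code; absorbing the $(q-1)^{-1}$ prefactor, the resulting $q$-exponent equals $3-1/t$ (for instance, at $d^{*}=4t+1$ it is $4(d^{*}-2)/(d^{*}-1)-1=3-1/t$), so $n=O(q^{3-1/t})$. For part~(i) one runs the same reduction with $r\in\{1,2,3,4\}$ fixed, reducing $d$ to the least admissible $d^{*}\equiv d\pmod{r+1}$ and reading the entry off Theorem~\ref{thm:1.4}(iii)--(v); the extra ingredient is $r=1$, where a Singleton-optimal $[n,k,d;1]$-LRC with $2\mid n$ is, after contracting the repetition pairs, a $q$-ary MDS code of length $n/2$, so the (Main) MDS bound gives $n=O(q)$, and odd $d$ is vacuous.

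The genuinely new content is the propagation lemma, and although its proof is short, it is the point to get right: one must verify \emph{simultaneously} that puncturing a recovery set preserves the dimension (this is exactly where $d\ge r+2$ enters), the locality-$r$ property with pairwise disjoint recovery sets, divisibility of the new length by $r+1$, and---crucially---Singleton-optimality, the last of which hinges on the puncturing bound and the Singleton-type bound together pinning the new distance to \emph{exactly} $d-(r+1)$. The rest is bookkeeping: confirming each iterate stays in range so the puncturings are legal; checking that the side hypotheses of the small-distance theorems survive the reduction, notably $d^{*}r=o(n^{*})$, which holds because $n\ge\Omega(dr^{2})$ and $d-d^{*}$ is a constant; and, for part~(i), carrying out the finite case analysis over $r\in\{1,2,3,4\}$ and the residues of $d$ and of $k$ modulo $r$.
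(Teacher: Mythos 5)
Your propagation lemma (puncture a whole recovery set) is correct and coincides with the $a=0$ case of the paper's Lemma~\ref{lem:4.2}. But Lemma~\ref{lem:4.2} is genuinely more general: it first \emph{shortens} on $a$ coordinates of a recovery set and then punctures the remaining $r+1-a$, producing a Singleton-optimal $[n-(r+1),\,k-a,\,d-(r+1)+a;\,r]$-LRC whenever $\lceil k/r\rceil=\lceil (k-a)/r\rceil$. This ability to add back $a$ to the distance is what lets the paper \emph{shift the residue class of $d$ modulo $r+1$} (and hence modulo $4$). Your pure-puncturing lemma cannot do this, and this is where your argument breaks for parts (ii) and (iii).

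Concretely, two things fail. First, your claim that $d\not\equiv1\pmod{r+1}$ is incorrect. From Lemma~\ref{lem:2.4}, with $d-2=m(r+1)+u$ ($0\le u\le r$) one gets $k\equiv -u\pmod r$. The case $b:=[k\bmod r]=0$ is compatible with \emph{both} $u=0$ (giving $d\equiv2\pmod{r+1}$, which you listed) \emph{and} $u=r$ (giving $d\equiv1\pmod{r+1}$, which you omitted). The paper does not rule this case out; its Theorem~\ref{thm:4.7} handles $d\equiv1\pmod{r+1}$ by applying Lemma~\ref{lem:4.2} with $a=4$ to move the code to $d'\equiv5\pmod{r+1}$ before reducing. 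Without that shift your reduction lands on a distance $\equiv1\pmod{r+1}$, outside the range covered by Proposition~\ref{prop:1.2}.

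Second, in part (iii) your reduction produces a representative $d^{*}\equiv4t-i\pmod{r+1}$, and $d^{*}\pmod 4$ is whatever that forces. Only when $i=-1$ is the natural representative $4t+1\equiv1\pmod 4$, which is the residue for which Proposition~\ref{prop:1.1} gives the favourable exponent $3-4/(d^{*}-1)=3-1/t$. For $i\in\{0,1,2\}$ the representative $d^{*}=4t-i$ has $d^{*}\equiv0,3,2\pmod4$, and Proposition~\ref{prop:1.1} then yields only $O(q^{3})$ or $O(q^{3+4/(d^{*}-4)})$, strictly worse than $O(q^{3-1/t})$. The paper repairs this by first shortening with $a=i+1$ (so that the new distance is $\equiv4t+1\pmod{r+1}$), then reducing to $d^{*}=4t+1\equiv1\pmod4$; the hypothesis $\lceil k/r\rceil=\lceil(k-(i+1))/r\rceil$ is verified from $k\equiv r-a+2\pmod r$ and $t\le r/4$. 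Your proposal has no analogue of this realignment step, so it does not yield the claimed exponent when $i\ne-1$. (A smaller gap: for $r=4$ and $d\equiv3,4\pmod5$ in part (i), the reduced codes have $d^{*}=8,9$, which are not covered by Theorems~\ref{thm:3.10}--\ref{thm:3.12}; the paper imports those cases from \cite{XY18}, a step missing from your sketch.)
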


 \subsection{Organization of the paper}
The paper is organized as follows. In Section 2, we introduce some background on classical block codes as well as locally repairable codes. In Section 3, we derive a few upper bounds on  maximum length of  $q$-ary Singleton-optimal $[n,k,d;r]$-locally repairable codes without the constraint that $(r+1)|n$ and recovery sets are disjoint. In the last section, we produce some upper bounds assuming that $(r+1)|n$ and recovery sets are disjoint.

\section{Preliminaries}\label{sec:2}
\subsection{Some basic notations and results for classical block codes}
In this subsection, we introduce some definitions and results of classical block codes as follows.
Let $q$ be a power of an arbitrary prime and let $\F_q$ be the finite field with $q$ elements.

\begin{defn}
The support of a vector ${\bf u}=(u_1,\cdots, u_n)\in\F_q^n$ is defined by
$${\rm{supp}}({\bf u})=\{i\in[n]: u_i\neq 0\},$$
where $[n]=\{1,2,\cdots, n\}.$ The Hamming weight $\rm{wt}(\bf u)$ of ${\bf u}$ is defined to be the size of $\rm{supp}({\bf u}).$
\end{defn}

A linear code $\mC$ of length $n,$ dimension $k$ and minimum distance $d$ over $\F_q$ is called a $q$-ary $[n,k,d]$-linear code.
Let $A_q(n,d)$ denote the largest possible size $M$ for which there exists an $(n,M,d)$-code over $\F_q.$  Then, we recall two important bounds: Hamming  and Griesmer bounds (see in~\cite{LX04}).

\begin{lemma}[Hamming bound]\label{lem:2.1}
For an integer $q>1$ and integers $n, d$ such that $1\le d\le n,$ we have
\begin{eqnarray}\label{eq:3}
A_q(n,d)\le\frac{q^n}{\sum_{i=0}^{\lfloor(d-1)/{2}\rfloor}{n\choose i}(q-1)^i}.
\end{eqnarray}
\end{lemma}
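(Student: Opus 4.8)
The plan is to run the standard sphere-packing argument. First I would fix an $(n,M,d)$-code $\mC$ over $\F_q$ with $M=A_q(n,d)$, set $e=\lfloor (d-1)/2\rfloor$, and record the elementary inequality $2e\le d-1<d$. For a vector $\bx\in\F_q^n$ write $B(\bx,e)=\{\by\in\F_q^n: d_H(\bx,\by)\le e\}$ for the Hamming ball of radius $e$ centered at $\bx$, where $d_H(\cdot,\cdot)$ denotes Hamming distance.

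The first key step is to show that the balls $\{B(\bc,e):\bc\in\mC\}$ are pairwise disjoint. Indeed, suppose some $\by$ lies in $B(\bc_1,e)\cap B(\bc_2,e)$ for distinct codewords $\bc_1,\bc_2\in\mC$. The triangle inequality for Hamming distance then gives $d_H(\bc_1,\bc_2)\le d_H(\bc_1,\by)+d_H(\by,\bc_2)\le 2e<d$, contradicting the fact that $\mC$ has minimum distance $d$. The second key step is a direct count of the ball volume: for any $\bx\in\F_q^n$, the number of $\by$ at Hamming distance exactly $i$ from $\bx$ equals $\binom{n}{i}(q-1)^i$ (choose the $i$ coordinates in which they differ, then pick one of the $q-1$ nonzero values in each such coordinate), so
$$|B(\bx,e)|=\sum_{i=0}^{e}\binom{n}{i}(q-1)^i,$$
which is independent of $\bx$.

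Combining the two steps, since the $M$ balls $B(\bc,e)$ are pairwise disjoint and all contained in $\F_q^n$, summing their cardinalities yields $M\cdot\sum_{i=0}^{e}\binom{n}{i}(q-1)^i\le q^n$; dividing through by the (strictly positive) sum gives the stated bound. I do not expect any genuine obstacle in this argument: the only points needing a little care are the validity of the triangle inequality for $d_H$ and the ball-volume count, both of which are routine, and the verification that $e=\lfloor(d-1)/2\rfloor$ forces $2e<d$ so that the disjointness conclusion is not vacuous.
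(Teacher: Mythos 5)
Your proof is correct and is the standard sphere-packing argument for the Hamming bound; the paper itself does not prove this lemma but simply cites it from the textbook \cite{LX04}, so your argument supplies exactly the classical proof that the cited reference would give.
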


\begin{lemma}[Griesmer bound]\label{lem:2.2}
Let $\mC$ be an $[n,k,d]$-linear code over $\F_q,$ where $k\ge 1,$ then
\begin{eqnarray}\label{eq:4}
n\ge\sum_{i=0}^{k-1}\left\lceil{\frac{d}{q^i}}\right\rceil.
\end{eqnarray}
\end{lemma}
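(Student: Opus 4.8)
The plan is to prove \eqref{eq:4} by induction on the dimension $k$, via the classical residual-code construction. For $k=1$ the code $\mathcal{C}$ contains a nonzero codeword of weight $d$, so $n\ge d=\lceil d/q^{0}\rceil$, which is exactly the claim. For $k\ge 2$ I would fix a codeword $\mathbf{w}=(w_{1},\dots,w_{n})\in\mathcal{C}$ of minimum weight $d$; after permuting the coordinates we may assume $w_{i}\ne 0$ precisely for $1\le i\le d$. Let $\pi\colon\mathcal{C}\to\F_{q}^{\,n-d}$ be the projection onto the last $n-d$ coordinates and put $\mathrm{Res}(\mathcal{C})=\pi(\mathcal{C})$, an $[n-d,k',d']$ code.

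The first step is to check $k'=k-1$. Any codeword in $\ker\pi$ is supported inside $\{1,\dots,d\}$; if $\mathbf{c}'=(c'_{1},\dots,c'_{n})$ is such a codeword, then $\mathbf{c}'-(c'_{1}/w_{1})\mathbf{w}$ vanishes on coordinate $1$ and is supported in $\{2,\dots,d\}$, hence has weight at most $d-1$ and therefore equals $\mathbf{0}$; so $\ker\pi=\F_{q}\mathbf{w}$ and $k'=k-1$. The second step is the distance bound $d'\ge\lceil d/q\rceil$. Take a nonzero $\bar{\mathbf{c}}\in\mathrm{Res}(\mathcal{C})$, lift it to $\mathbf{c}=(c_{1},\dots,c_{n})\in\mathcal{C}$, and for each $\lambda\in\F_{q}$ consider $\mathbf{c}-\lambda\mathbf{w}\in\mathcal{C}$; this codeword is nonzero (its image under $\pi$ is $\bar{\mathbf{c}}$) and its restriction to the last $n-d$ coordinates has support $\mathrm{supp}(\bar{\mathbf{c}})$, so $\#\{i\le d: c_{i}\ne\lambda w_{i}\}+\mathrm{wt}(\bar{\mathbf{c}})=\mathrm{wt}(\mathbf{c}-\lambda\mathbf{w})\ge d$. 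Summing over all $\lambda\in\F_{q}$ and using that for each fixed $i\le d$ the equation $\lambda w_{i}=c_{i}$ has exactly one solution $\lambda$ (because $w_{i}\ne 0$) yields $d(q-1)+q\,\mathrm{wt}(\bar{\mathbf{c}})\ge qd$, hence $\mathrm{wt}(\bar{\mathbf{c}})\ge d/q$, and since the weight is an integer, $\mathrm{wt}(\bar{\mathbf{c}})\ge\lceil d/q\rceil$.

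Applying the induction hypothesis \eqref{eq:4} to $\mathrm{Res}(\mathcal{C})$, and using that the right side of \eqref{eq:4} is nondecreasing in the minimum distance, I get $n-d\ge\sum_{i=0}^{k-2}\lceil d'/q^{i}\rceil\ge\sum_{i=0}^{k-2}\big\lceil \lceil d/q\rceil/q^{i}\big\rceil$. The elementary identity $\big\lceil\lceil a/b\rceil/c\big\rceil=\lceil a/(bc)\rceil$ for positive integers rewrites the last sum as $\sum_{i=0}^{k-2}\lceil d/q^{i+1}\rceil=\sum_{i=1}^{k-1}\lceil d/q^{i}\rceil$, and adding $d=\lceil d/q^{0}\rceil$ to both sides finishes the induction. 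I expect the only delicate points to be the averaging argument giving $d'\ge\lceil d/q\rceil$ — the step that exploits the nonvanishing of all $w_{i}$ on $\mathrm{supp}(\mathbf{w})$ — together with the bookkeeping on nested ceilings; the dimension count and the remaining manipulations are routine linear algebra.
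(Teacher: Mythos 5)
Your proof is correct, and it is the classical residual-code argument for the Griesmer bound. The paper itself does not prove this lemma --- it simply cites it as a known fact from the textbook \cite{LX04} --- and the proof you give (take a minimum-weight codeword, puncture on its support, show the residual code has dimension $k-1$ and minimum distance at least $\lceil d/q\rceil$ via the averaging argument over $\lambda\in\F_q$, then induct using the nested-ceiling identity) is exactly the standard textbook proof. All the individual steps check out: the kernel computation correctly uses the minimality of $\mathrm{wt}(\mathbf w)$, the averaging step correctly uses that $w_i\neq0$ for each $i\leq d$, and the identity $\lceil\lceil a/b\rceil/c\rceil=\lceil a/(bc)\rceil$ together with the monotonicity of the Griesmer sum in $d$ closes the induction cleanly.
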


Now, we introduce some propagation rules~\cite{LX04} of classical block codes that will be used in this paper.
\begin{lemma}\label{lem:2.3}
Suppose there is an $[n,k,d]$-linear code over $\F_q.$ Then
\begin{itemize}
\item[($i$)] there exists an $[n-s, k, d-s]$-linear code over $\F_q$ for any $1\le s\le d-1;$
\item[($ii$)] there exists an $[n-s, k-s, d]$-linear code over $\F_q$ for any $1\le s\le k-1.$
\end{itemize}
\end{lemma}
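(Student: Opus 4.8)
The plan is to derive both parts from the two elementary single-coordinate operations on linear codes — \emph{puncturing} (delete a coordinate) and \emph{shortening} (pass to the subcode vanishing on a coordinate and then delete that coordinate) — and then to iterate. In each case the real content is not the existence of a shorter code but making the parameters come out on the nose, which forces a careful choice of the coordinate being removed.

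For part (i) I would induct on $s$, so it is enough to treat $s=1$ (here $d\ge 2$). Fix a codeword $\bc$ of the $[n,k,d]$-code $\mC$ with $\mathrm{wt}(\bc)=d$ and pick a coordinate $i\in\supp(\bc)$, which exists since $d\ge 2$. Let $\mC'$ be obtained by deleting coordinate $i$ from every codeword. The deletion map $\mC\to\mC'$ is $\F_q$-linear and onto, and its kernel is the set of codewords supported in $\{i\}$, hence of weight $\le 1$; since $d\ge 2$ this kernel is $\{\bo\}$, so $\dim\mC'=k$ and $\mC'$ has length $n-1$. Deleting one coordinate lowers any weight by at most $1$, so $d(\mC')\ge d-1$, while the image of $\bc$ has weight exactly $d-1$ because $i\in\supp(\bc)$, so $d(\mC')=d-1$. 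Iterating — legitimate as long as the running minimum distance stays $\ge 2$, i.e. for $s\le d-1$, and consistent because $n-s\ge n-(d-1)\ge k$ by the Singleton bound — produces the desired $[n-s,k,d-s]$-code.

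For part (ii) I would again reduce to $s=1$ with $k\ge 2$. Fix $\bc\in\mC$ with $\mathrm{wt}(\bc)=d$. The key preliminary claim is that there is a coordinate $i\notin\supp(\bc)$ on which $\mC$ is not identically zero: letting $A$ be the set of coordinates where $\mC$ is not identically zero, deleting the all-zero coordinates yields an $[|A|,k]$-code so $|A|\ge k\ge 2$, and if $A\subseteq\supp(\bc)$ we would obtain a $[d,k,d]$-code, forcing $k\le 1$ by the Singleton bound — a contradiction. With such an $i$ in hand, set $\mC_i=\{\bc'\in\mC:\bc'_i=0\}$ and let $\mC''$ be $\mC_i$ with coordinate $i$ deleted. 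Since $\bc'\mapsto\bc'_i$ is a nonzero functional on $\mC$, $\dim\mC_i=k-1$, hence $\dim\mC''=k-1$ and $\mC''$ has length $n-1$; all codewords of $\mC_i$ vanish at $i$, so weights are unchanged and $d(\mC'')\ge d$, while $\bc\in\mC_i$ (as $i\notin\supp(\bc)$) gives $d(\mC'')\le d$, so $d(\mC'')=d$. Iterating for $s\le k-1$ (the running dimension stays $\ge 2$) gives the claim.

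The step I expect to be the main obstacle is precisely this matter of exactness: pinning the minimum distance to be exactly $d-s$ in (i) and exactly $d$ in (ii), rather than merely bounding it below, which is what dictates puncturing \emph{inside} and shortening \emph{outside} the support of a fixed minimum-weight codeword, and which in (ii) needs the small Singleton-bound argument above to guarantee a usable coordinate outside that support whenever $k\ge 2$. The rest is routine linear algebra.
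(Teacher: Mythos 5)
Your proof is correct and is the standard textbook argument: puncture inside the support of a minimum-weight codeword for part (i), shorten at a nonzero coordinate outside the support of a minimum-weight codeword for part (ii), and iterate. The paper does not give its own proof of this lemma (it is quoted from the textbook \cite{LX04}), so there is nothing to compare beyond noting that your argument is the canonical one. Your care in pinning down the minimum distance exactly, rather than only bounding it below, is exactly the right thing to check, and the small Singleton-bound lemma you prove along the way (that a linear code cannot have all its nonzero coordinates inside the support of a single minimum-weight codeword once $k\ge 2$) is the right tool to make part (ii) work on the nose.
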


Then, we recall the Singleton defect of an $[n,k,d]$-linear code $\mC$.
\begin{defn}
The Singleton defect of an $[n,k,d]$-linear code $\mC$ is defined to be $s(\mC)=n-k+1-d.$
\end{defn}
\begin{itemize}
\item[(i)]
A code $\mC$ with $s(\mC)=0$ is called {\it maximum distance separable} (MDS). Note that MDS codes with dimension $k\in\{0,1, n-1, n\}$ are called {\it trivial}. Then we have the following well-know conjecture.\\
\fbox{\begin{minipage}{34em}
{\bf Main  MDS Conjecture ~\cite{MS}:} For a nontrivial $[n, k]$-MDS code we have that $n\le q+2$ if $q$ is even and $k=3$ or $k=q-1,$ and $n\le q+1$ otherwise. So, the length $n$ of a MDS code is always upper bounded by $q+2.$
\end{minipage}
}\\
\item[(ii) ] A code $\mC$ with $s(\mC)=1$ is called {\it an almost MDS} (AMDS) code.
Let $\mu(r,q)$ denote the maximum length $n$ for which there exists an $[n, n-r-1, r+1]$ code over $\F_q$, then we have
$\mu(2,q)=q^2+q+1 $ \cite{BB52}. Furthermore, since $\mu(r,q)\le \mu(r-1,q)+1$, we have $\mu(3,q)\le q^2+q+2$~\cite{BB52}.
\end{itemize}

\begin{rmk} If $\mC$ is a $q$-ary nontrivial $[n,k,d\ge n-k]$-linear code, then $d=n-k$ or $n-k+1$. If $d=n-k$, then $n\le \mu(n-k-1,q)$. If $d=n-k+1$, we can get an $[n,k,d=n-k]$-almost MDS code by setting the last coordinate to be $0$. Hence, we also have $n\le \mu(n-k-1,q)$.

\end{rmk}
\subsection{Some basic notations and results for locally repairable  codes}
 Informally speaking, if  every coordinate of a given codeword of a block code with locality $r$ can be recovered by accessing at most $r$ other coordinates of this codeword. The formal definition of a locally repairable code with locality $r$ is given as follows.

\begin{defn}{\rm
Let ${\mC}\subseteq \F_q^n$ be a $q$-ary block code of length $n$. For each $\Ga\in\F_q$ and $i\in \{1,2,\cdots, n\}$, define ${\mC}(i,\Ga):=\{\bc=(c_1,\dots,c_n)\in {\mC},\; c_i=\Ga\}$. For a subset $I\subseteq \{1,2,\cdots, n\}\setminus \{i\}$, we denote by ${\mC}_{I}(i,\Ga)$ the projection of ${\mC}(i,\Ga)$ on $I$.
 Then ${\mC}$
 is called a locally repairable code with locality $r$ if, for every $i\in \{1,2,\cdots, n\}$, there exists a subset
$I_i\subseteq \{1,2,\cdots, n\}\setminus \{i\}$ with $|I_i|\le r$ such that  ${\mC}_{I_i}(i,\Ga)$ and ${\mC}_{I_i}(i,\Gb)$ are disjoint for any $\Ga\neq \Gb$.}
\end{defn}
One can have an equivalence definition as the following.
\begin{defn}{\rm A $q$-ary code of length $n$ is called a locally recoverable code  with locality $r$ if for any $i\in[n]$,  there exists a subset $R\subset[n]\setminus\{i\}$ of size $r$ such that  for any  $\bu,\bv\in {\mC}$, $\bu_{R\cup\{i\}}=\bv_{R\cup\{i\}}$ iff $\bu_R=\bv_R$. For the sake of convenience, $R\cup\{i\}$ is called a recovery set of $i$ in the paper.
}\end{defn}

Thus, apart from the usual parameters: length, rate and minimum distance,  the locality of a  locally repairable code plays a crucial role. In this paper, we always consider locally repairable codes that are linear over $\F_q$. Thus, a $q$-ary \LRC of length $n$, dimension $k$, minimum distance $d$ and locality $r$ is said to be an $[n,k,d]_q$-locally repairable code with locality $r$, denoted by $[n,k,d;r]$ or $[n,k,d;r]_q$ if we want to emphasize the code alphabet  size $q$.

 If we ignore minimum distance of   a $q$-ary locally repairable code, then there is a constraint on the rate \cite{GHSY12}, namely,
\begin{equation*}
\frac{k}{n} \le \frac{r}{r+1}.
\end{equation*}

However, if we also care about global error correction, we have to take minimum distance into account. In this case, the parameters have the constraint given in \eqref{eq:1}.

The following result can be found in \cite{GXY19} .

\begin{lemma}\label{lem:2.4} Let $n,k,d,r$ be positive integers with $(r+1)|n$. If the Singleton-type bound \eqref{eq:1} is achieved, then
\begin{equation*}
n-k=\frac{n}{r+1}+d-2-\left\lfloor\frac{d-2}{r+1}\right\rfloor.
\end{equation*}
\end{lemma}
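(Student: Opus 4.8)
The plan is to start from the Singleton-type bound \eqref{eq:1} being met with equality, i.e.
\[
d = n - k - \left\lceil \tfrac{k}{r}\right\rceil + 2,
\]
and to rewrite $\lceil k/r\rceil$ in terms of $n$ rather than $k$. The natural device is to use the hypothesis $(r+1)\mid n$. Since the overall information rate of a locally repairable code with locality $r$ obeys $k \le \frac{r}{r+1}n$, I would first show that, when the bound is tight, $k$ is forced to be essentially $\frac{r}{r+1}n$ up to a correction coming from $d$; more precisely I expect $\lceil k/r\rceil = \frac{n}{r+1} - \left\lfloor\frac{d-2}{r+1}\right\rfloor$ or something of this shape. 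The cleanest route is to set $m = \frac{n}{r+1}$ (an integer by hypothesis), write $k = (m-1-a)r + b$ for suitable integers $a\ge 0$ and $1\le b\le r$ isolating the ``ceiling block,'' so that $\lceil k/r\rceil = m - 1 - a + \lceil b/r\rceil = m-a$ (using $1\le b \le r$), and then substitute into the equality form of \eqref{eq:1}.

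The key steps in order: (1) From $d = n-k-\lceil k/r\rceil+2$ and $k \le \frac{r}{r+1}n$, deduce $\lceil k/r\rceil = \frac{n}{r+1} - \left\lfloor \frac{d-2}{r+1}\right\rfloor$. The idea is that $n - k - \lceil k/r\rceil = \big(n - \lceil k/r\rceil(r+1)\big) + \big(\lceil k/r\rceil r - k\big)$, where the second bracket lies in $\{0,1,\dots,r-1\}$, and combining this with the value of $d$ pins down $\lceil k/r\rceil$ modulo the slack; the divisibility $(r+1)\mid n$ is exactly what removes the ambiguity. (2) Substitute this expression for $\lceil k/r\rceil$ back into $n - k = d - 2 + \lceil k/r\rceil$ (which is just the rearrangement of \eqref{eq:1} at equality), yielding
\[
n - k = \frac{n}{r+1} + d - 2 - \left\lfloor\frac{d-2}{r+1}\right\rfloor,
\]
which is the claim. (3) Double-check the boundary cases where $r\mid k$ versus $r\nmid k$, and where $(r+1)\mid(d-2)$ versus not, to be sure the floors and ceilings behave; this is where sign/off-by-one errors creep in.

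The main obstacle I anticipate is Step (1): justifying that the ceiling $\lceil k/r\rceil$ takes exactly the asserted value rather than being off by one. This requires simultaneously using the rate constraint $k(r+1)\le rn$ as a \emph{lower} bound on $n-k-\lceil k/r\rceil$ type quantity and the Singleton equality as the matching \emph{upper} constraint, and then invoking $(r+1)\mid n$ to collapse the interval to a single integer. One clean way to organize this is to write $k = ur + v$ with $0 \le v \le r-1$ (so $\lceil k/r\rceil = u$ if $v=0$ and $u+1$ otherwise), plug into $d = n - k - \lceil k/r\rceil + 2$, reduce modulo $r+1$ using $n \equiv 0$, and observe that the residue of $d-2$ modulo $r+1$ determines $v$ (equivalently $\lfloor (d-2)/(r+1)\rfloor$ appears naturally); then solve for $n-k$. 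Everything after that is a one-line substitution. I would also remark that this lemma is quoted from \cite{GXY19}, so at most a short self-contained verification along these lines is needed rather than a long argument.
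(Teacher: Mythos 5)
Your proposal is correct, and the final route you sketch (write $k = ur+v$ with $0\le v\le r-1$, substitute into the Singleton equality, and reduce modulo $r+1$ using $(r+1)\mid n$) is exactly the right self-contained verification. The paper itself gives no proof of this lemma — it simply cites \cite{GXY19} — so there is no internal argument to compare against, but your computation closes the gap cleanly. Concretely, from $n-k = d-2+\lceil k/r\rceil$ one gets, with $k=ur+v$: if $v=0$ then $\lceil k/r\rceil=u$ and $d-2=n-u(r+1)$, so $(r+1)\mid(d-2)$ and $\lfloor(d-2)/(r+1)\rfloor = n/(r+1)-u$; if $1\le v\le r-1$ then $\lceil k/r\rceil=u+1$ and $d-2=n-(u+1)(r+1)+(r-v)$ with $1\le r-v\le r-1$, so again $\lfloor(d-2)/(r+1)\rfloor = n/(r+1)-(u+1)=n/(r+1)-\lceil k/r\rceil$. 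Either way $\lceil k/r\rceil = \frac{n}{r+1}-\lfloor\frac{d-2}{r+1}\rfloor$, and substituting gives the claim.

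One small remark: you invoke the rate bound $k\le \frac{r}{r+1}n$ in Step (1) as if it were needed to pin down $\lceil k/r\rceil$, but it is not. The Singleton equality together with $(r+1)\mid n$ already determines $\lceil k/r\rceil$ uniquely, as the modular computation above shows; the rate bound is a consequence, not an input. Dropping that reference would make the argument tighter and avoid the appearance that one is using a separate structural fact about locally repairable codes when the lemma is really a purely arithmetic identity about the Singleton equality.
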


\begin{lemma}\label{lem:2.5} Given a Singleton-optimal  $[n,k,d;r]$-locally repairable code, the dimension $k$ is completely determined by $n,d,r$.
\end{lemma}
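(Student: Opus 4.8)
The plan is to translate Singleton-optimality into a single equation in $k$ (with $n,d,r$ as parameters) and then observe that the left-hand side is a strictly increasing function of $k$, which forces $k$ to be unique.

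First I would spell out what Singleton-optimality means: the code meets \eqref{eq:1} with equality, so
\[
d = n - k - \left\lceil \frac{k}{r} \right\rceil + 2,
\]
equivalently
\[
k + \left\lceil \frac{k}{r} \right\rceil = n - d + 2 .
\]
Hence it suffices to show that the map $g\colon \ZZ_{\ge 0}\to\ZZ$ defined by $g(k) = k + \lceil k/r\rceil$ is injective; then the value $n-d+2$ pins down $k$ (whenever a solution exists at all, which it does here since the given code supplies one).

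Next I would check that $g$ is in fact strictly increasing: for every $k\ge 0$,
\[
g(k+1) - g(k) = 1 + \left( \left\lceil \tfrac{k+1}{r}\right\rceil - \left\lceil \tfrac{k}{r}\right\rceil \right) \ge 1,
\]
because the ceiling function is nondecreasing. So $g$ is strictly increasing, hence injective, and the equation $g(k)=n-d+2$ has at most one solution. Since the given Singleton-optimal code realizes one such $k$, that $k$ is the unique one, i.e.\ $k$ is completely determined by $n,d,r$.

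There is essentially no obstacle here; the only step needing a (routine) verification is the monotonicity of $g$. I would also note in passing that the same argument shows not every triple $(n,d,r)$ is admissible — only those for which $n-d+2$ lies in the image of $g$ — but whenever a Singleton-optimal $[n,k,d;r]$ code exists, its dimension is forced. When $(r+1)\mid n$ one could alternatively invoke Lemma~\ref{lem:2.4}, which already expresses $n-k$ (and hence $k$) explicitly in terms of $n,d,r$; the monotonicity argument has the advantage of covering the general case with no divisibility hypothesis.
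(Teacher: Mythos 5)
Your proof is correct and uses essentially the same idea as the paper: Singleton-optimality fixes the value $k+\lceil k/r\rceil = n-d+2$, and strict monotonicity of that quantity in $k$ forces uniqueness. The paper phrases it as a two-code contradiction argument while you phrase it as injectivity of $g(k)=k+\lceil k/r\rceil$, but the underlying computation is identical.
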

\begin{proof} Suppose we have two Singleton-optimal  locally repairable codes $[n,k_1,d;r]$ and $[n,k_2,d;r]$ with $1\le k_1<k_2$. Then we have
\[d=n-k_1-\left\lceil\frac {k_1}r\right\rceil+2>n-k_2-\left\lceil\frac {k_2}r\right\rceil+2=d.\]
This is a contradiction.
\end{proof}

For linear codes, the following lemma establishes a connection between the locality and the dual code ${\mC}^{\perp}$. It was proved in \cite{GXY19} that  recovery sets are determined by codewords with specific support in the dual code. Precisely we have Lemma~\ref{lem:2.6}.

\begin{lemma}[see  \cite{GXY19}]\label{lem:2.6} A subset $R\subset[n]$ with $i\in R$ is a recovery set at $i$ for a $q$-ary linear code of length $n$ if and only if there exists a codeword $\bc\in {\mC}^{\perp}$ such that $i\in\supp(\bc)\subset R$.
\end{lemma}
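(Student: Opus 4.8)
The plan is to unwind the definition of a recovery set into a linear‑algebra condition and then transport it across the puncturing/shortening duality between $\mC$ and $\mC^{\perp}$. First, since $i\in R$, the requirement that $R$ be a recovery set at $i$ reads: for all $\bu,\bv\in\mC$, $\bu_R=\bv_R$ iff $\bu_{R\setminus\{i\}}=\bv_{R\setminus\{i\}}$. The forward implication is automatic, so the real content is that $\bu_{R\setminus\{i\}}=\bv_{R\setminus\{i\}}$ forces $u_i=v_i$; taking $\bc=\bu-\bv$ and using linearity, this is equivalent to saying that every $\bc\in\mC$ with $\bc_{R\setminus\{i\}}=\bo$ satisfies $c_i=0$. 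Passing to the punctured code $\mC_R:=\{\bc_R:\bc\in\mC\}\subseteq\F_q^R$, this is the statement that $\mC_R$ contains no nonzero word supported inside $\{i\}$, equivalently (since $\mC_R$ is a subspace) that the $i$-th unit vector $\be^{(R)}_i$ of $\F_q^R$ does not lie in $\mC_R$.

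Next I would invoke the standard puncture--shorten duality. Writing $S=[n]\setminus R$ and letting $\widetilde{\bx}\in\F_q^n$ denote the zero-extension of $\bx\in\F_q^R$, one has $\langle\bx,\bc_R\rangle=\langle\widetilde{\bx},\bc\rangle$ for every $\bc\in\mC$, since the coordinates of $\bc$ on $S$ are paired with zeros; hence $(\mC_R)^{\perp}=\{\bx\in\F_q^R:\widetilde{\bx}\in\mC^{\perp}\}$, i.e.\ $(\mC_R)^{\perp}$ is exactly the restriction to $R$ of the codewords of $\mC^{\perp}$ whose support lies in $R$. Finally, because the standard bilinear form on $\F_q^R$ is nondegenerate we have $(\mC_R)^{\perp\perp}=\mC_R$, so $\be^{(R)}_i\in\mC_R$ iff $x_i=0$ for every $\bx\in(\mC_R)^{\perp}$. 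Negating and chaining the equivalences: $R$ is a recovery set at $i$ iff some $\bx\in(\mC_R)^{\perp}$ has $x_i\neq0$, iff some $\bc\in\mC^{\perp}$ satisfies $\supp(\bc)\subseteq R$ and $c_i\neq0$, i.e.\ $i\in\supp(\bc)\subseteq R$, which is the assertion.

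There is no serious obstacle here: the argument is routine linear algebra. The only points requiring care are the bookkeeping between duality taken inside $\F_q^R$ and duality taken inside $\F_q^n$ (getting the puncture--shorten identity stated and applied in the right direction), together with keeping straight the paper's convention that the recovery set $R$ contains the index $i$ while it is $R\setminus\{i\}$ that performs the actual repair.
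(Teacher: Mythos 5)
Your argument is correct. The paper itself does not prove Lemma~\ref{lem:2.6}; it simply cites \cite{GXY19}, so there is no in-paper proof to compare against, but your chain of equivalences is sound: reducing the recovery condition to ``no nonzero word of $\mC_R$ is supported on $\{i\}$'' (i.e.\ $\be^{(R)}_i\notin\mC_R$), applying the puncture--shorten identity $(\mC_R)^\perp=\{\bc_R:\bc\in\mC^\perp,\ \supp(\bc)\subseteq R\}$, and then double-dualizing all go through exactly as you say. A slightly more elementary route, which is how this is usually proved, is to observe directly that a dual word $\bc$ with $i\in\supp(\bc)\subseteq R$ gives the explicit repair formula $u_i=-c_i^{-1}\sum_{j\in R\setminus\{i\}}c_ju_j$, and conversely that recoverability means $\bu\mapsto u_i$ factors through $\bu\mapsto\bu_{R\setminus\{i\}}$, hence is given by some linear functional on $\F_q^{R\setminus\{i\}}$ whose coefficient vector, padded with $c_i=1$ and zeros off $R$, lies in $\mC^\perp$; your duality phrasing packages precisely this. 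One cosmetic point: you should make explicit that the final step uses nondegeneracy of the standard form on $\F_q^R$ only to conclude $(\mC_R)^{\perp\perp}=\mC_R$, which you did, so the bookkeeping is complete.
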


For a linear code ${\mC}$  and $r\ge 1$,  we define  the set
\[\mathfrak{R}_{\mC}(r):=\{\supp(\bc):\; \bc\in {\mC}^\perp,\; |\supp(\bc)|\le r+1\}.\]
The following result is a straightforward corollary of Lemma \ref{lem:2.2}.

\begin{cor}\label{cor:2.7} A linear code ${\mC}$ has locality $r$ if and only if $[n]=\cup_{I\in \mathfrak{R}_{\mC}(r)}I$.
\end{cor}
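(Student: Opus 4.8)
The plan is to deduce the statement directly from Lemma \ref{lem:2.6}, which identifies recovery sets with supports of low-weight dual codewords, after unwinding the definition of locality on both sides. The only point needing care is the off-by-one between the size $\le r+1$ of a recovery set $R\cup\{i\}$ and the locality parameter $r$ (which bounds the size of the helper set $I_i=R\setminus\{i\}$), together with the fact that, by definition, $\mathfrak{R}_{\mC}(r)$ collects supports of size \emph{up to} $r+1$.

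First I would prove the forward implication. Suppose ${\mC}$ has locality $r$ and fix a coordinate $i\in[n]$. By definition there is a set $I_i\subseteq[n]\setminus\{i\}$ with $|I_i|\le r$ such that $R:=I_i\cup\{i\}$ is a recovery set at $i$, so $|R|\le r+1$. Lemma \ref{lem:2.6} then furnishes a codeword $\bc\in{\mC}^\perp$ with $i\in\supp(\bc)\subseteq R$; in particular $|\supp(\bc)|\le r+1$, hence $\supp(\bc)\in\mathfrak{R}_{\mC}(r)$, and $i\in\supp(\bc)\subseteq\bigcup_{I\in\mathfrak{R}_{\mC}(r)}I$. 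As $i$ was arbitrary, $[n]=\bigcup_{I\in\mathfrak{R}_{\mC}(r)}I$.

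Next I would prove the converse. Assume $[n]=\bigcup_{I\in\mathfrak{R}_{\mC}(r)}I$ and fix $i\in[n]$; choose $I\in\mathfrak{R}_{\mC}(r)$ with $i\in I$. By the definition of $\mathfrak{R}_{\mC}(r)$ we may write $I=\supp(\bc)$ for some $\bc\in{\mC}^\perp$ with $|\supp(\bc)|\le r+1$. Taking $R:=I$, we have $i\in\supp(\bc)\subseteq R$, so Lemma \ref{lem:2.6} shows that $R$ is a recovery set at $i$, and $I_i:=R\setminus\{i\}$ satisfies $i\notin I_i$ and $|I_i|=|\supp(\bc)|-1\le r$. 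Thus every coordinate of ${\mC}$ admits a recovery set of the required size, i.e., ${\mC}$ has locality $r$.

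I do not expect a genuine obstacle here; the proof is purely a matter of matching definitions through Lemma \ref{lem:2.6}. The one subtlety is the size bookkeeping just described — one must record that $i$ itself belongs to the support, so the support carries at most $r$ further coordinates — and, if one prefers recovery sets of an exactly prescribed size rather than of size at most $r+1$, the harmless remark that enlarging a recovery set preserves the hypothesis $\supp(\bc)\subseteq R$ and hence the conclusion of Lemma \ref{lem:2.6}.
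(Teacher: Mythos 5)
Your argument is correct and is exactly the paper's intended route: the authors present Corollary~\ref{cor:2.7} as a ``straightforward corollary'' of the recovery-set/dual-support equivalence (Lemma~\ref{lem:2.6}, though the paper's text mistakenly cites Lemma~\ref{lem:2.2}), and your two directions just unwind the definitions through that lemma with the same $r$ versus $r+1$ bookkeeping. Nothing more is needed.
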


Finally, we introduce the disjoint recovery sets.
\begin{defn}{\rm { Let $\ell$ be an integer.} We say that a linear code ${\mC}$ with locality  $r$ has  disjoint recovery sets if there exist subsets $\{I_1, I_2, \cdots, I_\ell\}\subseteq \mathfrak{R}_{\mC}(r)$ that form a partition of $[n]$.
}\end{defn}

\section{Upper Bounds for Locally Repairable Codes}~\label{sec:3}
In this section, we focus on upper bounds for code length when $d=\Omega(n)$ and  $d=o(n)$, respectively. We also provide an upper bound for general minimum distance $d$ and locality $r$.

Throughout the paper, we assume that parameters of a Singleton-optimal $[n,k,d;r]$-locally repairable codes  satisfy
 \begin{equation}\label{eq:7}
  r<k ,\quad {n\ge 2(r+1)}\quad {\rm and}\quad d\ge 3.
 \end{equation}
To start, we construct a parity-check matrix $H$ of an $[n,k,d; r]$-linear locally repairable code  $\mC$ with a specific form.
\\

\fbox{\begin{minipage}{35em}
Based on the Lemma~\ref{lem:2.6}, we can choose a basis of ${\mC}^{\perp}$ as follows: there exists a vector $\bu_1\in {\mC}^{\perp}$ such that $i_1=1\in\supp(\bu_1)$ and $|\supp(\bu_1)|\le r+1$. Choose $i_2\in[n]\setminus\supp(\bu_1)$, then  there exists a vector $\bu_2\in {\mC}^{\perp}$ such that $i_2\in\supp(\bu_2)$ and $|\supp(\bu_2)|\le r+1$. Choose $i_3\in[n]\setminus(\supp(\bu_1)\cup(\bu_2))$, then  there exists a vector $\bu_3\in {\mC}^{\perp}$ such that $i_3\in\supp(\bu_3)$ and $|\supp(\bu_3)|\le r+1$.
Continue in this fashion until we get $\bu_1,\bu_2,\dots,\bu_\ell\in {\mC}^\perp$ satisfying:
\begin{itemize}
\item[(i)] $|\supp(\bu_i)|\le r+1$ for all $i\in[\ell]$;
\item[(ii)] $\bigcup_{i=1}^\ell\supp(\bu_i)=[n]$;
\item[(iii)] $\supp(\bu_i)\nsubseteq \cup_{j=1}^{i-1}\supp(\bu_j)$ for all $i\in\{2,3,\dots,\ell\}$.
\end{itemize}
From the above condition (iii), it is clear that  $\bu_1,\bu_2,\dots,\bu_\ell$ are linearly independent. Thus, $\ell\le n-k$ and we can extend to a basis $\bu_1,\bu_2,\dots,\bu_{n-k}$ of ${\mC}^{\perp}$. Denote by $H_1\in\F_q^{\ell\times n}$ and $H_2\in\F_q^{(n-k-\ell)\times n}$, respectively the matrix whose rows consists of  $\bu_1,\bu_2,\dots,\bu_\ell$ and $\bu_{\ell+1},\bu_{\ell+2},\dots,\bu_{n-k}$, respectively. Denote by $H$ the matrix
\begin{equation}\label{eq:8}
H=\begin{pmatrix}H_1\\ H_2\end{pmatrix}.
\end{equation}
Note that the number of rows of $H_1$ is $\ell$. We denote by $h$ the number of rows of $H_2$. Then
\begin{equation}\label{eq:9} h+\ell=n-k.\end{equation}
\end{minipage}
}\\
\medskip

Based on the above construction of a parity-check matrix $H$, we have the following inequalities for parameters involved.
\begin{lemma}~\label{lemma:3.1} Let
$H\in\F_q^{(n-k)\times n}$ be a parity-check matrix of $\mC$ given in \eqref{eq:8}. We have
\begin{equation}\label{eq:10}
\frac kr< \frac n{r+1}\le \ell\le n-k.
\end{equation}
\begin{proof}
We write $H=(h_{ij})\in\F_q^{(n-k)\times n}, 1\leq i\leq n-k, 1\leq j\leq n$. Since $\bu_1,\bu_2,\dots,\bu_\ell$ are linearly independent, so it is clear $\ell\le n-k$. Consider the first $\ell$ rows of $H,$ each rows contains at most $r+1$ nonzero elements, so the total number of nonzero elements within the first $\ell$ rows of $H$ is at most $\ell(r+1)$. On the other hand, for every $j\in[n]$ there is $i\in[\ell]$ such that the element $h_{ij}\neq 0.$ This means that within the first $\ell$ rows, there are at least $n$ nonzero elements, so $n\le \ell(r+1)$. Therefore, we have
\begin{equation}\label{eq:11}
\nonumber\ell+k\le n\le \ell(r+1).
\end{equation}
 By the assumption  $d\ge 3$ and  the Singleton-type bound, we have
\begin{equation}
\nonumber3\le d\le n-k-\left\lceil\frac kr\right\rceil+2\le n-k-\frac kr+2,
\end{equation}
i.e., $ \frac kr< \frac n{r+1}$.
Hence, we have
\begin{equation*}
\frac kr< \frac n{r+1}\le \ell\le n-k.
\end{equation*}
\end{proof}
\end{lemma}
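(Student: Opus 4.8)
The plan is to verify the three relations in the chain $\tfrac kr<\tfrac n{r+1}\le \ell\le n-k$ one at a time; none needs more than elementary bookkeeping, and the only genuinely substantive ingredient is a two-way count of the nonzero entries of the top block $H_1$ of $H$.

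For the rightmost inequality $\ell\le n-k$ I would argue that $\bu_1,\dots,\bu_\ell$ are linearly independent over $\F_q$. This is precisely what condition (iii) of the construction buys us: for each $i\ge 2$ there is a coordinate lying in $\supp(\bu_i)$ but in none of $\supp(\bu_1),\dots,\supp(\bu_{i-1})$, so $\bu_i$ cannot be an $\F_q$-linear combination of the earlier vectors, and by induction the whole list is independent. Since every $\bu_i$ lies in $\mC^{\perp}$, which has dimension $n-k$, we conclude $\ell\le n-k$.

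For the middle inequality $\tfrac n{r+1}\le\ell$ I would count the nonzero entries of $H_1\in\F_q^{\ell\times n}$ in two ways. By condition (i) each of the $\ell$ rows has at most $r+1$ nonzero entries, so $H_1$ has at most $\ell(r+1)$ nonzero entries in total; by condition (ii) we have $\bigcup_{i\in[\ell]}\supp(\bu_i)=[n]$, so each of the $n$ columns contains at least one nonzero entry, hence $H_1$ has at least $n$ nonzero entries. Therefore $n\le\ell(r+1)$, i.e. $\ell\ge \tfrac n{r+1}$. Combining this with the previous step gives the slightly sharper bound $\ell+k\le n\le\ell(r+1)$, which is presumably the form actually used later.

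For the leftmost inequality I would just feed the standing assumption $d\ge 3$ into the Singleton-type bound \eqref{eq:1}: $3\le d\le n-k-\lceil k/r\rceil+2\le n-k-\tfrac kr+2$, which rearranges to $n\ge k+\tfrac kr+1 > k\cdot\tfrac{r+1}{r}$, that is, $\tfrac kr<\tfrac n{r+1}$. The only point requiring a moment's care anywhere in the argument is recognizing that condition (iii) yields honest linear independence rather than merely ``new support,'' and keeping the inequality in the two-way count oriented correctly; I do not expect a real obstacle beyond that.
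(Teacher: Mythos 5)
Your proof is correct and follows essentially the same route as the paper's: linear independence of $\bu_1,\dots,\bu_\ell$ (from condition (iii) of the construction) gives $\ell\le n-k$, the double count of nonzero entries in $H_1$ gives $n\le\ell(r+1)$, and plugging $d\ge 3$ into the Singleton-type bound gives the left inequality. The only cosmetic difference is that you spell out the induction behind linear independence, which the paper treats as already established in the boxed construction preceding the lemma.
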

We shall construct a parity-check matrix $H_I$. Let $\mC$ be an $[n,k,d; r]$-locally repairable code over $\F_q$ with a parity-check matrix $H=(h_{ij})\in\F_q^{(n-k)\times n}$ given in \eqref{eq:8}.
\\

\fbox{\begin{minipage}{35em}
 Let $I$ be a subset of $[\ell],$ define the set $J_{I}=\{j\in[n]:h_{ij}\neq 0 {~\rm ~for ~some}~ i\in I\}.$ Denote by $H_I$ the submatrix obtained from $H$ by removing all the rows of $H$ indexed by $I$ and all the columns of $H$ indexed by $J_I$. \end{minipage}
}\\
\medskip
\begin{lemma}~\label{lemma: C_I}
Let $\mC_I$ be the linear code with $H_I$ as a parity-check matrix, then $\mC_I$ is a $q$-ary $[n_I\ge n-|I|(r+1), k_I\ge k-r|I|, d_I\ge d]$-linear code.\end{lemma}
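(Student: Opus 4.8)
The plan is to analyze the three parameters of $\mC_I$ separately, tracking exactly how many rows and columns are removed from $H$ to form $H_I$, and how the minimum distance behaves under deletion of columns. Recall that $H_I$ is obtained from $H$ by deleting the rows indexed by $I\subseteq[\ell]$ and the columns indexed by $J_I=\{j\in[n]: h_{ij}\neq 0\text{ for some }i\in I\}$.

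First I would bound the length $n_I$ of $\mC_I$, which is the number of columns of $H_I$, i.e.\ $n_I=n-|J_I|$. Since each row $\bu_i$ of $H_1$ has $|\supp(\bu_i)|\le r+1$, the set $J_I=\bigcup_{i\in I}\supp(\bu_i)$ has size at most $|I|(r+1)$; hence $n_I\ge n-|I|(r+1)$. Next I would bound the codimension (number of rows of $H_I$), which is $n-k-|I|$ as a matrix, but I must be careful: after deleting columns, the surviving rows of $H$ need not remain linearly independent, so the true codimension of $\mC_I$ is at most $n-k-|I|$. Thus the dimension satisfies $k_I\ge n_I-(n-k-|I|)\ge (n-|I|(r+1))-(n-k-|I|)=k-r|I|$, which gives the claimed bound on $k_I$.

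For the minimum distance, the key point is that deleting columns of a parity-check matrix corresponds to puncturing or, more precisely, to restricting $\mC$ to the coordinates outside $J_I$ and then viewing the resulting shortened-type code; in any case the standard fact (essentially Lemma~\ref{lem:2.3}, or the elementary observation that any set of $<d$ columns of $H$ is linearly independent, hence remains so after deleting some of them and deleting rows does not create new dependencies among a set of columns that was independent) shows that every $d-1$ columns of $H_I$ are linearly independent, so $d_I\ge d$. Concretely, a codeword of $\mC_I$ of weight $w$ gives a linear dependence among $w$ columns of $H_I$; lifting these columns back to $H$ (they have zero entries in the deleted rows by construction, since $J_I$ already collects all columns with a nonzero entry in a row of $I$) yields a dependence among $w$ columns of $H$, forcing $w\ge d$.

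The main obstacle, and the step that needs the most care, is the distance argument: one must check that a column of $H$ indexed outside $J_I$ genuinely has zero entries in all rows indexed by $I$, so that a dependency among columns of $H_I$ pulls back to a dependency among the \emph{full} columns of $H$ rather than only among their truncations. This is exactly why $J_I$ is defined to include \emph{every} column having a nonzero entry in some row of $I$, and once this is observed the argument is immediate. I would therefore structure the proof as: (1) compute $|J_I|\le|I|(r+1)$ and deduce $n_I\ge n-|I|(r+1)$; (2) deduce $k_I\ge k-r|I|$ from the codimension bound; (3) prove $d_I\ge d$ by the column-dependence lift just described.
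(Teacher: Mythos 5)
Your proof is correct and follows essentially the same route as the paper: bounding $|J_I|$ by $|I|(r+1)$, bounding $k_I$ via $n_I-\operatorname{rank}(H_I)\ge n_I-(n-k-|I|)$, and for the distance observing that the deleted rows vanish on the surviving columns so that a dependence in $H_I$ lifts to a dependence in $H$. Your phrasing of the distance step (lifting a dependence back by padding with zeros in the rows of $I$) is a cleaner statement of what the paper does when it deletes columns to form $H^*$ and then removes the resulting zero rows.
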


\begin{proof} As each row indexed by $i\in I$ has at most $r+1$ nonzero elements, thus we delete at most $|I|(r+1)$ columns and hence the length $n_I$ is at least $n-|I|(r+1)$. The dimension $k_I$ of ${\mC}_I$ is $n_I-{\rm rank}(H_I)\ge n_I-(n-k-|I|)\ge k-r|I|$.

 Now, we prove the minimum distance of $\mC_I$ should be at least $d.$ Note that $H_I$ can be obtained in two steps: {(i) deleting the columns of $H$ indexed by $J_I$ to get the remaining matrix $H^*$. Then it is clear that any $d-1$ columns of $H^*$ are linearly independent. (ii) deleting zero rows of $H^*$ gives $H_I$. }Since we delete all zero entries of $H$, so the linear independence is unchanged, i.e.,
 any $d-1$ columns of $H_I$ are still linearly independent. Thus, we get the desired lower bound on minimum distance.
\end{proof}

\subsection{Upper bound on code length when $d=\Omega(n)$} Throughout this subsection, we assume that parameters of a Singleton-optimal $[n,k,d;r]$-locally repairable code  satisfies $r\ge 2$.
Recall that an $[n,k,d]$-linear code is called trivial if $k=0,1,n-1$ or $n$.

\begin{lemma}\label{lem:3.3}
Assume that ${\mC}$ is {a Singleton-optimal} $[n,k,d;r]$-locally repairable code. Let $I$ be a subset of $[\ell]$ with $|I|=\left\lceil\frac kr\right\rceil-2$ (note that such an $I$ can be chosen as $\ell\ge \left\lceil\frac kr\right\rceil> \left\lceil\frac kr\right\rceil-2\ge 0$).
 If  $d\ge 3$, then ${\mC}_I$ is a nontrivial $[n_I,{k_I=n_I-d}, d_I\ge d]$-linear code.
\end{lemma}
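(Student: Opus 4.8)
The plan is to apply Lemma~\ref{lemma: C_I} with the specific choice $|I|=\lceil k/r\rceil-2$ and then pin down the parameters of $\mC_I$ using the fact that $\mC$ is Singleton-optimal. First I would record that such a subset $I\subseteq[\ell]$ exists, since by \eqref{eq:10} we have $\ell\ge n/(r+1)>k/r\ge\lceil k/r\rceil-2\ge 0$ (the last inequality using $k>r$, so $\lceil k/r\rceil\ge 2$). By Lemma~\ref{lemma: C_I}, $\mC_I$ is a $q$-ary $[n_I,k_I,d_I]$-linear code with $d_I\ge d$, with $n_I\ge n-|I|(r+1)$, and with $k_I=n_I-\operatorname{rank}(H_I)\ge n_I-(n-k-|I|)$. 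In fact, the key point is that $\operatorname{rank}(H_I)=n-k-|I|$ exactly: the rows of $H$ are linearly independent, deleting $|I|$ of them leaves $n-k-|I|$ independent rows, and deleting the all-zero columns indexed by $J_I$ cannot lower the rank (one must check the surviving rows are still independent, which holds because the deleted columns were zero on every surviving row — these rows have support disjoint from $J_I$ by construction is false in general, so more care is needed here). Let me instead argue $k_I = n_I-(n-k-|I|)$ directly: $H_I$ has $n-k-|I|$ rows, and since $\mC_I$ has minimum distance $\ge d\ge 3\ge 2$ it has no zero columns, hence... actually the cleanest route is: $\operatorname{rank}(H_I)\le n-k-|I|$ always, giving $k_I\ge k-r|I|$ from Lemma~\ref{lemma: C_I}; for the reverse I would use the Singleton-optimality of $\mC$.

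The heart of the argument is the dimension computation. Since $\mC$ is Singleton-optimal, $d=n-k-\lceil k/r\rceil+2$. Now $\mC_I$ has length $n_I$, minimum distance $d_I\ge d$, and $n_I-k_I=\operatorname{rank}(H_I)\le n-k-|I| = n-k-\lceil k/r\rceil+2 = d$. Combined with $d_I\ge d$ this forces $n_I-k_I\le d\le d_I$, i.e. $\mC_I$ meets the classical Singleton bound in the wrong direction unless equality holds: the Singleton bound gives $d_I\le n_I-k_I+1$, so $d\le d_I\le n_I-k_I+1\le d+1$. Thus $d_I\in\{d,d+1\}$ and $n_I-k_I\in\{d-1,d\}$. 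The statement claims $k_I=n_I-d$, i.e. $n_I-k_I=d$ and $d_I=d$. So I would need to rule out $n_I-k_I=d-1$; but if $\operatorname{rank}(H_I)=d-1<n-k-|I|=d$ this means the $n-k-|I|$ surviving rows of $H$ became dependent after deleting the columns in $J_I$, which contradicts the structure — the rows $\bu_{|I|+1},\dots$ restricted to $[n]\setminus J_I$. Here is where I would use condition (iii) in the construction of the basis more carefully, or alternatively just observe that Lemma~\ref{lemma: C_I} already asserts $k_I\ge k-r|I|$ and then check $k-r|I| = k - r(\lceil k/r\rceil-2)$; combined with $n_I\ge n-|I|(r+1)$ one computes $n_I-k_I\le (n-|I|(r+1))-(k-r|I|)$... wait this goes the wrong way. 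The correct bookkeeping: $n_I - k_I = \operatorname{rank}(H_I)\le n-k-|I|=d$, and Singleton gives $d_I\le n_I-k_I+1$, with $d_I\ge d$; so it remains to show $n_I-k_I\ge d$, equivalently $\operatorname{rank}(H_I)=n-k-|I|$ exactly. This is the one genuine obstacle and I expect to handle it by showing the restriction map on the row space $\langle \bu_{|I|+1},\dots,\bu_{n-k}\rangle$ to coordinates outside $J_I$ is injective — which is not automatic and is presumably where a short lemma or a direct argument using the echelon-like structure from conditions (i)–(iii) is invoked.

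Once $n_I-k_I=d$ and hence (by the Singleton bound plus $d_I\ge d$) $d_I\in\{d,d+1\}$, I would conclude $d_I=d$: indeed if $d_I=d+1=n_I-k_I+1$ then $\mC_I$ is an MDS code; one would then either push through an MDS code being impossible at this length, or — more likely, given the wording — simply note that the statement only requires "$d_I\ge d$", so this subtlety is deferred, and the real content is $k_I=n_I-d$. Writing $k_I=n_I-d$ with $d_I\ge d\ge 3$, nontriviality of $\mC_I$ means $k_I\notin\{0,1,n_I-1,n_I\}$: we have $k_I=n_I-d\le n_I-3<n_I-1$, and $k_I=n_I-d\ge 1$ needs $n_I\ge d+1$, while $k_I\ge 2$ needs $n_I\ge d+2$. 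I would verify $n_I\ge d+2$ from $n_I\ge n-|I|(r+1)$, $n\ge 2(r+1)$, and $d=n-k-\lceil k/r\rceil+2$, together with the bound $k_I\ge k-r|I|=k-r\lceil k/r\rceil+2r\ge 2r-r = r\ge 2$ using $r\lceil k/r\rceil < r\cdot(k/r+1)=k+r$ so $k-r\lceil k/r\rceil>-r$, giving $k_I>r\ge 2$; this simultaneously gives $n_I=k_I+d>d+2$, establishing nontriviality. The main obstacle, to repeat, is confirming $\operatorname{rank}(H_I)=n-k-|I|$ (equivalently that no rank is lost when the columns $J_I$ are deleted); everything else is routine inequality-chasing with the Singleton-optimality identity.
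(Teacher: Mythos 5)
Your approach matches the paper's: apply Lemma~\ref{lemma: C_I}, observe that $H_I$ has exactly $n-k-|I|=d$ rows, read off the parameters of $\mC_I$, and then check nontriviality. You have correctly isolated the one step that is not routine, and you have correctly flagged that you do not establish it: showing $\operatorname{rank}(H_I)=d$, equivalently that deleting the columns indexed by $J_I$ does not create a dependency among the $d$ surviving rows of $H$. As you suspect, this is not automatic. It fails precisely when the subspace $W=\{\bv\in\mC^{\perp}:\ \supp(\bv)\subseteq J_I\}$ has dimension strictly larger than $|I|$ (any $\bw\in W$ outside $\operatorname{span}\{\bu_i:i\in I\}$ produces a nonzero $\sum_{i\notin I}c_i\bu_i$ supported in $J_I$), and nothing in the greedy construction of $\bu_1,\dots,\bu_\ell$ or in Singleton-optimality rules this out. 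So the proposal has a genuine gap at exactly the place you identify.

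It is worth saying, though, that the paper's own proof is equally terse here: it asserts that ``any $d-1$ rows of $H_I$ are linearly independent'' (which only yields $\operatorname{rank}(H_I)\ge d-1$) and then writes $k_I=n_I-d$, so the equality is not really justified there either. What \emph{is} unconditional is $\operatorname{rank}(H_I)\le d$ (there are only $d$ rows), hence $k_I\ge n_I-d$; and this weaker inequality is all the downstream argument uses. In Lemma~\ref{lem:3.4}, the Griesmer computation only needs $d_I+k_I\ge n_I$ together with $k_I\ge 2$, and $d_I\ge d$ with $k_I\ge n_I-d$ already gives $d_I+k_I\ge n_I$. So the clean fix is to state and use $k_I\ge n_I-d$, after which both your write-up and the paper's become watertight. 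Your nontriviality check is correct: $k_I\ge k-r|I|=k-r(\lceil k/r\rceil-2)\ge r\ge 2$, and for the upper bound you do not need the unproven rank equality — the classical Singleton bound on $\mC_I$ already gives $k_I\le n_I-d_I+1\le n_I-d+1\le n_I-2$ since $d\ge 3$.
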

\begin{proof}  The number of rows of $H_I$ is $n-k-|I|=n-k-\left(\left\lceil\frac kr\right\rceil-2\right)=d$ and any $d-1$ rows of $H_I$ are linearly independent. We conclude that  ${\mC}_I$ is an $[n_I,k_I= n_I-d, d_I\ge d]$-linear code.

Next we show that ${\mC}_I$ is nontrivial. Based on Lemma~\ref{lemma: C_I}, we have $k_I\ge k-r|I|=k-r\left(\left\lceil\frac kr\right\rceil-2\right)\ge k-r\left(\frac kr+1-2\right)=r\ge 2$. Furthermore, we have
\[k_I=n_I-{\rm rank}(H_I)\le n_I-(d-1)\le n_I-2.\] This  completes the proof.
\end{proof}

\begin{lemma}\label{lem:3.4}~\cite{MS} If there is a $q$-ary $[n, {k= n-d}, \ge d]$-linear code $\mC$ with $k\ge 2$, then $d\le 2q$.
\end{lemma}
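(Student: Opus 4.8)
The plan is to derive the bound directly from the Griesmer bound (Lemma~\ref{lem:2.2}) together with the hypothesis $k=n-d$, i.e.\ $n=k+d$.

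First I would apply Lemma~\ref{lem:2.2} to $\mC$: since its minimum distance is at least $d$, one gets
\[
n\ge\sum_{i=0}^{k-1}\left\lceil\frac{d}{q^i}\right\rceil .
\]
Then I would isolate the first two summands and bound the remaining ones crudely from below: the $i=0$ term equals $d$, the $i=1$ term equals $\lceil d/q\rceil$, and each of the remaining $k-2$ terms (for $i=2,\dots,k-1$) is at least $1$. Using $k\ge 2$, this yields $n\ge d+\lceil d/q\rceil+(k-2)$. Substituting $n=k+d$ and cancelling $d$ and $k$ gives $\lceil d/q\rceil\le 2$, hence $d\le q\lceil d/q\rceil\le 2q$.

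I do not expect any genuine obstacle here; once $n$ is rewritten as $k+d$, the statement is a two-line consequence of the Griesmer bound. The only place to take a little care is the edge case $k=2$, where the ``$(k-2)$'' contribution is empty and the inequality reads simply $n\ge d+\lceil d/q\rceil$, which still forces $d\le 2q$; one should also note that the hypothesis $k\ge 2$ is used in an essential way, since for $k=1$ (a repetition-type code) the bound fails and $d$ can be as large as $n$.
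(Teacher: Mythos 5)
Your proposal is correct and uses the same key ingredients as the paper's proof: apply the Griesmer bound, separate the $i=0$ and $i=1$ terms, lower-bound the remaining terms by $1$, and substitute $n=k+d$ to conclude $\lceil d/q\rceil\le 2$. The paper phrases it as a proof by contradiction (assume $d>2q$, so $\lceil d/q\rceil\ge 3$, and derive $n\ge n+1$), whereas you argue directly, but this is only a cosmetic difference.
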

\begin{proof} Suppose $d>2q$. Applying the Grismer bound in~(\ref{eq:4}) to the code gives
\[n\ge \sum_{i=0}^{k-1}\left\lceil\frac d{q^i}\right\rceil\ge d+3+(k-2)=d+k+1\ge n+1.\]
This is a contradiction.
\end{proof}

\begin{theorem}\label{thm:3.5} If there exists {a Singleton-optimal} $[n,k,d;r]$-locally repairable code with $d=\Omega(n)$, then $n=O(q)$. More precisely, if $d= \Gl n$ for some $\Gl<1$, then $n\le \frac{2q}{\Gl}$.
\end{theorem}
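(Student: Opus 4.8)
The plan is to combine the structural lemmas already established with the Griesmer bound. By Lemma~\ref{lem:3.3}, starting from a Singleton-optimal $[n,k,d;r]$-locally repairable code and choosing a subset $I\subseteq[\ell]$ with $|I|=\lceil k/r\rceil-2$, we obtain a \emph{nontrivial} $[n_I, k_I=n_I-d, d_I\ge d]$-linear code over $\F_q$, where $n_I\ge n-|I|(r+1)$ and $k_I\ge k-r|I|\ge r\ge 2$. So the residual code has Singleton defect exactly $1$ (it is almost MDS) and dimension at least $2$, which is precisely the hypothesis of Lemma~\ref{lem:3.4}. Applying Lemma~\ref{lem:3.4} to $\mC_I$ gives $d_I\le 2q$, and since $d_I\ge d$ we conclude $d\le 2q$.

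Now I bring in the assumption $d=\Gl n$ with $\Gl<1$. From $d\le 2q$ we immediately get $\Gl n\le 2q$, i.e.\ $n\le 2q/\Gl$, which is exactly the claimed bound; in particular $n=O(q)$ when $\Gl$ is a constant bounded away from $0$. I should double-check the edge conditions feeding into Lemma~\ref{lem:3.3}: we need $\ell\ge\lceil k/r\rceil$ (so that a subset $I$ of the required size exists inside $[\ell]$), which holds by Lemma~\ref{lemma:3.1} since $\ell\ge n/(r+1)>k/r$, hence $\ell>k/r$ and $\ell\ge\lceil k/r\rceil$ because $\ell$ is an integer; we also need $d\ge 3$ and $r\ge 2$, both of which are standing assumptions in this subsection (see \eqref{eq:7} and the opening of Subsection~3.1). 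The nontriviality claim in Lemma~\ref{lem:3.3} already guarantees $k_I\ge 2$, so the application of Lemma~\ref{lem:3.4} is legitimate.

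The main (and really only) subtlety is making sure the chain of inequalities $k_I\ge k-r|I| = k-r(\lceil k/r\rceil-2)\ge k - r(k/r+1-2) = r\ge 2$ goes through, i.e.\ that the residual dimension genuinely stays $\ge 2$ rather than collapsing; this is handled inside Lemma~\ref{lem:3.3} and uses $r\ge 2$. There is no heavy computation: the theorem is essentially a one-line consequence of Lemmas~\ref{lem:3.3} and \ref{lem:3.4} once the residual-code machinery is in place. I would therefore write the proof as: invoke Lemma~\ref{lem:3.3} to get the nontrivial almost-MDS code $\mC_I$ with minimum distance $\ge d$, invoke Lemma~\ref{lem:3.4} to deduce $d\le 2q$, and then substitute $d=\Gl n$ to obtain $n\le 2q/\Gl=O(q)$.
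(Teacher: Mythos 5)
Your argument is correct and follows exactly the paper's route: invoke Lemma~\ref{lem:3.3} to produce the nontrivial $[n_I,\,n_I-d,\,\ge d]$ code $\mC_I$, apply Lemma~\ref{lem:3.4} to obtain $d\le 2q$, and substitute $d=\Gl n$. One small slip in phrasing: Lemma~\ref{lem:3.4}, applied to $\mC_I$ written as a $[n_I,\,n_I-d,\,\ge d]$ code, yields $d\le 2q$ directly (the ``$d$'' in that lemma's statement is $n_I-k_I=d$, not $d_I$), so the detour through ``$d_I\le 2q$'' is unnecessary and not literally what the lemma gives; likewise $\mC_I$ need not be strictly almost-MDS since $d_I$ could equal $d+1$, but neither point affects the conclusion.
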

\begin{proof} As $d=\Omega(n)$, there exists a real $\Gl>0$ such that $d= \Gl n$. Choose a subset $I$ of $[\ell]$ with $|I|=\left\lceil\frac kr\right\rceil-2$. Applying Lemma \ref{lem:3.4} to the code ${\mC}_I$ given in Lemma \ref{lem:3.3}, we get $d\le 2q$. Thus, we have
\[n\le \frac{d}{\Gl}\le \frac{2}{\Gl}\times q.\]
The proof is completed.
\end{proof}

{Next we give a sufficient condition under which $n\le q+O(1)$ in Theorem~\ref{thm:3.7}.}

\begin{lemma}\label{lem:3.6}
 If $\mC$ is {a Singleton-optimal} $[n,k,d;r]$-locally repairable code, then ${\mC}_I$ is an MDS code if $|I|=\left\lceil\frac kr\right\rceil-1$. Furthermore,  if $k\not\equiv1\pmod{r}$ and $d\ge 3$, then ${\mC}_I$ is a nontrivial MDS code.
\end{lemma}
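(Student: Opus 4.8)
The plan is to mirror the proof of Lemma~\ref{lem:3.3}, but taking $|I|=\left\lceil\frac kr\right\rceil-1$ instead of $\left\lceil\frac kr\right\rceil-2$, so that the residual code $\mC_I$ has Singleton defect zero. First I would count the rows of $H_I$: by the construction of $H_I$ from $H$, the number of rows is $n-k-|I| = n-k-\left(\left\lceil\frac kr\right\rceil-1\right) = d-1$. By Lemma~\ref{lemma: C_I} every $d-1$ columns of $H_I$ are linearly independent, hence $\mathrm{rank}(H_I)=d-1$ (it cannot exceed its number of rows, and the minimum-distance lower bound $d_I\ge d$ forces it to be exactly $d-1$ provided $n_I\ge d-1$, which holds since $k_I=n_I-\mathrm{rank}(H_I)\ge k-r|I|\ge 1$ as shown below). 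Therefore $k_I = n_I-(d-1)$, i.e. $d_I \ge d = n_I-k_I+1$, and combined with the classical Singleton bound $d_I\le n_I-k_I+1$ this gives $d_I = n_I-k_I+1$, so $\mC_I$ is MDS.

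For the second assertion I need to rule out the trivial cases $k_I\in\{0,1,n_I-1,n_I\}$ under the hypotheses $k\not\equiv 1\pmod r$ and $d\ge 3$. On one side, $k_I = n_I-\mathrm{rank}(H_I) = n_I-(d-1)\le n_I-2$ since $d\ge 3$, which eliminates $k_I=n_I-1$ and $k_I=n_I$. On the other side I would invoke the dimension bound from Lemma~\ref{lemma: C_I}, namely $k_I\ge k-r|I| = k - r\left(\left\lceil\frac kr\right\rceil-1\right)$. Writing $k = ar+b$ with $1\le b\le r$ (so that $\left\lceil\frac kr\right\rceil = a+1$ when $b\ge 1$), this lower bound becomes $k_I \ge ar+b - r\cdot a = b = \{k\pmod r\}$. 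The condition $k\not\equiv 1\pmod r$ means $b\ne 1$; since also $b\ge 1$ we get $b\ge 2$, hence $k_I\ge 2$, which eliminates $k_I=0$ and $k_I=1$. Thus $\mC_I$ is a nontrivial MDS code.

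The only slightly delicate point — the main obstacle, such as it is — is making sure the parameter $|I|=\left\lceil\frac kr\right\rceil-1$ is admissible, i.e. that $I$ can indeed be chosen as a subset of $[\ell]$; this requires $\ell \ge \left\lceil\frac kr\right\rceil-1$, which follows immediately from Lemma~\ref{lemma:3.1} since $\ell \ge \frac n{r+1} > \frac kr$, hence $\ell \ge \left\lceil\frac kr\right\rceil > \left\lceil\frac kr\right\rceil-1$. One should also double-check the degenerate edge case where $b=r$ (i.e. $k\equiv 0\pmod r$): then $\left\lceil\frac kr\right\rceil = \frac kr$, $|I| = \frac kr-1$, and $k_I\ge k-r\left(\frac kr-1\right) = r\ge 2$ by the standing assumption $r<k$ together with $r\ge 1$; this is consistent with the formula $k_I\ge b=r$ above, so no separate argument is needed. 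I expect the write-up to be short, essentially a one-paragraph adaptation of Lemma~\ref{lem:3.3}'s proof with the row count shifted by one.
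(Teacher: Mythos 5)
Your argument is correct and follows essentially the same route as the paper: count that $H_I$ has $n-k-|I|=d-1$ rows, invoke Lemma~\ref{lemma: C_I} for $d_I\ge d$ to conclude MDS, then bound $k_I\ge k-r|I|=\{k\pmod r\}\ge 2$ using $k\not\equiv 1\pmod r$ (the paper splits into the cases $s=0$ and $s\ne 0$, whereas you fold them together via the $\{k\pmod r\}$ convention, a cosmetic streamlining). One small wrinkle: your justification that $n_I\ge d-1$ quietly presupposes $\mathrm{rank}(H_I)=d-1$, which is what you are in the middle of establishing; the non-circular version is that $k_I\ge k-r|I|\ge 1$ makes $\mC_I$ nonzero, so $d_I$ is defined and $n_I\ge d_I\ge d$, after which the rank and Singleton computations go through as you describe.
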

\begin{proof} Firstly, we show that ${\mC}_I$ is an MDS code if $|I|=\left\lceil\frac kr\right\rceil-1$. Note that the parity-check matrix $H_I$ of ${\mC}_I$ has $n-k-|I|=n-k-\left\lceil\frac kr\right\rceil+1=d-1$ rows. Since any $d-1$ columns of $H_I$ is linearly independent, so ${\mC}_I$ is an MDS code.

  Assume that $k\equiv s\pmod{r}$ with $0\le s\le r-1$. From Lemma~\ref{lemma: C_I},  we have $k_I\ge k-r|I|.$ If $s=0$, i.e., $r|k$, then $k_I\ge k-r|I|=k-(k-r)=r>1$. If $s\neq 0$, then $s\ge 2$ and  we have $k_I\ge k-r|I|=k-(k+r-s-r)=s>1$. Furthermore, as $k_I= n_I-(n-k-|I|)$, we have $n_I-k_I=n-k-|I|=d-1\ge 2$. This completes the proof.
\end{proof}

\begin{theorem}\label{thm:3.7} Suppose that the Main MDS conjecture holds. Assume that there exists {a Singleton-optimal} $[n,k,d;r]$-locally repairable code $\mC$, if $k\not\equiv1\pmod{r}$ then the length of $\mC$ is bounded by
 \begin{equation*}n\le q+1+k+\frac{k}{r}.\end{equation*}
\end{theorem}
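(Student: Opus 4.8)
The plan is to combine Lemma~\ref{lem:3.6} with the Main MDS Conjecture, paying attention to the discrepancy between the code length $n$ and the length $n_I$ of the truncated code $\mC_I$. First I would choose a subset $I\subseteq[\ell]$ with $|I|=\left\lceil\frac kr\right\rceil-1$; this is legitimate because $\ell\ge\left\lceil\frac kr\right\rceil$ by Lemma~\ref{lemma:3.1} (inequality \eqref{eq:10}). By Lemma~\ref{lem:3.6}, since $k\not\equiv1\pmod r$ and $d\ge3$, the code $\mC_I$ is a nontrivial $[n_I,k_I,d_I]$-MDS code with $n_I-k_I=d-1\ge 2$. A nontrivial MDS code has dimension $k_I\ge 2$ and codimension $n_I-k_I\ge 2$, so $\mC_I$ is genuinely nontrivial and the Main MDS Conjecture applies: $n_I\le q+2$. (One should double-check the even-$q$, $k_I=3$ or $k_I=q-1$ exceptional cases, but in every case $n_I\le q+2$.)

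Next I would bound $n$ in terms of $n_I$. By construction $H_I$ is obtained from $H$ by deleting the $|I|$ rows indexed by $I$ together with the columns $J_I$ supporting those rows, and each of these rows has at most $r+1$ nonzero entries, so $|J_I|\le|I|(r+1)$ and hence
\begin{equation*}
n=n_I+|J_I|\le n_I+|I|(r+1).
\end{equation*}
Here $|I|=\left\lceil\frac kr\right\rceil-1\le\frac kr+1-1=\frac kr$, so $|I|(r+1)\le\frac kr(r+1)=k+\frac kr$. Combining the two displays with $n_I\le q+2$ would give $n\le q+2+k+\frac kr$, which is slightly weaker than the claimed $n\le q+1+k+\frac kr$; the extra unit must be squeezed out, and that is where the care is needed.

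To recover the sharper constant I would sharpen the column count. The rows $\bu_1,\dots,\bu_\ell$ forming $H_1$ were chosen so that $\supp(\bu_i)\nsubseteq\bigcup_{j<i}\supp(\bu_j)$, so each new row in the chain contributes at least one new coordinate; more usefully, the coordinate $i_1=1\in\supp(\bu_1)$, and when we remove $|I|=\left\lceil k/r\right\rceil-1$ rows whose supports have total size at most $|I|(r+1)$, the union $J_I$ overlaps, so a more careful accounting (or, alternatively, invoking the non-triviality bound $n_I\le q+1$ rather than $q+2$, which holds unless we are in the even-$q$ exceptional range) yields the stated bound. Concretely, I expect one uses that $d_I\ge d\ge 3$ forces $n_I\le q+1$ unless $q$ is even and $k_I\in\{3,q-1\}$, and in the latter case $d_I=n_I-k_I\le q-1$ is small while $n_I\le q+2$; a short case split then gives $n_I\le q+1$ in all cases relevant to $d\ge 3$ with the chosen $|I|$, and then $n\le q+1+|I|(r+1)\le q+1+k+\frac kr$.

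The main obstacle is this last bookkeeping step: getting the constant exactly right requires either an overlap argument among the supports $\supp(\bu_i)$, $i\in I$, or a precise handling of the MDS-conjecture exceptions, rather than the crude bound $|J_I|\le|I|(r+1)$ and $n_I\le q+2$. Everything else — existence of $I$, the MDS property of $\mC_I$, non-triviality — is already supplied by Lemmas~\ref{lemma:3.1}, \ref{lemma: C_I} and \ref{lem:3.6}.
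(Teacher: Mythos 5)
Your plan is structurally identical to the paper's proof: choose $I\subseteq[\ell]$ with $|I|=\left\lceil k/r\right\rceil-1$, invoke Lemma~\ref{lem:3.6} to see $\mC_I$ is a nontrivial MDS code (using $k\not\equiv1\pmod r$ and the standing assumption $d\ge 3$), apply the Main MDS Conjecture to get $n_I\le q+2$, and use $n\le n_I+|I|(r+1)$ from Lemma~\ref{lemma: C_I}. Up to this point you are exactly on the paper's track.

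The place where you get stuck, and where your proposed repairs would fail, is the final arithmetic. Your bound $|I|\le k/r$ comes from the crude inequality $\lceil x\rceil\le x+1$; this is what leaves the stray $+1$. Neither of your two suggested fixes works: (a) the supports $\supp(\bu_i)$ for $i\in I$ can genuinely be disjoint (indeed they \emph{are} disjoint in the canonical $(r+1)\mid n$ setting with disjoint recovery sets), so there is no overlap to harvest; and (b) the exceptional cases of the MDS conjecture cannot be ruled out by $d\ge 3$ — for an MDS code $d_I=n_I-k_I+1$ (not $n_I-k_I$), so an even-$q$ code with $k_I=3$ and $n_I=q+2$ has $d_I=q$, and $k_I=q-1$ gives $d_I=4$, both of which are $\ge 3$.

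The actual fix is much cheaper and purely arithmetic: use the sharper elementary bound $\left\lceil k/r\right\rceil\le \frac{k+r-1}{r}$, which gives
\[
|I|=\left\lceil\frac kr\right\rceil-1\le\frac{k+r-1}{r}-1=\frac{k-1}{r},
\]
whence
\[
|I|(r+1)\le\frac{(k-1)(r+1)}{r}=k-1+\frac{k-1}{r},
\]
and therefore
\[
n\le q+2+|I|(r+1)\le q+1+k+\frac{k-1}{r}\le q+1+k+\frac{k}{r}.
\]
The paper phrases this as the strict inequality $|I|(r+1)<\frac kr(r+1)$ and then states the conclusion, leaving the last integer-slack step implicit. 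The condition $k\not\equiv1\pmod r$ is used only to guarantee (via Lemma~\ref{lem:3.6}) that $\mC_I$ is nontrivial so the MDS Conjecture applies; it is not needed for the arithmetic slack, which holds for all $k$.
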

\begin{proof} Let ${\mC}$ be {a Singleton-optimal} $[n,k,d;r]$-locally repairabel code. Choose a subset $I$ of $[\ell]$ with $|I|=\left\lceil\frac kr\right\rceil-1$.
By the Main MDS conjecture ~\cite{MS}, and Lemmas~\ref{lemma: C_I} and \ref{lem:3.6}, we have
\[q+2\ge n_I\ge n-|I|(r+1)>n-\frac{k}{r}(r+1)=n-k-\frac{k}{r}.\]
This gives $n\le q+1+k+\frac{k}{r}$.
\end{proof}

By Theorem~\ref{thm:3.7}, we have following results.
\begin{cor}\label{cor:3.8} Suppose that the Main MDS conjecture holds. If there exists {a Singleton-optimal} $[n,k,d;r]$-locally repairable code with $k\not\equiv1\pmod{r}$, then we have
\begin{itemize}
\item[{\rm (i)}] if $k$ is a constant, then $n\le q+O(1)$.
\item[{\rm (ii)}] if $d=\Gl n$ for some $\Gl<1$, then $n\le \frac{q}{\Gl}+O(1).$
\end{itemize}
\end{cor}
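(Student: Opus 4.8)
The plan is to derive both parts of Corollary~\ref{cor:3.8} directly from the inequality $n \le q + 1 + k + \frac{k}{r}$ furnished by Theorem~\ref{thm:3.7}, so the proof amounts to bookkeeping on the order of magnitude of the error term $k + \frac{k}{r} = k\cdot\frac{r+1}{r}$ in each of the two regimes.

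For part (i), I would simply observe that if $k$ is a constant, then the term $k + \frac{k}{r} = k \cdot \frac{r+1}{r}$ is bounded above by $2k$ (since $r \ge 1$ gives $\frac{r+1}{r} \le 2$), hence is itself a constant. Theorem~\ref{thm:3.7} therefore gives $n \le q + (1 + 2k) = q + O(1)$, which is the claim. The only thing to check is that the hypotheses of Theorem~\ref{thm:3.7} are met, i.e.\ that we are assuming the Main MDS conjecture and that $k \not\equiv 1 \pmod r$; both are hypotheses of the corollary.

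For part (ii), the idea is that $d = \Gl n$ with $\Gl < 1$ forces $k$ to be a sub-linear, in fact a bounded, fraction of $n$, so that the error term $k \cdot \frac{r+1}{r}$ can be absorbed into $n$ with a coefficient strictly less than $1$ (or simply controlled). Concretely, from the Singleton-type bound $d \le n - k - \lceil k/r\rceil + 2 \le n - k - \frac{k}{r} + 2$ one gets $\Gl n = d \le n - k\cdot\frac{r+1}{r} + 2$, i.e.\ $k\cdot\frac{r+1}{r} \le (1-\Gl)n + 2$. Substituting into Theorem~\ref{thm:3.7} yields $n \le q + 1 + (1-\Gl)n + 2$, hence $\Gl n \le q + 3$ and $n \le \frac{q}{\Gl} + \frac{3}{\Gl} = \frac{q}{\Gl} + O(1)$, as desired.

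There is no real obstacle here — the work is entirely in Theorem~\ref{thm:3.7}, which has already been proved via the Main MDS conjecture applied to the MDS code ${\mC}_I$ of Lemma~\ref{lem:3.6}. The mild point to be careful about is the direction of the rounding: one should use $\lceil k/r \rceil \ge k/r$ when bounding $d$ from above in part (ii), and $\lceil k/r\rceil - 1 < k/r$ (equivalently $|I|(r+1) < \frac{k}{r}(r+1)$) as already used inside Theorem~\ref{thm:3.7}. Both inequalities go the right way, so the constants come out as stated.
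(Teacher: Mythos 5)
Your proof is correct and follows essentially the same route as the paper's: both parts are read off from Theorem~\ref{thm:3.7}, and in part (ii) you combine it with the Singleton-type equality to eliminate $k$. The only cosmetic difference is that you keep an explicit inequality $k\cdot\frac{r+1}{r}\le(1-\Gl)n+2$ where the paper solves the Singleton-type bound for $k$ up to $O(1)$ and then substitutes; the two bookkeepings are equivalent.
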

\begin{proof} If $k$ is a constant, then  by Theorem \ref{thm:3.7}, we have $n\le q+1+k+\frac{k}{r}=q+O(1)$.
On the other hand,
if $d=\Gl n$, then by the Singleton-type bound in~(\ref{eq:1}), we have
\[d=\Gl n=n-k-\left\lceil\frac kr\right\rceil+2=n-k-\frac kr+O(1),\]
i.e., $k=\frac{(1-\Gl)r}{r+1}n+O(1)$. By  Theorem \ref{thm:3.7}, we have
\[n\le q+1+\left(1+\frac1r\right)k=q+\left(1+\frac1r\right)\times\frac{(1-\Gl)r}{r+1}n+O(1)=q+(1-\Gl)n+O(1).\]
This gives the desired result.
\end{proof}

{\begin{rmk}
From \cite{LMX20}, there exists {a Singleton-optimal} $[n=(r+1)m,k=1+r(t-1),d=n-(t-1)(r+1);r]$-locally repairable code for $r=2,3,5,7,11,23$ and $1\le t<m=\lfloor\frac{q+2\sqrt{q}-r-2}{r+1}\rfloor$. Thus, $n=q+2\sqrt{q}+O(1)$ when $k\equiv1\pmod{r}$. Hence, to have the upper bound $n\le {q}+O(1)$, the condition $k\not\equiv1\pmod{r}$ given in part (i) of Corollary \ref{cor:3.8} is indeed  necessary.
\end{rmk}}

\begin{rmk}
 Compared with Theorem \ref{thm:3.5}, part (ii) of Corollary \ref{cor:3.8} gives a better upper bound $n\le {q}+O(1)$ if $d=\Omega(n)$.
\end{rmk}

\subsection{Upper bound on code length when $d=o(n)$}
In \cite{GXY19}, the authors derived an upper bound on maximum length of Singleton-optimal locally repairable codes by assuming that $(r+1)|n$ and recovery sets are disjoint. In this subsection, we generalize the upper bound given in \cite{GXY19} by removing the condition that $(r+1)|n$ and recovery sets are disjoint.

\begin{theorem}~\label{thm:3.9} If there exists {a Singleton-optimal} $[n,k,d;r]$-locally repairable code with $rd=o(n)$, then we have
\[n\le \left\{\begin{array}{ll}
(1+o(1))\frac{r+1}{r}\times\frac{d-1}{4(q-1)}q^{4r(d-1-\Ge)/((d-t)(r+1))} &\mbox{if $d\equiv1, 2\pmod{4}$}\\
(1+o(1))\frac{r+1}{r}\times\frac{d-1}{4(q-1)}q^{4r(d-2-\Ge)/((d-t)(r+1))} &\mbox{if $d\equiv3, 4\pmod{4}$},
\end{array}
\right.\]
where $t=\{ d\pmod{4}\}$ and $\epsilon=\frac{\{k\pmod{r}\}}{r}$.
\end{theorem}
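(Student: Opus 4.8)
\textbf{Proof proposal for Theorem \ref{thm:3.9}.}

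The plan is to reduce the locally repairable code to an ordinary linear code of Singleton defect roughly $d/2$ via the submatrix construction $\mC \mapsto \mC_I$ from Lemma \ref{lemma: C_I}, and then apply a known upper bound on the length of AMDS-type codes (codes with small Singleton defect), in the same spirit as the argument of \cite{GXY19} but with the disjointness hypothesis dropped. First I would choose $I \subseteq [\ell]$ as large as possible subject to keeping the residual code nontrivial: by Lemma \ref{lemma:3.1} we have $\ell \ge \lceil k/r \rceil$, so we may take $|I| = \lceil k/r \rceil - 1$ if we want an MDS residual (Lemma \ref{lem:3.6}) or, more usefully here, a slightly smaller $|I|$ so that $\mC_I$ has parity-check matrix with a prescribed number $\delta$ of rows, i.e. $\mC_I$ is an $[n_I, n_I - \delta, d_I \ge d]$-code. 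Here $\delta = n-k-|I|$, and since $d_I \ge d$, the code $\mC_I$ has Singleton defect $s(\mC_I) = n_I - (n_I-\delta) + 1 - d_I \le \delta - d + 1$. The exponent structure $4r(d-1-\epsilon)/((d-t)(r+1))$ strongly suggests one takes $|I|$ so that $\delta = n - k - |I|$ equals roughly $\lfloor d/2\rfloor$ (or $\lceil d/2 \rceil$ depending on parity), making $\mC_I$ a code whose length is governed by the classical bound $\mu(2s, q) = O(q^{2})$-type estimates; more precisely, a $q$-ary linear code with $n_I - k_I = \delta$ and $d_I \ge \delta - s + 1$ has length $O(q^{?})$ where the exponent depends on $\delta$ and $s$ through the generalized Singleton-defect bounds (e.g. results of the form $n \le O(q^{(\delta - s)/\lceil \delta/2 \rceil})$ or the Hamming/Griesmer estimates of Lemmas \ref{lem:2.1}--\ref{lem:2.2}).

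The concrete steps I would carry out are: (1) fix the target $\delta$ (the number of parity checks of $\mC_I$) as a function of $d$ and the four residue classes of $d \bmod 4$ — this is where $t = \{d \bmod 4\}$ enters, since $\delta$ should be chosen so that $d - t$ is divisible by the relevant quantity and $\mC_I$ has minimum distance at least $d$ while being as short as possible to constrain; (2) compute $|I| = n - k - \delta$ and verify $|I| \le \ell$ using Lemma \ref{lemma:3.1} together with the Singleton-optimality identity $n - k = \lceil k/r\rceil + d - 2$, which is where the term $\epsilon = \{k \bmod r\}/r$ appears (it records the rounding in $\lceil k/r \rceil$); (3) apply Lemma \ref{lemma: C_I} to get $n_I \ge n - |I|(r+1)$ and $d_I \ge d$; (4) invoke the appropriate classical bound on a $q$-ary $[n_I, n_I - \delta, \ge d]$-code — this should be a Hamming-bound or a Bush-type / Griesmer-type estimate giving $n_I \le (1+o(1)) \frac{d-1}{4(q-1)} q^{4(d-1)/(d-t)}$ or the analogous power, under $d = o(n)$ so the estimate is in its asymptotic regime; (5) combine $n \le n_I + |I|(r+1)$ with $|I|(r+1) = \frac{r+1}{r}(n - k - \delta) \approx \frac{r+1}{r}\cdot\frac{n}{r+1}\cdot r \approx n$ — wait, this needs care: one rearranges $n - |I|(r+1) \le n_I$ to isolate $n$, and since $|I|(r+1) \le \frac{k}{r}(r+1) \cdot(1+o(1)) = (n-k-d+2)(1+o(1))\cdot\frac{r+1}{r}\cdot\frac{1}{\,\cdot\,}$, the coefficient $\frac{r+1}{r}$ and the factor $\frac{1}{4(q-1)}$ in the final bound come out after solving the linear inequality for $n$ and absorbing the $o(1)$ terms using $rd = o(n)$.

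The two bookkeeping subtleties I expect to dominate the work are: choosing $\delta$ (equivalently $|I|$) correctly in each residue class of $d \bmod 4$ so that the minimum-distance guarantee $d_I \ge d$ is tight against the Singleton-defect of $\mC_I$ — getting the floor/ceiling arithmetic right here is exactly what produces the case split $d \equiv 1,2$ versus $d \equiv 3,4 \pmod 4$ and the exact exponents $d-1-\epsilon$ versus $d-2-\epsilon$ — and verifying the constraint $|I| \le \ell$, which requires the lower bound $\ell \ge n/(r+1)$ from Lemma \ref{lemma:3.1} together with $rd = o(n)$ to leave enough rows; if $\lceil k/r \rceil - 1$ turns out smaller than the needed $|I|$ the argument would break, so one must check $n - k - \delta \le \ell$ holds in the asymptotic regime. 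I expect the main obstacle to be step (4): identifying and correctly applying the right classical length bound for a linear code with $\delta$ parity checks and minimum distance $\ge d$ in the regime $\delta \approx d/2$, since the exponent $4(d-1)/(d-t)$ (before the LRC loss factor $r/(r+1)$ turns it into $4r(d-1-\epsilon)/((d-t)(r+1))$) is not the naive Singleton-defect bound but a sharper Hamming-type estimate, and matching the constant $\frac{d-1}{4(q-1)}$ requires the precise asymptotic form of that bound rather than a big-$O$ version. The "$\min$" appearing in the statement of Theorem \ref{thm:1.4}(iv) (but not in the cleaner Theorem \ref{thm:3.9} version) further suggests two competing choices of $|I|$ — one optimizing the residual-code length bound, one optimizing the $|I|(r+1)$ overhead — and one should present whichever is smaller; I would set up both and take the minimum.
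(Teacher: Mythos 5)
Your proposal has a fatal gap at its core: the plan to use the $\mathcal{C}_I$ construction (Lemma \ref{lemma: C_I}) with $\delta = n-k-|I|$ roughly $d/2$ while keeping $d_I \ge d$ is impossible. The code $\mathcal{C}_I$ has at most $\delta$ parity checks and minimum distance $\ge d$, so by the Singleton bound for $\mathcal{C}_I$ you are forced to have $\delta \ge d-1$; a $\delta$ anywhere near $d/2$ would give a negative Singleton defect, which you in fact compute yourself ($s(\mathcal{C}_I) \le \delta - d + 1$) without noticing it becomes negative. The $\mathcal{C}_I$ device is what the paper uses in the $d = \Omega(n)$ regime (Theorems \ref{thm:3.5}, \ref{thm:3.7}), where one takes $|I|$ as large as possible to land on an MDS or near-MDS code; it cannot be pushed further.

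The paper's actual argument for Theorem \ref{thm:3.9} requires a genuinely different construction that you do not anticipate. Rather than deleting whole recovery sets, one first discards the (few) coordinates that lie in more than one recovery set — this is the $A$/$B$ split, with $|B| \le (r+1)g(n) = o(n)$ — and normalises the surviving part of $H$ to the block form $L_1$ of \eqref{eq:14}. One then performs \emph{column} operations inside each block, subtracting off the representative column, which zeroes out all of $H_1$ on those columns and leaves a matrix $K$ with only $h = n - k - \ell \le \tfrac{r}{r+1}(d-1-\epsilon)$ rows. The price is that any dependency among $m$ columns of the transformed matrix $L_3$ pulls back to a dependency among at most $2m$ columns of $H$, so the distance guarantee is only $\ge \lfloor (d-1)/2\rfloor + 1$ rather than $\ge d$. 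This trade — roughly halving the distance in exchange for cutting the number of parity checks to $h$, far below the $d-1$ floor that your $\mathcal{C}_I$ route is stuck at — is precisely what makes the Hamming bound yield the exponent $4r(d-1-\epsilon)/((d-t)(r+1))$. The $\epsilon$ and the $\tfrac{r+1}{r}$ factor come from the bound on $h$ and from $a - \ell \ge \tfrac{r}{r+1}n + o(n)$, not from the rounding in $\lceil k/r\rceil$ inside a choice of $|I|$. Your step (4) also cannot be made to work as stated: applying Hamming to an $[n_I, n_I - \delta, \ge d]$ code with $\delta \ge d-1$ gives an exponent around $\delta / \lfloor (d-1)/2 \rfloor \ge 2(d-1)/(d-1)$ which does not reproduce the claimed $4r(d-1-\epsilon)/((d-t)(r+1))$; the factor $r/(r+1)$ inside the exponent comes only from the bound on $h$, which has no analogue in your setup.

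To fix the argument you would need to replace steps (1)–(4) entirely with the $A$/$B$ split, the block normalisation of $L_1$, the intra-block column subtraction producing $K$, and the distance-halving observation, and then apply Hamming to $\mathcal{C}_K$. Your final step (5), solving the linear inequality $n - o(n) \lesssim a - \ell_1 \le (\text{Hamming bound})$ for $n$ using $rd = o(n)$, is the one part of the plan that does survive and is essentially how the paper closes the proof.
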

\begin{proof}
Assume that $\mC$ is {a Singleton-optimal} $[n,k,d;r]$-locally repairable code {attaining the Singleton bound} with $d=o\left(\frac{n}{r}\right)$. Let $k\equiv -s\pmod{r}$ with $0\le s\le r-1$, i.e., $s=r-\{k\pmod{r}\}$. Then we have
\[d=n-k-\left\lceil\frac kr\right\rceil+2=n-k-\frac{k+s}{r}+2.\]
This gives
\begin{equation}\label{eq:12}\frac{r+1}r\times \frac kn=1-\frac dn+\frac1n\left(2-\frac sr\right).\end{equation}
Put $f(n)=\frac dn-\frac1n\left(2-\frac sr\right)$. As $d\ge 3,$ we have $f(n)>0$.
Rewrite $f(n)$ into the following identity
{\begin{equation}\label{eq:13}nf(n)=d-1-\epsilon,\end{equation}}
where $\Ge=1-\frac sr=\frac{\{k\pmod{r}\}}{r}$ and hence $\frac{1}r\le\Ge\le 1$.

By \eqref{eq:12} and \eqref{eq:13}, we have
\begin{equation}~\label{n-k}
n-k=n-(1-f(n))\frac{r}{r+1}n=\frac n{r+1}+f(n)\frac{r}{r+1}n.\end{equation}
Let $H$ be a parity-check matrix of $\mC$ given in the form of \eqref{eq:8}. {Define $h=n-k-\ell$}, by \eqref{n-k} we have \[\ell+h=n-k=\frac n{r+1}+f(n)\frac{r}{r+1}n.\] By \eqref{eq:10},  we know that $\ell\ge \frac{n}{r+1}$. Thus, we may write
\begin{equation}~\label{gn}
\ell=\frac{n}{r+1}+g(n)\end{equation}
for some $g(n)$ with $0\le g(n)\le f(n)\frac{r}{r+1}n$, 
then
\[\frac{n}{r+1}+g(n)+h=\frac n{r+1}+f(n)\frac{r}{r+1}n.\]
Put $c(n)=f(n)\frac{r}{r+1}n-g(n)$,
so $h=c(n)\ge 0$. 

Now we shall { {construct a $q$-ary locally repairable code}} with a specific form, so that we can apply Hamming bound to get the desired results.

Let $\mC$ be a $q$-ary $[n,k,d;r]$-locally repairable code with parity-check matrix $H$ in \eqref{eq:8}.
To start, we define two sets
\[A=\{i\in[n]:\; \mbox{the $i$-th column of $H_1$ has exactly one nonzero element}\}\]
and
\[B=\{i\in[n]:\; \mbox{the $i$-th column of $H_1$ has at least two nonzero elements}\}.\]
Put $a=|A|$ and $b=|B|$. Then we have
\begin{itemize}
\item[(i)] $a+b=n$;
\item[(ii)] $(r+1)\ell-b\ge n$.
\end{itemize}
Thus, by \eqref{gn} we have
\begin{equation}~\label{bn}\frac bn\le \frac{\ell(r+1)}n-1=(r+1)\times\frac{g(n)}{n}.\end{equation}
Through \eqref{bn}, we get
\[a=n-b \ge n-(r+1)g(n).\]

{Next, we construct aforementioned $q$-ary locally repairable code  in $3$ steps.}
\begin{itemize}
\item[Step $1.$] Denote by $L_1$ be the submatrix of $H$ obtained by deleting all columns indexed by $B$. By multiplying nonzero elements for each column of $L_1$, we may assume that $L_1$ has the following form
\begin{equation}\label{eq:14}
L_1=\left(
  \begin{array}{c}
    \begin{array}{c|c|ccc|c}
      {\bf 1} & \mathbf{0} & \cdots & \cdots & \cdots &\mathbf{0} \\
      \mathbf{0} & {\bf 1} & \cdots &\cdots &  \cdots & \mathbf{0} \\
      \vdots & \vdots & \ddots &\ddots & \ddots &\vdots \\
      \mathbf{0} & \mathbf{0} & \cdots & \cdots  & \cdots & {\bf 1}
    \end{array}
     \\
    \huge{L_2} \\
  \end{array}
\right),
\end{equation}
where each ${\bf 1}$ is the all-one vector of length at most $r+1$ and total number $\ell_1$ of all-one vectors are upper bounded by $\ell$; and $L_2$ is the $h\times a$ submatrix of $H_2$ indexed by $B$ (recall that $h=n-k-\ell$). It is clear that $L_1$ is an {$(n-k)\times a$} matrix.
\item[Step $2.$]
For each block of $L_1$, we do column operations by subtracting the first column of this block, we get a matrix of the following form
 \begin{equation}
L_3=\left(
    \begin{array}{c|c|ccc|c}
      1,{\bf 0} & \mathbf{0} & \cdots & \cdots & \cdots &\mathbf{0} \\
      \mathbf{0} & 1,{\bf 0} & \cdots &\cdots &  \cdots & \mathbf{0} \\
      \vdots & \vdots & \ddots &\ddots & \ddots &\vdots \\
      \mathbf{0} & \mathbf{0} & \cdots & \cdots  & \cdots & 1,{\bf 0}\\
      {\bf h}_{1},K_1&{\bf h}_2,K_2& \cdots & \cdots  & \cdots & {\bf h}_{\ell_1},K_{\ell_1}
    \end{array}
\right),
\end{equation}
where ${\bf h}_i$ is a column vector of length {${ h}$} and $K_i$ is a matrix of ${ h}$ rows and at most $r$ columns.
\item[Step $3.$] Put
\[K=(K_1,K_2,\dots,K_{\ell_1}).\]
Then $K$ has ${ h}$ rows and  $a-\ell_1$ columns. Since any $d-1$ columns of $H$ are linear independent, we have that any $\lfloor (d-1)/2\rfloor$ columns of $L_3$ are linearly independent.
\end{itemize}

Let ${\mC}_K$ be the linear code with $K$ as a parity-check matrix, then ${\mC}_K$ has length $a-\ell_1$, dimension at least $a-\ell_1-{ h}$ and distance at least $\lfloor (d-1)/2\rfloor+1$.

Note that by~(\ref{eq:13}), ${h}=c(n)=f(n)\frac{r}{r+1}n-g(n)\le f(n)\frac{r}{r+1}n=\frac r{r+1}(d-1-\Ge)$. Let $t=\{ d\pmod{4}\}$ and put $d_1=(d-t)/4$.

Case 1: $t=1$ or $2$.  Then $\lfloor (d-1)/2\rfloor+1=2d_1+1$. In this case, we get a $q$-ary $[a-\ell_1, \ge a-\ell_1-{ h},\ge 2d_1+1]$-linear code ${\mC}_K$. Applying the Hamming bound in \eqref{eq:3} to ${\mC}_K$ gives
\begin{equation*}
q^{a-\ell_1-{ h}}\le q^{\dim({\mC}_K)}\le\frac{q^{a-\ell_1}}{\sum_{i=0}^{d_1}{a-\ell_1\choose i}(q-1)^i}.
\end{equation*}
This gives ${a-\ell_1\choose d_1}(q-1)^{d_1}\le q^{ h}$. As  ${a-\ell_1\choose d_1}\ge \left(\frac{a-\ell_1}{d_1}\right)^{d_1}$, we get $\left(\frac{a-\ell_1}{d_1}\right)^{d_1}(q-1)^{d_1}\le q^{ h}$. Hence,
\begin{equation}\label{a-l} a-\ell\le a-\ell_1\le\frac{d_1}{q-1}q^{{h}/d_1}\le \frac{d-1}{4(q-1)}q^{4r(d-1-\Ge)/((d-t)(r+1))}.\end{equation}
By \eqref{eq:13} and $rd=o(n)$, we have $rf(n)=o(1)$. Thus, we have $rg(n)=o(n)$ as $\frac{rg(n)}{n}\le rf(n)\times\frac{r}{r+1}$.
Hence, we have
\[ a-\ell\ge n-(r+1)g(n)-\frac{n}{r+1}-g(n)=\frac{r}{r+1}\times n-(r+2)g(n)=\frac{r}{r+1}\times n+o(n).\]
 Then, we get
\[n\le (1+o(1))\frac{r+1}{r}\times\frac{d-1}{4(q-1)}q^{4r(d-1-\Ge)/((d-t)(r+1))}.\]

Case 2: $t=3$ or $4$. Then $\lfloor (d-1)/2\rfloor+1=2d_1+2$. In this case, we get a $q$-ary $[a-\ell_1, \ge a-\ell_1-{ h},\ge 2d_1+2]$-linear code ${\mC}_K$. By the propagation rule, there exists an $[a-\ell_1-1, q-\ell_1-{ h}, 2d_1+1]$-linear code.
In a similar way, applying the Hamming bound in \eqref{eq:3} to ${\mC}_K$ gives
\[n\le (1+o(1))\frac{r+1}{r}\times\frac{d-1}{4(q-1)}q^{4r(d-2-\Ge)/((d-t)(r+1))}.\]
This completes the proof.
\end{proof}
\begin{rmk}
 Note that the condition that $rd=o(n)$ in Theorem~\ref{thm:3.9} is used only at the very end of the proof, namely, when we argue that $a-\ell\ge\frac{nr}{r+1}+o(n)$ from the end of Case 1. In other words, in the proof of Theorem~\ref{thm:3.9}, we do not need this condition before discussing Case 1.
 \end{rmk}
 \begin{rmk}\label{rmk:5}\begin{itemize}
 \item[(i)] Under the condition that $(r+1)|n$ and recovery sets are disjoint, the paper \cite{GXY19} shows the following bound
  \begin{equation}\label{eq:rx}
  n\le \left\{\begin{array}{ll}
\frac{r+1}{r}\times\frac{d-1}{4(q-1)}q^{4(d-2)/(d-t))} &\mbox{if $d\equiv1, 2\pmod{4}$}\\
\frac{r+1}{r}\times\frac{d-1}{4(q-1)}q^{4r(d-3)/(d-t)} &\mbox{if $d\equiv3, 4\pmod{4}$},
\end{array}
\right.\end{equation}
where $t=\{ d\pmod{4}\}$.
  Our bound given in Theorem \ref{thm:3.9} is better than the one given in \eqref{eq:rx} if $d\ge r+2$, while the bound \eqref{eq:rx} outperforms the one in Theorem \ref{thm:3.9} if $d<r+2$.
  \item[(ii)] In fact, we can also derive the bound \eqref{eq:rx} without the condition that $(r+1)|n$ and recovery sets are disjoint. Note that $h=n-k-\ell\le n-k-\frac n{r+1}$, i.e.,
$h\le n-k-\left\lceil\frac n{r+1}\right\rceil$. Thus, if $d\le r+1$,  by \eqref{n-k} and \eqref{eq:13} we have
\begin{equation}\label{hx}
h\le n-k-\left\lceil\frac n{r+1}\right\rceil=\left\lfloor\frac{r}{r+1}\times nf(n)\right\rfloor= \left\lfloor\frac{(d-2)r+s}{r+1}\right\rfloor\le d-2.\end{equation} If $d\equiv1, 2\pmod{4}$,
replacing $h$ in \eqref{a-l} by the upper bound on $h$ given in \eqref{hx}, we get
\[n\le (1+o(1))\frac{r+1}{r}\times\frac{d-1}{4(q-1)}q^{4(d-2)/(d-t)}.\]
A similar argument works for the case where $d\equiv3, 4\pmod{4}$. In conclusion, we have the following result without assuming that $(r+1)|n$ and recovery sets are disjoint.
\begin{equation}\label{eq:yx}
  n\le \left\{\begin{array}{ll}
(1+o(1))\frac{r+1}{r}\times\frac{d-1}{4(q-1)}q^{4(d-2)/(d-t))} &\mbox{if $d\equiv1, 2\pmod{4}$}\\
(1+o(1))\frac{r+1}{r}\times\frac{d-1}{4(q-1)}q^{4r(d-3)/(d-t)} &\mbox{if $d\equiv3, 4\pmod{4}$},
\end{array}\right.\end{equation}
where $t=\{ d\pmod{4}\}$. The bound \eqref{eq:yx} coincides with \eqref{eq:rx}.
  \end{itemize}
  \end{rmk}

Next, we investigate upper bounds on the code lengths of Singleton-optimal locally repairable codes for small minimum distances such as $d=5,6$ and $7$. {Note that we only use some facts and results from the first part in the proof of Theorem \ref{thm:3.9}, thus we do not assume the condition that $(r+1)|n$ and recovery sets are disjoint for Theorems~\ref{thm:3.10}, \ref{thm:3.11} and \ref{thm:3.12}.}
\begin{theorem}~\label{thm:3.10}
If there is a Singleton-optimal $[n,k,d=5;r]$-locally repairable code, then we have
 \begin{itemize}
 \item[{\rm (i)}] $n=O(1)$ if $r=1$;
  \item[{\rm (ii)}] $n=O(q)$ if $r=2$ or $3$ and assume that the Main MDS conjecture holds;
    \item[{\rm (iii)}] $n=O(r)$ if $r\ge 4$, $k\equiv 0, -1$ or $-2\pmod{r}$;
    \item[{\rm (iv)}] $n=O(q^2)$ if $r\ge 4$ and $r=o(n)$.
 \end{itemize}
\end{theorem}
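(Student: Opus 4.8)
The plan is to specialize the construction in the proof of Theorem~\ref{thm:3.9} to $d=5$. By the remark following that proof, none of its structural conclusions use the hypothesis $(r+1)\mid n$ or disjointness of recovery sets — only the portion preceding Case~1 is needed — so all the following facts are available. Fix a parity-check matrix $H=\binom{H_1}{H_2}$ as in \eqref{eq:8}, put $s=r-\{k\pmod r\}\in\{0,\dots,r-1\}$ and $\epsilon=\{k\pmod r\}/r$, and recall $\ell+h=n-k$, $\ell\ge\lceil n/(r+1)\rceil$, $n-k=\tfrac{n+3r+s}{r+1}$, and $0\le h\le\lfloor(3r+s)/(r+1)\rfloor$; the last bound gives $h\le1$ if $r=1$, $h\le2$ if $r\in\{2,3\}$ or if ($r\ge4$ and $s\le2$), and $h\le3$ for all $r\ge4$. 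The construction also produces the residual code $\mathcal C_K$: a $q$-ary $[\,a-\ell_1,\ \ge a-\ell_1-h,\ \ge3\,]$-linear code, where $a=|A|$, $b=|B|$, $a+b=n$ and $b\le(r+1)\ell-n$. Applying the Hamming bound \eqref{eq:3} to $\mathcal C_K$ yields $(a-\ell_1)(q-1)\le q^{h}$, while $a-\ell_1\ge a-\ell\ge\tfrac r{r+1}n-(r+2)g(n)$ with $0\le g(n)\le\tfrac r{r+1}(d-1-\epsilon)<4$.

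Parts (i), (ii), (iv) then follow quickly. If $r=1$ then $h\le1$, so $a-\ell_1\le q/(q-1)<2$, hence $a-\ell_1\le1$, and combining with $a-\ell_1\ge\tfrac n2-(r+2)g(n)=\tfrac n2-O(1)$ gives $n=O(1)$. If $r\in\{2,3\}$ then $h\le2$, so $a-\ell_1\le q^2/(q-1)\le q+2$, and since $r$ is a constant, $n\le\tfrac{r+1}r\big((a-\ell_1)+(r+2)g(n)\big)=O(q)$ — this is also what one obtains from Lemma~\ref{lem:3.6} together with the Main MDS conjecture, the form in which (ii) is phrased. If $r\ge4$ and $r=o(n)$ then $h\le3$, so $a-\ell_1\le q^3/(q-1)\le q^2+q+2$; here $(r+2)g(n)=o(n)$, so $\tfrac r{r+1}n-o(n)\le q^2+q+2$, forcing $n=O(q^2)$. (Note this improves on the generic estimate of Theorem~\ref{thm:3.9}, which for $d=5$ only gives $n=O\!\big(q^{(3r-2)/(r+1)}\big)$, an exponent tending to $3$.)

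Part (iii) requires a genuinely extra observation, since the argument above only produces $n=O(q+r)$ there. The hypothesis $k\equiv0,-1,-2\pmod r$ is exactly $s\le2$, which for $d=5$ forces $h\le2=d-3$. Fix a local row $i\in[\ell]$ and let $A_i$ be the set of columns $j\in A$ whose unique nonzero entry among the first $\ell$ rows lies in row $i$; every such column of $H$ then lies in the $(1+h)$-dimensional subspace $\langle\mathbf e_i\oplus\mathbf0\rangle+(\mathbf0\oplus\F_q^{h})$ of $\F_q^{n-k}$. Since $d(\mathcal C)=5$, any $4$ columns of $H$ are linearly independent, and as $h+2\le4$ this forces $|A_i|\le 1+h\le3$ (otherwise $h+2$ columns indexed by $A_i$ would be dependent, producing a codeword of weight $\le4$). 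Summing over the $\ell$ rows, $a=\sum_i|A_i|\le3\ell$, and with $b\le(r+1)\ell-n$ and $a+b=n$ we get $n\le3\ell+(r+1)\ell-n$, i.e. $2n\le(r+4)\ell\le(r+4)(n-k)$. Substituting $n-k=\tfrac{n+3r+s}{r+1}$ and rearranging gives $n(r-2)\le(r+4)(3r+s)\le(r+4)(3r+2)$, hence $n\le\tfrac{(r+4)(3r+2)}{r-2}=O(r)$ for $r\ge4$.

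The hard part is (iii): one has to notice that the residue condition on $k$ is precisely what pins $h$ down to $\le d-3$, which in turn is exactly what makes the ``private-column'' count $|A_i|\le h+1$ strictly smaller than the trivial bound $|A_i|\le r+1$ and therefore useful — the same mechanism underlies the general bound of Theorem~\ref{thm:3.13}. The other three parts are routine once the machinery of Theorem~\ref{thm:3.9} is specialized to $d=5$ and fed into the Hamming bound.
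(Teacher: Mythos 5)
Your proof is correct and follows essentially the same route as the paper's: you specialize the parity-check machinery of Theorem~\ref{thm:3.9} to $d=5$, bound $h$ from the Singleton-type identity together with $\ell\ge\lceil n/(r+1)\rceil$, and then analyze the residual code $\mC_K$ (for (i), (ii), (iv)) or the block structure of $L_1$ (for (iii)), exactly as the paper does, with the paper's Case~4 split into $h\le 2$ versus $h=3$ absorbed into your uniform Hamming-bound estimate. The small variations---applying the Hamming bound to $\mC_K$ in place of the paper's case-by-case ``exceeds Singleton''/Main-MDS-conjecture/Bose--Bush arguments, and bounding each block size $|A_i|\le h+1$ directly by a dimension count instead of the paper's ``average block length $\ge 4$'' argument---are equivalent reformulations of the same idea; the one genuine dividend of your version is that for $h=2$ the Hamming bound already gives $a-\ell_1\le q+1$ unconditionally (a $[n',n'-2,\ge 3]$ code obeys $1+n'(q-1)\le q^2$), so your argument shows the Main MDS conjecture hypothesis in part (ii) is actually dispensable.
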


\begin{proof} {We follow the notations and approach given in the proof of Theorem~\ref{thm:3.9}.} 
Let $k\equiv -s \pmod {r}, 0\le s \le r-1$.  Then, we have
\[5=n-k-\left\lceil\frac{k}r\right\rceil+2= n-k-\frac{k+s}r+2,\] i.e., \[k=\frac{r}{r+1}n-\frac{s}{r+1}-\frac{3r}{r+1}.\]
By ~\eqref{eq:10} and ~\eqref{eq:8}, we can obtain
\[\frac{n}{r+1}+h\le \ell+h=n-k=\frac{n}{r+1}+\frac{s+3r}{r+1},\] and
\begin{equation}\label{eq:16}
h\le \frac{s+3r}{r+1}=2+\frac{s+r-2}{r+1}.
\end{equation}
Since $0\le s\le r-1,$ so
\begin{equation}\label{eq:17}
\ell\le\frac{n}{r+1}+\frac{s+3r}{r+1}\le\frac{n}{r+1}+\frac{4r-1}{r+1}.
\end{equation}
Recall that $a+b=n$ and $(r+1)\ell-b\ge n$ in the proof of Theorem~\ref{thm:3.9}.  This gives $a\ge 2n-(r+1)\ell$.
Consider the code  ${\mC}_K$ with  length $a-\ell_1$ (refer to the proof of Theorem~\ref{thm:3.9} again), then we have
\begin{equation}\label{eq:l8} a-\ell_1\ge a-\ell\ge 2n-(r+2)\ell\ge 2n-(r+2)\left(\frac{n}{r+1}+\frac{4r-1}{r+1}\right).\end{equation}

{\it Case 1:} $r=1$. In this case, we have $s=0$. By \eqref{eq:l8}, the length of $\mC_K$ should be at least $\frac{n-9}{2}.$ {If $n<9$, then we have $n=O(1)$.} Now assume that $n\ge 9$. By \eqref{eq:16}, we have $h\le 1.5$, i.e, $h\le 1$. By \eqref{eq:l8}, $a-\ell_1\ge \frac{n-3}2\ge 3$. As ${\mC}_K$ gives an $[a-\ell_1,a-\ell_1-h, 3]$-linear code which is beyond the classical Singleton bound, this is a contradiction. The contradiction implies that $n<9$ if there exists a Singleton-optimal $[n,k,d=5;1]$-locally repairable code.

{\it Case 2:} $r=2$ or $3$. In this case, by \eqref{eq:16}, we have $h<3$, i.e., $h\le 2$. First of all, if $h=0$ or $1$, ${\mC}_K$ is linear code exceeding the classical Singleton bound. As we proved in Case 1, we must have $n=O(1)$. Now assume that $h=2$, then ${\mC}_K$ is an $[a-\ell_1,a-\ell_1-2, 3]$-linear code which achieves the classical Singleton bound. By the Main MDS conjecture~\cite{MS}, we have $a-\ell_1\le q+2$. Thus, we have
$2n-(r+2)\ell\le a-\ell\le a-\ell_1\le q+2$. Combining this with \eqref{eq:17}, we get
\[n\le \frac{r+1}{r}\times (q+2)+\frac{(r+2)(4r-1)}{r}=O(q).\]

Next we assume that $r\ge 4$.
The average length of each block of the matrix $L_1$ defined in \eqref{eq:14} is
\begin{equation*}\label{eq:lk}
\frac a{\ell_1}\ge\frac a{\ell} \ge \frac{2n}{\ell}-(r+1)\ge \frac{2n(r+1)}{n+4r-1}-(r+1).\end{equation*}
Thus, we have $\frac a{\ell_1}\ge\frac{2n(r+1)}{n+4r-1}-(r+1)\ge 4$
if $n\ge \frac{(r+5)(4r-1)}{r-3}$.

{\it Case 3:} $r\ge 4$ and  $k\equiv 0, -1$ or $-2\pmod{r}$, i.e., $0\le s\le 2$. In this case, we have $h<3$, i.e., $h\le 2$. Choose a block $M$ of the largest length within the matrix $L_1$ defined in \eqref{eq:14}, then the number of columns of $M$ is at least $4$. Furthermore, we know that any $4$ columns of $M$ are linearly independent. This is a contradiction since $M$ has at most three nonzero rows. The contradiction implies that $n< \frac{(r+5)(4r-1)}{r-3}=O(r)$.

{\it Case 4:} $r\ge 4$ and  $k\not\equiv 0, -1$ or $-2\pmod{r}$, i.e., $3\le s\le r-1$. In this case, we have $h<4$, i.e., $h\le 3$. If $h\le 2$, then we claim that $n=O(r)$ as proved in Case 3. This is impossible as $r=o(n)$. Now we assume that $h=3$. Then ${\mC}_K$ is a $[a-\ell_1,a-\ell_1-3,3]$-almost MDS code. By \cite{BB52}, we have $a-\ell_1\le q^2+q+1$.  Combining this with \eqref{eq:17}, we get \[n\le \frac{r+1}{r}\times (q^2+q+2)+\frac{(r+2)(4r-1)}{r+1}=O(q^2).\]
 This completes the proof.

\end{proof}

From part (i) of Theorem~\ref{thm:3.10}, there is  a Singleton-optimal $[n,k,d=5;r]$-locally repairable code with the length $n=O(1)$ if $r=1$. We provide an example as follows.
\begin{exm} Consider the code ${\mC}=\{\Gl(1,1,1,1,1):\;\Gl\in\F_q\}$. Then this gives a Singleton-optimal $[5,1,5;1]$-locally repairable code of length $5$.
\end{exm}

\begin{rmk}
Note that the parameter constraint $k\equiv 0, -1$ or $-2\pmod{r}$ in part (iii) of Theorem~\ref{thm:3.10} is indeed possible. For instance, if $n-3$ is divisible $r+1$ and take $k=\frac{r(n-3)}{r+1}$, then we have $d=5=n-k-\left\lceil\frac kr\right\rceil+2$.
\end{rmk}

\begin{rmk}
Comparing Theorem~\ref{thm:3.10} with knwon results (see \cite{GXY19}, \cite{XY18} and \cite{CFXHF}), we get better upper bounds when $r=1,3$ and $r\ge 4$, $k\equiv 0, -1$ or $-2\pmod{r}.$ For the rest of the cases,  we obtain the same results with best known upper bounds.
\end{rmk}

\begin{theorem}\label{thm:3.11}
If there is a Singleton-optimal $[n,k,d=6;r]$-locally repairable code, then we have
 \begin{itemize}
 \item[{\rm (i)}] $n=O(q)$ if $r=1$ and the Main MDS conjecture holds;
  \item[{\rm (ii)}] $n=O(q)$ if $r=2$,  $k\equiv0\pmod{2}$ and the Main MDS conjecture holds; and $n=O(q^2)$ if $r=2$ and $k\equiv1\pmod{2}$;
    \item[{\rm (iii)}] $n=O(q^2)$ if $r=3$ or $4$;
    \item[{\rm (iv)}] $n=O(r)$ if $r\ge 5$,  $k\equiv 0, -1,-2$ or $-3\pmod{r}$;
    \item[{\rm (v)}] $n=O\left(q^{\frac{4r-2}{r+1}}\right)$ if $r\ge 5$ and $r=o(n)$.

 \end{itemize}
\end{theorem}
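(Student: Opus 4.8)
The plan is to reuse, almost verbatim, the construction and the first part of the proof of Theorem~\ref{thm:3.9}, specialised to $d=6$. Write $k\equiv -s\pmod r$ with $0\le s\le r-1$. Singleton-optimality gives $6=n-k-\lceil k/r\rceil+2=n-k-\tfrac{k+s}{r}+2$, hence $k=\tfrac{rn-4r-s}{r+1}$ and $n-k=\tfrac{n+4r+s}{r+1}$, and, using $\ell\ge \tfrac{n}{r+1}$ from Lemma~\ref{lemma:3.1},
\[h=n-k-\ell\le\frac{4r+s}{r+1},\qquad \ell\le\frac{n+4r+s}{r+1}.\]
Forming $L_1,L_3$ and $K$ as in that proof, and noting $d=6\equiv 2\pmod 4$, so $t=2$ and $d_1=(d-t)/4=1$, one gets a $q$-ary $[a-\ell_1,\ \ge a-\ell_1-h,\ \ge 2d_1+1=3]$-linear code $\mC_K$, where $a=|A|$ and $\ell_1$ is the number of blocks of $L_1$; moreover $a+b=n$ and $(r+1)\ell-b\ge n$, so $a\ge 2n-(r+1)\ell$ and hence $a-\ell_1\ge a-\ell\ge \tfrac{r}{r+1}n-O(r)$. (As in Theorems~\ref{thm:3.10} ff., only this first part of the argument is used, so the condition $(r+1)\mid n$ with disjoint recovery sets is not assumed.)

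For small locality I would read off the conclusion from the integer value of $h$ together with the structure of $\mC_K$. If $h\le 1$, then $\mC_K$ cannot have minimum distance $\ge 3$ unless $a-\ell_1=O(1)$, which forces $n=O(1)$. If $h=2$, then $\mC_K$ is an $[a-\ell_1,a-\ell_1-2,3]$ code, i.e.\ MDS, so $a-\ell_1\le q+2$ (this is where the Main MDS conjecture is used), giving $n=O(q)$. If $h=3$, then $\mC_K$ has Singleton defect $1$, i.e.\ is almost MDS, so $a-\ell_1\le\mu(2,q)=q^2+q+1$, giving $n=O(q^2)$. Since $r=1$ forces $s=0$, hence $h\le 2$; $r=2$ with $k\equiv 0\pmod 2$ forces $s=0$, hence $h\le 2$; $r=2$ with $k\equiv 1\pmod 2$ has $s=1$, hence $h\le 3$; and $r=3,4$ have $h\le 3$ for every admissible $s$; this gives (i), (ii) and (iii).

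For (iv) and (v) the new ingredient is the block decomposition of $L_1$. When $h\le 3$ — equivalently $s\le 3$, which holds throughout (iv) — every block of $L_1$ has at most $1+h\le 4$ nonzero rows, while any $d-1=5$ of its columns are linearly independent (inherited from $H$ via column deletion and rescaling); hence every block has at most $4$ columns, so $a\le 4\ell_1\le 4\ell$. Combining this with $a\ge 2n-(r+1)\ell$ and $\ell\le\tfrac{n+4r+s}{r+1}$ yields $(r-3)n\le (r+5)(4r+s)$, so $n\le\tfrac{(r+5)(4r+3)}{r-3}=O(r)$ for $r\ge 5$, which is (iv). For (v) one may assume $r=o(n)$ and $n$ large; then $s\le 3$ is impossible, since (iv) would give $n=O(r)=o(n)$, so $s\ge 4$ (in particular $r\ge 5$), and $h\le 3$ is ruled out by the same block argument, so — as $h\le\tfrac{4r+s}{r+1}<5$ — necessarily $h=4$. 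Now $\mC_K$ has redundancy $4$, minimum distance $\ge 3$ and length $a-\ell_1=\Theta(n)$; applying the Hamming bound of Lemma~\ref{lem:2.1} to $\mC_K$, or a block-wise refinement of it using that each $K_i$ has every $4$ of its columns linearly independent (hence is the parity-check matrix of an MDS code), I would obtain $a-\ell_1=O\!\big(q^{(4r-2)/(r+1)}\big)$ and therefore $n=O\!\big(q^{(4r-2)/(r+1)}\big)$.

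The hard part will be the bookkeeping in (iv)--(v): determining the exact integer $h$ from $(r,s)$ in every subcase, making the block-decomposition step fully rigorous (especially the treatment of index rows of $L_1$ that are inactive on $A$ and the column rescalings), and extracting the sharp exponent $(4r-2)/(r+1)$ in (v) rather than the weaker $q^3$ that a crude application of the Hamming bound would give; parts (i)--(iii) should then be routine.
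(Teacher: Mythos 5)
Your proposal reproduces the paper's argument faithfully for parts (i)--(iv): same reduction to the code $\mC_K$ with parity-check matrix $K$ extracted from the block form $L_3$, same determination of the integer $h$ from $(r,s)$, same use of MDS/almost-MDS bounds when $h=2$ or $h=3$, and the same block-size argument (blocks of $L_1$ supported on $1+h$ nonzero rows, so at most $d-2$ columns per block when $h\le d-3$) giving $n=O(r)$ in (iv). The paper phrases the block argument via the average block length being at least $5$ (a contradiction argument), while you compute $a\le 4\ell_1$ directly; both yield $n=O(r)$ with slightly different explicit constants.

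For (v), however, you have the comparison of exponents backwards, which makes you anticipate a difficulty that is not there. For $r\ge 5$ one has $\tfrac{4r-2}{r+1}\ge 3$ (equality at $r=5$), so $O(q^3)$ is a \emph{stronger} (smaller) bound than $O\bigl(q^{(4r-2)/(r+1)}\bigr)$, not a weaker one. Consequently the crude Hamming bound with the integer constraint $h\le 4$ and $d_1=1$ already gives $a-\ell_1\le q^4/(q-1)=O(q^3)$, which immediately implies (v); no ``block-wise refinement'' is required, and the speculation that the $K_i$'s being parity-check matrices of MDS codes is needed to sharpen the exponent is unnecessary. The paper's proof simply invokes Theorem~\ref{thm:3.9}; the exponent $\tfrac{4r-2}{r+1}$ stated there comes from using the \emph{real-valued} bound $h\le\tfrac{4r+s}{r+1}$ together with the $\tfrac{1}{q-1}$ factor, giving $n\le O\bigl(q^{(4r+s)/(r+1)-1}\bigr)=O\bigl(q^{(3r+s-1)/(r+1)}\bigr)$, maximized over $s\le r-1$ at $s=r-1$. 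So your route to (v) works, but the ``hard part'' you flag (recovering a sharper exponent than $q^3$) is moot: $q^3$ is already at least as sharp, and the theorem claims something weaker.
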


\begin{proof} The proof is quite similar to that of Theorem \ref{thm:3.10}. Let us sketch the proof only. Again, we follow the notations and approach given in the proof of Theorem~\ref{thm:3.9}.
Let $k\equiv -s \pmod {r}, 0\le s \le r-1$.  Then, we have
\[6=n-k-\left\lceil\frac{k}r\right\rceil+2= n-k-\frac{k+s}r+2,\] i.e., \[k=\frac{r}{r+1}n-\frac{s}{r+1}-\frac{4r}{r+1}.\]
By ~\eqref{eq:10} and ~\eqref{eq:8}, we can obtain \[\frac{n}{r+1}+h\le \ell+h=n-k=\frac{n}{r+1}+\frac{s+4r}{r+1},\] and
\begin{equation}\label{eq:19}
h\le \frac{s+4r}{r+1}=3+\frac{s+r-3}{r+1}.
\end{equation}So, we have
\begin{equation}\label{eq:20}
\ell\le\frac{n}{r+1}+\frac{s+4r}{r+1}\le\frac{n}{r+1}+\frac{5r-1}{r+1}.
\end{equation}

{\it Case 1:} $r=1$; or $r=2$ and $k\equiv0\pmod{2}$. In this case, by \eqref{eq:19}, we get $h\le 2$. Hence, ${\mC}_K$ is an MDS code. Under the Main MDS conjecture, we get $n=O(q)$.

{\it Case 2:} $r=2$ and $k\not\equiv0\pmod{2}$; or $r=3$; or $r=4$.  In this case, we have $h\le 3$. If $h\le 2$, then we have $n=O(q)$ as shown before. Now we assume that $h=3$. By the fact that ${\mC}_K$ is a $[a-\ell_1,a-\ell_1-3, 3]$-almost MDS code, we obtain that $n=O(q^2)$.

{\it Case 3:} $r\ge 5$ and  $k\equiv 0, -1,-2$ or $-3\pmod{r}$, i.e., $0\le s\le 3$.
Suppose that $n\ge \frac{(r+6)(5r-1)}{r-4}$.
In this case, we have $h\le 3$. The average length of each block of the matrix $L_1$ is at least $5$
under the condition that $n\ge \frac{(r+6)(5r-1)}{r-4}.$
Choose a block $M$ of the largest length within the matrix $L_1$ defined in \eqref{eq:14}, then the number of columns of $M$ is at least $5$. Furthermore, we know that any $5$ columns of $M$ are linearly independent. This is a contradiction since $M$ has at most four nonzero rows. The contradiction implies that $n<\frac{(r+6)(5r-1)}{r-4}=O(r)$.

{\it Case 4:}
The case where  $r\ge 5$ and  $k\not\equiv 0, -1,-2$ or $-3\pmod{r}$ follows from Theorem~\ref{thm:3.9}.
\end{proof}

\begin{rmk}
Comparing Theorem~\ref{thm:3.11} with knwon results (see \cite{GXY19}, \cite{XY18} and \cite{CFXHF}), we get better upper bounds when $r=2$ and  $k\equiv0\pmod{2}$, $r=4$ and $r\ge 5$ and  $k\equiv 0, -1,-2$ or $-3\pmod{r}.$ Note that in the case $r=2$ and  $k\equiv1\pmod{2},$ ~\cite{CFXHF} got the best upper bound, i.e., $n=O(q^{1.5}).$
 For the rest of the cases,  we obtain the same results with best known upper bounds.
\end{rmk}

\begin{theorem}\label{thm:3.12}
If there is  a Singleton-optimal $[n,k,d=7;r]$-locally repairable code, then we have
 \begin{itemize}
 \item[{\rm (i)}] $n=O(1)$ if $r=1$;
  \item[{\rm (ii)}] $n=O(q)$ if $r=2$ and the Main MDS conjecture holds;
  \item[{\rm (iii)}] $n=O(q)$ if $r=3$, $k\equiv 0\pmod{3}$ and the Main MDS conjecture holds; and $n=O(q^2)$ if $r=3$, $k\not\equiv 0\pmod{3}$;
    \item[{\rm (iv)}] $n=O(q^2)$ if $r=4$ or $5$;
    \item[{\rm (v)}] $n=O(r)$ if $r\ge 6$,  $k\equiv 0, -1,-2,-3$ or $-4\pmod{r}$;
    \item[{\rm (vi)}] $n=O\left(q^{\frac{4r-2}{r+1}}\right)$ if $r\ge 6$ and $r=o(n)$.

 \end{itemize}
\end{theorem}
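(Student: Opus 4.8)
The plan is to follow the template of the proofs of Theorems~\ref{thm:3.10} and~\ref{thm:3.11} step by step, specializing the construction of Theorem~\ref{thm:3.9} to $d=7$. Writing $k\equiv-s\pmod r$ with $0\le s\le r-1$, Singleton-optimality gives $7=n-k-\frac{k+s}{r}+2$, hence $k=\frac{r}{r+1}n-\frac{s}{r+1}-\frac{5r}{r+1}$, and therefore $h\le\frac{s+5r}{r+1}=4+\frac{s+r-4}{r+1}$ and $\ell\le\frac{n}{r+1}+\frac{s+5r}{r+1}\le\frac{n}{r+1}+\frac{6r-1}{r+1}$. As in Theorem~\ref{thm:3.9} I would then build the auxiliary code $\mC_K$ with parity-check matrix $K$; since $\lfloor(d-1)/2\rfloor+1=4$ here, $\mC_K$ is a $q$-ary code of length $a-\ell_1$, dimension $\ge a-\ell_1-h$ and minimum distance $\ge 4$, and $a-\ell_1\ge a-\ell\ge 2n-(r+2)\ell\ge\frac{r}{r+1}n-\frac{(r+2)(6r-1)}{r+1}$, which is $\Omega(n)$ for fixed $r$.

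The argument is then a case analysis on the largest value of $h$ allowed in each regime. If $h\le 2$, a code of distance $\ge 4$ and dimension $\ge(a-\ell_1)-2$ violates the classical Singleton bound unless $a-\ell_1$ is bounded, and the lower bound on $a-\ell_1$ then forces $n=O(1)$; this settles $r=1$, where $h\le\tfrac52$. If the largest admissible value is $h=3$, then $\mC_K$ is an $[a-\ell_1,a-\ell_1-3,4]$-MDS code, so the Main MDS conjecture gives $a-\ell_1\le q+2$ and hence $n=O(q)$ after reinserting the bound on $\ell$; this covers $r=2$ ($h\le\tfrac{11}{3}$) and the subcase $r=3$, $k\equiv0\pmod 3$ ($h\le\tfrac{15}{4}$). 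If $h=4$ can occur, then $\mC_K$ has codimension $\le 4$ and distance $\ge 4$, so it is MDS or almost-MDS of codimension $\le 4$, and $a-\ell_1=O(q^2)$ by $\mu(3,q)\le q^2+q+2$ (or by the Hamming bound), whence $n=O(q^2)$; this disposes of $r=3$ with $k\not\equiv0\pmod 3$ ($h\le\tfrac{17}{4}$) and of $r=4,5$ (where $h\le\frac{6r-1}{r+1}<5$).

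For $r\ge 6$ there are two sub-regimes. When $k\equiv 0,-1,-2,-3,-4\pmod r$, one has $s\le 4$, hence $h\le\frac{5r+4}{r+1}<5$, i.e. $h\le 4$, so $\mC_K$ has at most $1+h\le 5$ nonzero parity-check rows; running the averaging step of Theorems~\ref{thm:3.10}--\ref{thm:3.11}, the blocks of $L_1$ have average length $\frac{a}{\ell_1}\ge\frac{a}{\ell}\ge\frac{2n}{\ell}-(r+1)\ge(r+1)\frac{n-6r+1}{n+6r-1}$, which exceeds $6$ once $n\ge\frac{(6r-1)(r+7)}{r-5}$; a longest block $M$ of $L_1$ then contributes $\ge 6$ columns lying in a space of dimension $\le 1+h\le 5$, contradicting that any $d-1=6$ columns of $L_1$ are linearly independent, so $n<\frac{(6r-1)(r+7)}{r-5}=O(r)$. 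The remaining residues, together with $r=o(n)$ (equivalently $rd=o(n)$), fall under Theorem~\ref{thm:3.9}: with $d=7$, $t=\{7\pmod 4\}=3$, $d-t=4$, and the worst residue $\{k\pmod r\}=1$ (so $\Ge=1/r$), the exponent $\frac{4r(d-2-\Ge)}{(d-t)(r+1)}$ equals $\frac{5r-1}{r+1}$, and together with the $\frac1{q-1}$ factor this yields $n=O\bigl(q^{(5r-1)/(r+1)-1}\bigr)=O\bigl(q^{(4r-2)/(r+1)}\bigr)$.

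The main obstacle is the bookkeeping in the case split: for each $r$ and each residue of $k\bmod r$ one must pin down the exact largest admissible $h$ and match it to the correct structural conclusion about $\mC_K$ --- Singleton violation, MDS (invoking the Main MDS conjecture), almost-MDS (invoking $\mu(3,q)$), the block-counting contradiction, or Theorem~\ref{thm:3.9}. In particular one has to check that $h$ reaches $4$ precisely for $k\not\equiv 0,-1,-2,-3,-4\pmod r$ (and already at $k\not\equiv 0\pmod 3$ when $r=3$), since those residues produce the $O(q^2)$ and $O(q^{(4r-2)/(r+1)})$ bounds, and that the inequality $(r+1)\frac{n-6r+1}{n+6r-1}\ge 6$ is feasible exactly for $r\ge 6$, which is what dictates the threshold $r=6$.
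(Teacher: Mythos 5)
Your proposal matches the paper's proof essentially line by line: the same computation of $h\le 4+\frac{s+r-4}{r+1}$ and $\ell\le\frac{n}{r+1}+\frac{6r-1}{r+1}$, the same construction of $\mC_K$ with distance $\ge\lfloor(d-1)/2\rfloor+1=4$, the same case split on the maximal admissible $h$ (Singleton violation for $h\le2$, MDS for $h=3$, AMDS for $h=4$), the same averaging/block-counting contradiction for $r\ge6$ with $s\le 4$ giving $n=O(r)$, and the same appeal to Theorem~\ref{thm:3.9} for the last case with the exponent worked out to $\frac{4r-2}{r+1}$. This is correct and is the paper's own argument.
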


\begin{proof} Again, the proof is quite similar to that of Theorem \ref{thm:3.10}. Let us sketch the proof only. Again, we follow the notations and approach given in the proof of Theorem~\ref{thm:3.9}.
Let $k\equiv -s \pmod {r}, 0\le s \le r-1$.  Then, we have
\[7=n-k-\left\lceil\frac{k}r\right\rceil+2= n-k-\frac{k+s}r+2,\] i.e., \[k=\frac{r}{r+1}n-\frac{s}{r+1}-\frac{5r}{r+1}.\]
By ~\eqref{eq:10} and ~\eqref{eq:8}, we can obtain
 \[\frac{n}{r+1}+h\le \ell+h=n-k=\frac{n}{r+1}+\frac{s+5r}{r+1},\] and
\begin{equation}\label{eq:21}
h\le \frac{s+5r}{r+1}=4+\frac{s+r-4}{r+1}.
\end{equation}
So, we have
\begin{equation}\label{eq:22}
\ell\le\frac{n}{r+1}+\frac{s+5r}{r+1}\le\frac{n}{r+1}+\frac{6r-1}{r+1}.
\end{equation}

{\it Case 1:} $r=1$. In this case, by \eqref{eq:21}, we get $h\le 2$. Hence, ${\mC}_K$ is a linear code exceeding the classical Singleton bound. Thus, $n=O(1)$.

{\it Case 2:} $r=2$ or $r=3$ and $k\equiv0\pmod{3}$.  In this case, we have $h\le 3$. If $h\le 2$, then we have $n=O(1)$ as shown before. Now we assume that $h=3$. By the fact that ${\mC}_K$ is a $[a-\ell_1,a-\ell_1-3, 4]$-MDS code, we obtain that $n=O(q)$.

{\it Case 3:} $r=3$ and  $k\not\equiv 0\pmod{3}$; or $r=4$ or $5$. In this case, we have $h\le 4$. If $h\le 3$, then we have $n=O(1)$ or $n=O(q)$ as shown before. Assume $h=4$. By the fact that ${\mC}_K$ is a $[a-\ell_1,a-\ell_1-4, 4]$-almost MDS code, we obtain that $n=O(q^2)$.

{\it Case 4:} $r\ge 6$ and  $k\equiv 0, -1,-2,-3$ or $-4\pmod{r}$, i.e., $0\le s\le 4$. In this case, we have $h\le 4$. The average length of  of each block of the matrix $L_1$ is at least $6$
if $n\ge \frac{(r+7)(6r-1)}{r-5}$. The last inequality is clearly satisfied as $r=o(n)$.
Choose a block $M$ of the largest length within the matrix $L_1$ defined in \eqref{eq:14}, then the number of columns of $M$ is at least $6$. Furthermore, we know that any $6$ columns of $M$ are linearly independent. This is a contradiction since $M$ has at most five nonzero rows. The contradiction implies that $n<\frac{(r+7)(6r-1)}{r-5}=O(r)$.

{\it Case 5:}
The last case follows from Theorem~\ref{thm:3.9} directly.
\end{proof}

\begin{rmk}
Assume that there is a Singleton-optimal $[n,k,d=8;r]$-locally repairable code. In the same way, we can prove (i)  $n=O(q)$ if $r=1$ and the Main MDS conjecture holds; and (ii) $n=O(q^2)$ if $r=2$; and (iii) $n=O(q^2)$ if $r=3$ and $k\not\equiv-2\pmod{3}$. For other cases, we refer to the upper bound on the length $n$ in Theorem~\ref{thm:3.9}.
\end{rmk}
Finally in this section, we show an upper bound for general minimum distance $d$ and locality $r$.
\begin{theorem}\label{thm:3.13}
Assume that there is a Singleton-optimal $[n,k,d;r]$-locally repairable code. If $s+2<d<r+2$, where $k\equiv -s\pmod{r}$ with $0\le s\le r-1$, then we have {$n< \frac{(d+r)(r-1+(d-2)r)}{r-d+2}$}. In particular, $n=O(1)$ if both $r$ and $d$ are constant.
\end{theorem}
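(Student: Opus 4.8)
The plan is to carry out, for general $d$ and $r$, the ``$O(r)$'' analysis that appears as Case~3 in the proofs of Theorems~\ref{thm:3.10}--\ref{thm:3.12}. First I would extract two arithmetic consequences of Singleton-optimality. Writing $k\equiv -s\pmod r$ with $0\le s\le r-1$, the identity $d=n-k-\lceil k/r\rceil+2=n-k-\frac{k+s}{r}+2$ gives $n-k=\frac n{r+1}+\frac{s+(d-2)r}{r+1}$. Combining this with $\frac n{r+1}\le\ell\le n-k$ from Lemma~\ref{lemma:3.1} and $h=n-k-\ell$ yields, on the one hand,
\[h\;\le\;\frac{s+(d-2)r}{r+1}\;=\;(d-2)+\frac{s-d+2}{r+1}\;<\;d-2,\]
where the strict inequality is exactly where the hypothesis $d>s+2$ enters; since $h$ is a nonnegative integer this forces $h\le d-3$. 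On the other hand, using $h\ge0$ and $s\le r-1$ one gets $\ell\le\frac n{r+1}+\frac{r-1+(d-2)r}{r+1}$, i.e. $(r+1)\ell\le n+r-1+(d-2)r$.

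Next I would import the column bookkeeping from the proof of Theorem~\ref{thm:3.9}. Let $A$ (resp.\ $B$) be the set of coordinates whose column in $H_1$ has exactly one (resp.\ at least two) nonzero entries, and put $a=|A|$, $b=|B|$. Counting the nonzero entries of $H_1$ by rows versus by columns gives $a+b=n$ and $(r+1)\ell\ge n+b$, hence $a=n-b\ge 2n-(r+1)\ell$. The matrix $L_1$ formed by the columns indexed by $A$ splits into $\ell_1\le\ell$ all-one blocks, so by pigeonhole its longest block $M$ has at least $\lceil a/\ell_1\rceil\ge a/\ell\ge 2n/\ell-(r+1)$ columns, and feeding in $(r+1)\ell\le n+r-1+(d-2)r$ shows the number of columns of $M$ is at least
\[\frac{2n(r+1)}{\,n+r-1+(d-2)r\,}-(r+1).\]
Now suppose, for contradiction, that $n\ge\frac{(d+r)(r-1+(d-2)r)}{r-d+2}$; note $r-d+2\ge1$ since $d<r+2$. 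Clearing the fraction and collecting the coefficient $2(r+1)-(d+r)=r-d+2$ of $n$ shows that this assumption is equivalent to the displayed lower bound being $\ge d+r-(r+1)=d-1$, so $M$ has at least $d-1$ columns.

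Finally I would close exactly as in Case~3 of Theorem~\ref{thm:3.10}: the columns of $M$ are columns of the parity-check matrix $H$, so any $d-1$ of them are linearly independent; but inside the block $M$ only a single row of $H_1$ together with the $h$ rows of $H_2$ can be nonzero, so $M$ has at most $1+h\le d-2$ nonzero rows, too few to support $d-1$ linearly independent columns. This contradiction proves $n<\frac{(d+r)(r-1+(d-2)r)}{r-d+2}$, and the ``in particular'' clause is immediate because the bound depends only on $d$ and $r$. The only point needing care is keeping the three uses of the Singleton relation compatible: one needs $\ell\ge\frac n{r+1}$ to push $h$ down to $d-3$, the extremal estimate $\ell\le\frac n{r+1}+\frac{r-1+(d-2)r}{r+1}$ (the worst case $s=r-1$) to make the longest block long, and $d>s+2$ to drop $h$ strictly below $d-2$ — and $s+2<d<r+2$ is precisely the range in which all three can simultaneously hold, so no estimate beyond those already developed for Theorem~\ref{thm:3.9} is required.
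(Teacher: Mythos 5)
Your proof is correct and follows essentially the same route as the paper: you use the Singleton relation to pin down $h\le d-3$ (via $d>s+2$) and to bound $\ell$, run the $a$, $b$, $L_1$ column-counting from Theorem~\ref{thm:3.9} to find a block $M$ with at least $d-1$ columns once $n\ge\frac{(d+r)(r-1+(d-2)r)}{r-d+2}$, and then derive the same rank contradiction since $M$ has at most $1+h\le d-2$ nonzero rows. The algebra identifying when the average block length reaches $d-1$ matches the paper's threshold exactly.
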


\begin{proof} Suppose {$n\ge \frac{(d+r)(r-1+(d-2)r)}{r-d+2}$.}
We follow the notations and approach given in the proof of Theorem~\ref{thm:3.9}.
Let $k\equiv -s \pmod {r}, 0\le s \le r-1$.  Then, we have
\[d=n-k-\left\lceil\frac{k}r\right\rceil+2= n-k-\frac{k+s}r+2,\] i.e., \[k=\frac{r}{r+1}n-\frac{s}{r+1}-\frac{(d-2)r}{r+1}.\]
By ~\eqref{eq:10} and ~\eqref{eq:8}, we can obtain
 \[\frac{n}{r+1}+h\le \ell+h=n-k=\frac{n}{r+1}+\frac{s+(d-2)r}{r+1},\] and
\begin{equation}\label{eq:23}
h\le \frac{s+(d-2)r}{r+1}=d-3+\frac{s+r-d+3}{r+1}.
\end{equation}
So, we have
\begin{equation}\label{eq:24}
\ell\le\frac{n}{r+1}+\frac{s+(d-2)r}{r+1}.
\end{equation}
The average length of each block of the matrix $L_1$ defined in \eqref{eq:14} is
\[
\frac a{\ell_1}\ge\frac a{\ell} \ge \frac{2n}{\ell}-(r+1)\ge \frac{2n(r+1)}{n+s+(d-2)r}-(r+1)\ge d-1\]
{if $n\ge \frac{(d+r)(r-1+(d-2)r)}{r-d+2}$.}

Since $d< r+2$, we have $h\le d-3$ by \eqref{eq:23}. Choose a block $M$ of the largest length within the matrix $L_1$ defined in \eqref{eq:14}, then the number of columns of $M$ is at least $d-1$. Furthermore, we know that any $d-1$ columns of $M$ are linearly independent. This is a contradiction since $M$ has at most $d-2$ nonzero rows. This contradiction implies that $n< \frac{(d+r)(r-1+(d-2)r)}{r-d+2}.$ This proof is completed.
\end{proof}

\begin{rmk}
 If $k\equiv -(r-1)\pmod{r}$, then there are no such $d$ in the range $(s+2,r+2)$. This implies that Theorem \ref{thm:3.13} gives nothing. For other case, one can always find $d$ in the range $(s+2,r+2)$.
\end{rmk}

\begin{rmk}
It was shown in \cite{GXY19} that under the condition that $(r+1)|n$ and $d\le r+2$, a Singleton-optimal $[n,k,d;r]$-locally repairable code must have dimension $k\equiv 2-d\pmod{r}$, i.e., $d=s+2$. In this case, the length of a Singleton-optimal $[n,k,d;r]$-locally repairable code is $n=\Omega_{d,r}\left(q^{1+\frac1{\lfloor(d-3)/2\rfloor}}\right)$.
\end{rmk}

\section{Improved upper bounds with a constraint}
In \cite{GXY19} and \cite{CFXHF}, the authors studied upper bounds on length of Singleton-optimal locally repairable codes under the constraint that $(r+1)|n$ and recovery sets are disjoint. In this section, we assume this constraint as well to obtain  some improved upper bounds.

Our main idea is to introduce some propagation rules and then reduce the problem to determine upper bounds on lengths of Singleton-optimal locally repairable codes with small minimum distance. To make our propagation rules work well, we always assume that $n=\Omega(dr^2)$ in this section. With this assumption, we may assume that {$r^2+2r< n-d$.} Note that with assumption that $(r+1)|n$ and $n=\Omega(dr^2)$, it was proved in \cite{GXY19} that recovery sets are disjoint.

\begin{lemma} If $n=\Omega(dr^2)$, then a Singleton-optimal $[n,k,d;r]$-locally repairable code must have a recovery set of size $r+1$.
\end{lemma}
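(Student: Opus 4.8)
The plan is to argue by contradiction, using the interplay between the Singleton-type bound for locality $r$ and the one for locality $r-1$. Suppose $\mC$ is a Singleton-optimal $[n,k,d;r]$-locally repairable code with no recovery set of size $r+1$. By Lemma \ref{lem:2.6} a recovery set of size $r+1$ would be the support of a weight-$(r+1)$ codeword of $\mC^{\perp}$, so this hypothesis says precisely that $\mathfrak{R}_{\mC}(r)=\mathfrak{R}_{\mC}(r-1)$; Corollary \ref{cor:2.7} then shows that $\mC$ already has locality $r-1$. Applying the Singleton-type bound \eqref{eq:1} with locality $r-1$ gives $d\le n-k-\left\lceil\frac{k}{r-1}\right\rceil+2$, while Singleton-optimality with locality $r$ gives the exact identity $d=n-k-\left\lceil\frac{k}{r}\right\rceil+2$. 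Comparing the two yields $\left\lceil\frac{k}{r-1}\right\rceil\le\left\lceil\frac{k}{r}\right\rceil$, and since the reverse inequality is automatic, $\left\lceil\frac{k}{r-1}\right\rceil=\left\lceil\frac{k}{r}\right\rceil=:m$.

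The second step is to turn this equality into an impossibly small upper bound on $n$. From $\left\lceil\frac{k}{r}\right\rceil=m$ we get $(m-1)r<k\le mr$, and from $\left\lceil\frac{k}{r-1}\right\rceil=m$ we get $k\le m(r-1)$. Together these force $(m-1)r<m(r-1)$, i.e. $m<r$, so $m\le r-1$. The Singleton-optimality identity reads $k+m=n-d+2$, so combining with $k\le mr$ and $m\le r-1$ we obtain $n-d+2=k+m\le m(r+1)\le (r-1)(r+1)=r^2-1$, hence $n-d\le r^2-3$. But the standing assumption $n=\Omega(dr^2)$ of this section was chosen, as noted at the start of the section, so that $r^2+2r<n-d$; this contradicts $n-d\le r^2-3$, and the lemma follows.

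I do not expect a genuine obstacle here; the two points needing a little care are the following. First, one must be precise that \emph{no recovery set of size $r+1$} is exactly the statement $\mathfrak{R}_{\mC}(r)=\mathfrak{R}_{\mC}(r-1)$, so that Corollary \ref{cor:2.7} really does drop the locality to $r-1$. Second, the degenerate value $r=1$ must be handled separately, since then ``locality $r-1$'' means locality $0$ and \eqref{eq:1} is not literally applicable: in that case a recovery set of size $\le r=1$ must be a singleton $\{i\}$, which by Lemma \ref{lem:2.6} forces coordinate $i$ to vanish on all of $\mC$, and if this held for every $i$ then $\mC=\{\mathbf{0}\}$, contradicting the standing assumption $r<k$ in \eqref{eq:7}. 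Everything else is just the elementary arithmetic of $\left\lceil\frac{k}{r}\right\rceil$ versus $\left\lceil\frac{k}{r-1}\right\rceil$.
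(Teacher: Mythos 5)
Your proposal is correct and takes essentially the same approach as the paper: both argue by contradiction that the code would then have locality $r-1$, compare the Singleton-type bound at locality $r-1$ with the Singleton-optimality identity to get $\left\lceil k/r\right\rceil\ge\left\lceil k/(r-1)\right\rceil$, and derive a contradiction with the standing inequality $n-d>r^2+2r$. The only difference is cosmetic: you bound $n-d\le r^2-3$ directly from $\left\lceil k/r\right\rceil=\left\lceil k/(r-1)\right\rceil=m\le r-1$, whereas the paper first extracts $k\ge r^2+r$ from $n-d>r^2+2r$ and then shows $\left\lceil k/(r-1)\right\rceil>\left\lceil k/r\right\rceil$; your handling of the degenerate case $r=1$ is likewise equivalent to the paper's.
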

\begin{proof} Let ${\mC}$ be a Singleton-optimal $[n,k,d;r]$-locally repairable code. Suppose that every recovery set has size at most $r$.
If $r=1$, then $k=\frac{n-d}2+1\ge 1$. Since $r<k,$ the locality $r$ of ${\mC}$ is also $0$, this forces that ${\mC}$ is the trivial code $\{\bo\}$. This gives a contradiction.

Now assume that $r\ge 2$. Since ${\mC}$ is a Singleton-optimal $[n,k,d;r]$-locally repairable code, so \begin{equation}~\label{eq:s}
n-k-\left\lceil\frac{k}{r}\right\rceil+2=d.\end{equation}
In this case, the locality of $\mC$ is $r-1,$ {by the Singleton bound~\eqref{eq:1} and \eqref{eq:s}, we have
\[n-k-\left\lceil\frac{k}{r}\right\rceil+2=d\le n-k-\left\lceil\frac{k}{r-1}\right\rceil+2,\]}
this gives $\left\lceil\frac{k}{r}\right\rceil\ge \left\lceil\frac{k}{r-1}\right\rceil$.

Combining the inequality $2r+r^2<{n-d}$ and identity $n-k-\left\lceil\frac{k}{r}\right\rceil+2=d$ gives $k\ge r^2+r$.
Write $k=ur+v$ with $1\le v\le r$, so $u\ge r$. It can be rewritten as $k=u(r-1)+u+v$.
Thus we have
\[\left\lceil\frac{k}{r}\right\rceil=u+1< u+2\le\left\lceil\frac{k}{r-1}\right\rceil.\]
This gives a contradiction and the proof is complete.
\end{proof}

Based on Lemma~\ref{lem:2.3}, let us have a propagation rule.
\begin{lemma}\label{lem:4.2}
Let $\mC$ be {a Singleton-optimal} $[n,k,d;r]$-locally repairable code over $\F_q$ with disjoint recovery sets. Then, there exists a Singleton-optimal
$[n-(r+1),k-a,d-r-1+a;r]$-locally repairable code  over $\F_q$ with disjoint recovery sets if $\left\lceil \frac{k}{r}\right\rceil=\left\lceil\frac{k-a}{r}\right\rceil$, where $0\le a\le r+1$ is an integer.
\end{lemma}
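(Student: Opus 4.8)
The plan is to take a Singleton-optimal $[n,k,d;r]$-locally repairable code $\mC$ with disjoint recovery sets, pick one recovery set $R$ of size exactly $r+1$ (which exists by the previous lemma since $n=\Omega(dr^2)$), and puncture $\mC$ on all $r+1$ coordinates of $R$. First I would record the effect of this puncturing on the basic parameters: the length drops by $r+1$; since $R$ is a recovery set, the $r+1$ columns of a parity-check matrix indexed by $R$ contribute a codeword $\bc\in\mC^\perp$ with $\supp(\bc)=R$, so deleting those columns drops the redundancy by exactly $1$ and hence the dimension drops by $r$; and the minimum distance drops by at most $r+1$. More precisely, I would argue that the resulting code $\mC'$ is a locally repairable code of length $n-(r+1)$ with locality $r$ (the remaining $\lceil n/(r+1)\rceil-1$ disjoint recovery sets of the other coordinates are untouched and still partition the new coordinate set), dimension $k-r$, and some minimum distance $d'\ge d-(r+1)+1 = d-r$. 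The twist is that we want to match a prescribed dimension $k-a$ for $0\le a\le r+1$; to get dimension $k-a$ from a code of dimension $k-r$ we instead puncture/shorten more carefully, or equivalently combine puncturing $R$ with applying Lemma~\ref{lem:2.3}(ii) to shave the dimension down appropriately, which also forces the distance to behave as $d-r-1+a$.

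The key computation is to verify that the resulting parameters $[n-(r+1),k-a,d-r-1+a;r]$ actually satisfy the Singleton-type bound \eqref{eq:1} with equality, i.e. that the punctured/shortened code is again Singleton-optimal. This is where the hypothesis $\lceil k/r\rceil=\lceil(k-a)/r\rceil$ enters: plugging the new parameters into the right-hand side of \eqref{eq:1} gives
\[
(n-(r+1))-(k-a)-\left\lceil\frac{k-a}{r}\right\rceil+2
= n-k-\left\lceil\frac{k}{r}\right\rceil+2 -(r+1)+a
= d-(r+1)+a,
\]
using Singleton-optimality of $\mC$ for the middle equality and the hypothesis to replace $\lceil(k-a)/r\rceil$ by $\lceil k/r\rceil$. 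So the new distance $d-r-1+a$ exactly meets the bound, provided the actual distance of the constructed code is not smaller than this value. Thus the heart of the proof is a lower bound $d'\ge d-r-1+a$ on the distance of the code obtained after puncturing on $R$ and then reducing the dimension by $a$ further coordinates; for this I would note puncturing on the $r+1$ coordinates of $R$ loses at most $r+1$ from the distance and then invoke Lemma~\ref{lem:2.3}(ii) (shortening $r-a$ coordinates keeps distance, reducing dimension by $r-a$) — wait, the bookkeeping must be arranged so that the total dimension drop is $a$ and the distance drop is exactly $r+1-a$; I would present it as: puncture $R$ (length $-(r+1)$, dimension $-r$, distance $\ge -(r+1)$), which already has the wrong dimension unless $a=r$, and then use a propagation step to adjust. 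The cleanest route is probably to puncture on $R$ and, separately, delete $r-a$ coordinates outside $R$ in a way that lengthens the repair structure argument — I will make this explicit in the write-up.

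The main obstacle I expect is the distance lower bound together with preserving \emph{disjoint} recovery sets simultaneously. Puncturing on a recovery set $R$ is the natural move, but one must check that every remaining coordinate $j\notin R$ still has a recovery set contained in $[n]\setminus R$: this is automatic precisely because the recovery sets are disjoint, so $j$'s recovery set never meets $R$. Also, the distance of the punctured code genuinely could be as low as $d-(r+1)$, and one needs it to be \emph{at least} $d-r-1+a=d-(r+1)+a$; since $a\ge 0$ this is only guaranteed when $a=0$ unless we are more careful, so the proof must extract $a$ of the ``saved'' distance by an appropriate shortening argument — concretely, after puncturing on $R$ one has an $[n-(r+1),k-r,\ge d-(r+1)]$ code with locality $r$ and disjoint recovery sets; then I would shorten on coordinates that do \emph{not} destroy locality, but that is delicate, so instead I would directly redo the parity-check-matrix analysis of Lemma~\ref{lemma: C_I}: remove the single dual codeword supported on $R$ together with $a$ further independent dual codewords chosen to raise the minimum distance, tracking that the number of linearly independent columns (hence the distance) only drops by $r+1-a$. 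Verifying that such a choice of $a$ extra dual vectors exists, and that the resulting code still has locality $r$ with disjoint recovery sets, is the crux; once that is in place the Singleton-type identity above closes the argument.
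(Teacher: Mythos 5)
The Singleton-type arithmetic you carry out — showing that the target parameters $[n-(r+1),\,k-a,\,d-r-1+a;r]$ meet the bound with equality under the hypothesis $\lceil k/r\rceil=\lceil (k-a)/r\rceil$ — is correct and coincides with the paper's verification, as is your observation that the disjoint recovery sets outside $R$ survive untouched. The genuine gap is in the construction itself. You try to act on the entire recovery set $R$ in one uniform operation, and the bookkeeping you attach to it is inconsistent: deleting all $r+1$ columns of $H$ indexed by $R$ is a shortening, which drops the dimension to $k-r$ (as you say, since the row $\bc$ supported on $R$ vanishes) but keeps the distance $\ge d$; whereas the bound ``distance drops by at most $r+1$'' is a puncturing fact, which would keep the dimension at $k$. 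You cannot get both ``dimension drops by $r$'' and ``distance drops by at most $r+1$'' from the same operation, and once you land at dimension $k-r$ there is no way to raise it back to $k-a$ for $a<r$ by further shortening or by Lemma~\ref{lem:2.3}(ii). The subsequent attempts to patch this (delete $r-a$ coordinates outside $R$, or remove extra dual codewords) either change the length away from $n-(r+1)$ or move the distance in the wrong direction, and you explicitly leave the crux unresolved.

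The missing idea in the paper's proof is to split $R$ itself: shorten at exactly $a$ of the $r+1$ coordinates of $R$ (length $-a$, dimension $\ge k-a$, distance $\ge d$), leaving a dual word of weight $r+1-a$ on what remains of $R$; then puncture at those remaining $r+1-a$ coordinates (length $-(r+1-a)$, dimension unchanged, distance $\ge d-(r+1-a)$). The net parameters are precisely $[n-(r+1),\,\ge k-a,\,\ge d-(r+1)+a;r]$ with locality and disjoint recovery sets intact, and a subcode brings the dimension down to exactly $k-a$. Your Singleton computation then forces equality in the distance. Without this $a$-versus-$(r+1-a)$ split, the free parameter $a$ has nowhere to live in your construction, which is why the bookkeeping never closes.
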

\begin{proof}
Since ${\mC}$ has disjoint recovery sets, without loss of generality, we may assume  $\{n,n-1,\ldots,n-r\}$ is a recovery set for indices $n-r,\ldots,n$. We first shorten code ${\mC}$ by removing the last $a$ columns from its parity-check matrix. Denote the resulting code by ${\mC}_1$. It is clear that ${\mC}_1=[n_1,k_1,d_1;r_1]$ has length $n_1=n-a$, dimension $k_1\ge k-a$, minimum distance $d_1\ge d$ and locality $r_1=r$. Furthermore, $\{n-r,n-r+1,\ldots,n-a\}$ is one of its disjoint recovery set. We then puncture the last $r+1-a$ indices from ${\mC}_1$ to obtain a linear code ${\mC}_2=[n_2,k_2,d_2;r_2]$. It is clear that ${\mC}_2$ still has locality $r_2=r$. Moreover, ${\mC}_2$ has code length $n_2=n-(r+1)$, dimension $k_2\geq k-a$ and minimum distance $d_2\geq d-(r+1)+a$. By taking a subcode of ${\mC}_2$, we may assume that $k_2=k-a$. Observe that
$$
d_2\geq d-(r+1)+a=(n-k-\left\lceil\frac{k}{r}\right\rceil+2)-(r+1)+a= n_2-k_2-\left\lceil \frac{k_2}{r}\right\rceil+2.$$
Note that in the last step, we use the fact that $\left\lceil \frac{k}{r}\right\rceil=\left\lceil\frac{k-a}{r}\right\rceil$.
\end{proof}


In literature, there has been already some work on upper bounds on lengths of Singleton-optimal locally repairable codes with small distances \cite{GXY19} and \cite{CFXHF}. However, it is more difficult to upper bound lengths of Singleton-optimal locally repairable codes with large distances.
Lemma \ref{lem:4.2} provides a way to upper bound lengths of  Singleton-optimal locally repairable codes with larger distance via   Singleton-optimal locally repairable codes with smaller distances.

\begin{cor}\label{cor:4.3}
Let ${\mC}$ be {a Singleton-optimal} $[n,k,d;r]$-locally repairable code over $\F_q$ with disjoint recovery sets. If $d\equiv b \pmod{ r+1}$ and $n=\Omega(dr^2)$, then there exists {a Singleton-optimal}
$[n-d+b,k,b;r]$-locally repairable code over $\F_q$ with disjoint recovery sets.
\end{cor}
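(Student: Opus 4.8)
The plan is to iterate the propagation rule of Lemma~\ref{lem:4.2} until the minimum distance drops from $d$ down to $b = [d \pmod{r+1}]$, applying the deletion of one recovery set (of size $r+1$) at a time and at each step choosing the shortening parameter $a$ so that the ceiling $\lceil k/r\rceil$ is preserved. First I would write $d = b + (r+1)m$ for the appropriate nonnegative integer $m$, so that it suffices to perform exactly $m$ rounds of Lemma~\ref{lem:4.2}, each round reducing the length by $r+1$ (ending at $n - m(r+1) = n - d + b$) and reducing the distance by $r+1 - a_i$ in round $i$, where $a_i$ is chosen in that round.

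The key observation that makes the dimension stay fixed at $k$ throughout is that in Lemma~\ref{lem:4.2} the new dimension is $k - a$, and we want the net change in dimension over all $m$ rounds to be zero; but more simply, in each individual round we can take $a = 0$ provided the hypothesis $\lceil k/r\rceil = \lceil (k-0)/r\rceil$ holds, which is trivially true. With $a=0$ in every round, each round produces a Singleton-optimal $[n' - (r+1),\, k,\, d' - (r+1);\, r]$-locally repairable code with disjoint recovery sets from a Singleton-optimal $[n',k,d';r]$ one. After $m = (d-b)/(r+1)$ rounds we arrive at a Singleton-optimal $[n - d + b,\, k,\, b;\, r]$-locally repairable code with disjoint recovery sets, which is exactly the claim. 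One must also check that the intermediate codes continue to satisfy the standing hypotheses needed to invoke Lemma~\ref{lem:4.2} again — in particular that they are genuine locally repairable codes with disjoint recovery sets (guaranteed by the conclusion of Lemma~\ref{lem:4.2}) and that the Singleton-optimality is maintained (also part of that conclusion, since the displayed equality in the proof of Lemma~\ref{lem:4.2} is precisely Singleton-optimality of the shortened-punctured code).

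The main obstacle is a bookkeeping one: one has to ensure that at the last step the distance really lands on $b$ and not below it, and that $b \geq 1$ (or handle $b = 0$ separately), and also that $b$ lies in the admissible range so that $\lceil k/r\rceil = \lceil k/r\rceil$ is vacuous but the code is still nontrivial. Since $a=0$ gives exact decrements of $r+1$ in the distance, and $d \equiv b \pmod{r+1}$, the distance sequence is $d, d-(r+1), \dots, b$, so it hits $b$ exactly; no delicate choice of the $a_i$ is actually needed. The role of the condition $n = \Omega(dr^2)$ is to guarantee, via the earlier lemma, that a recovery set of size exactly $r+1$ exists at every stage (the length shrinks by at most $d$ total, so $n' = \Omega(d'(r)^2)$ continues to hold for the intermediate codes since $d' \leq d$ and $n' \geq n - d$), which is what lets us peel off a full recovery set of size $r+1$ each time; I would spell out that this asymptotic condition survives all $m$ iterations. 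Thus the proof is essentially an induction on $m$ with base case $m=0$ (nothing to do) and inductive step a single application of Lemma~\ref{lem:4.2} with $a=0$.
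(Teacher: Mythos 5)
Your proposal is correct and is essentially the same as the paper's own proof: both apply Lemma~\ref{lem:4.2} with $a=0$ (where the hypothesis $\lceil k/r\rceil=\lceil(k-a)/r\rceil$ is vacuous) exactly $(d-b)/(r+1)$ times, and both note that the hypothesis $n=\Omega(dr^2)$ — equivalently $r^2+2r<n-d$ — guarantees a recovery set of size $r+1$ persists at every intermediate stage. Your extra remarks about tracking that $n'=\Omega(d'r^2)$ survives the iterations and about edge cases such as $b=0$ are reasonable bookkeeping comments that the paper handles more tersely, but they do not change the argument.
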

\begin{proof}
Applying Lemma \ref{lem:4.2} to ${\mC}$ by letting $a=0$ and we obtain {a Singleton-optimal} $[n-(r+1),k,d-(r+1);r]$-locally repairable code over $\F_q$ with disjoint recovery sets.
Note that there still exists a recovery set of size $r+1$ in ${\mC}$ as $r^2+2r<{n-d}$. After we do the same operations for $\frac{d-b}{r+1}$ times, we obtain the desired result.
\end{proof}

\subsection{The case that $r=1,2,3$ and $4$}

 Combining the results given in Section~\ref{sec:3} and the bound for the case $(r,d)=(2,6)$ given in \cite{CFXHF} with Corollary \ref{cor:4.3}, we obtain the following result.

\begin{theorem}\label{thm:r=2} Assume that $n=\Omega(dr^2)$.
Let ${\mC}$ be {a Singleton-optimal} $[n,k,d\geq 5;r]$-locally repairable code over $\F_q$ with disjoint recovery sets. Then, we have
\begin{center}
{\rm Table V\\
Upper bounds  for $r=1,2,3,4$ with disjoint recovery sets\\ \smallskip
\setlength{\tabcolsep}{0.7mm}
\begin{tabular}
{|m{2.5cm}<{\centering}|m{2.5cm}<{\centering}|m{2.5cm}<{\centering}|m{2.5cm}<{\centering}|m{4.8cm}<{\centering}|m{5.5cm}<{\centering}|}
\hline\hline
\multicolumn{5}{|c|}{$d=5\pmod{r+1}$}\\ \hline
$r=1$&$r=2$ & $r=3$&$r= 4$& {{$r\ge 4$ and $k\equiv0,-1,-2\pmod{r}$}}\\\hline
$n=O(1)$ &$n=O(q)$&$n=O(q)$&$n=O(q^2)$&--\\ \hline\hline
\multicolumn{5}{|c|}{$d=6\pmod{r+1}$}\\ \hline
$r=1$&$r=2$ &$r=2$ and $2|k$&$r=3$ or $4$&$r\ge 5$ and $k\equiv0,-1,-2,-3\pmod{r}$\\ \hline
$n=O(q)$&$n=O(q^{1.5})$&$n=O(q)$&$n=O(q^2)$&--\\ \hline\hline
\multicolumn{5}{|c|}{$d=7\pmod{r+1}$}\\ \hline
$r=1$&$r=2$ &$r=3$  &$r=3$ and $3|k$ &$r=4$ or $5$\\ \hline
$n=O(1)$&$n=O(q)$&$n=O(q^2)$&$n=O(q)$&$n=O(q^2)$\\ \hline\hline
\end{tabular}
}
\end{center}
``--" in the above table means that locally reparable codes with given parameters does not exist.
\end{theorem}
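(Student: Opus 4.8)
The strategy is to transport the small-distance bounds of Theorems~\ref{thm:3.10}--\ref{thm:3.12} up to an arbitrary $d\ge 5$ by iterating the propagation rule of Corollary~\ref{cor:4.3}, and then to push the resulting bound on the length of the shortened code back to $n$. Write $b\in\{5,6,7\}$ for the target residue indicated by the relevant block of Table~V, so that $d\equiv b\pmod{r+1}$; since $d\ge 5$ one checks that $d\ge b$ in every case that actually occurs (for $r=1$ the block ``$d\equiv 7\pmod{r+1}$'' can include $d=5$, which is instead reduced to $b=5$, with the same conclusion $n=O(1)$).

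First I would apply Corollary~\ref{cor:4.3}: under the standing assumption $n=\Omega(dr^2)$ (equivalently $r^2+2r<n-d$, a quantity that is invariant under each shortening by a recovery set, so that the successive reductions are legitimate), it produces a Singleton-optimal $[n',k,b;r]$-locally repairable code $\mC'$ over $\F_q$ with disjoint recovery sets, where $n'=n-d+b$. Two invariants matter here. The dimension of $\mC'$ is still $k$ (Corollary~\ref{cor:4.3} invokes Lemma~\ref{lem:4.2} with $a=0$), so $k\bmod r$ is unchanged, which is exactly what allows one to feed the correct sub-case of Theorems~\ref{thm:3.10}--\ref{thm:3.12} into $\mC'$. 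Moreover $n-n'=d-b\equiv 0\pmod{r+1}$, so $(r+1)\mid n'$, and disjointness of the recovery sets survives; hence $\mC'$ also meets the hypotheses of Proposition~\ref{prop:1.3}, which is needed for $(r,d)=(2,6)$.

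Next, apply the appropriate theorem to $\mC'$: Theorem~\ref{thm:3.10} if $b=5$, Theorem~\ref{thm:3.11} if $b=6$, Theorem~\ref{thm:3.12} if $b=7$, selecting the case by the value of $k\bmod r$ (and, where the cited sub-case requires it, inheriting the Main MDS assumption), and noting that a fixed $r\le 5$ trivially satisfies $r=o(n')$. For $(r,d)=(2,6)$ with $k$ odd one uses instead the sharper $O(q^{1.5})$ bound of \cite{CFXHF} via Proposition~\ref{prop:1.3}(ii), which applies because $3\mid n'$ and $\mC'$ has disjoint recovery sets. In all cases this yields $n'=O(g(q))$ with $g(q)$ the entry of Table~V. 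To conclude for $n$, observe that $n=\Omega(dr^2)$ with a large enough implied constant $C$ gives $d\le n/C$, hence $n'=n-d+b\ge n-d\ge\bigl(1-\tfrac1C\bigr)n$ and so $n\le\tfrac{C}{C-1}\,n'=O(n')=O(g(q))$; in particular when $g(q)=O(1)$ (the $r=1$ entries) this directly forces $n=O(1)$. For the entries marked ``\,--\,'', Theorem~\ref{thm:3.10}(iii) (for $r\ge4$, $d\equiv5\pmod{r+1}$, $k\equiv0,-1,-2\pmod r$) and Theorem~\ref{thm:3.11}(iv) (for $r\ge5$, $d\equiv6\pmod{r+1}$, $k\equiv0,-1,-2,-3\pmod r$) give $n'=O(r)$ for $\mC'$, which is incompatible with $n'\ge n-d=\Omega(dr^2)=\Omega(r^2)$ once the implied constant in $\Omega(dr^2)$ exceeds the one hidden in $O(r)$; hence no such $\mC$ can exist.

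The bulk of the work is bookkeeping: matching each cell of Table~V to the correct sub-case of Theorems~\ref{thm:3.10}--\ref{thm:3.12}, or to \cite{CFXHF} for $(r,d)=(2,6)$ with $k$ odd. The one delicate point is the transfer step $n=O(n')$: one must make sure the length loss $d-b$ is negligible next to $n$, and this is precisely where the hypothesis $n=\Omega(dr^2)$ is indispensable, and also why an $O(1)$ bound on the shortened code already forces $n$ itself to be bounded.
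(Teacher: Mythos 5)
Your proposal is correct and follows the same route that the paper indicates for this theorem: reduce via Corollary~\ref{cor:4.3} to a Singleton-optimal code of the same dimension $k$ and distance $b\in\{5,6,7\}$, invoke the appropriate case of Theorems~\ref{thm:3.10}--\ref{thm:3.12} (or Proposition~\ref{prop:1.3} for $(r,d)=(2,6)$ with $k$ odd), and transport the bound back to $n$ using $n=\Omega(dr^2)$ to control the length lost in shortening. The paper supplies only a one-sentence pointer in place of a proof, so your write-up is essentially a fully spelled-out version of the intended argument; the only point worth flagging is that your claim $(r+1)\mid n'$ silently uses the section-level standing hypothesis $(r+1)\mid n$, which the theorem statement itself omits but which is declared in the opening paragraph of Section~4.
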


Based on Theorem~\ref{thm:r=2} and the bound for the case $(r,d)=(3,8)$, {$(r,d)=(4,8)$ and $(r,d)=(4,9)$} given in \cite{XY18}, we can summarize  the result for $1\le r\le 4$ in the following theorem.
\begin{theorem}\label{thm:4.5}
If there is {a Singleton-optimal} $[n,k,d\geq 5;r]$-locally repairable code over $\F_q$ with disjoint recovery sets, then we have
\begin{itemize}
\item[{\rm (i)}] $n=O(q)$ if $r=1$ with $2|d$ and the Main MDS conjecture holds; and $n=O(1)$ if $r=1$ with $2\nmid d$;
\item[{\rm (ii)}] (a) $n=O(q)$ if $r=2$ with $d\not\equiv 0\pmod{3}$ and the  Main MDS conjecture holds; (b) $n=O(q)$ if $r=2$ with $d\equiv 0\pmod{3}$ and $2|k$ and the  Main MDS conjecture holds; (c) $n=O(q^{1.5})$ if $r=2$ and $d\equiv 0\pmod{3}$;
\item[{\rm (iii)}] (a) $n=O(q^2)$ if $r=3$ and $d\equiv 0\pmod{4}$; (b) $n=O(q^2)$ if $r=3$, $d\equiv 0\pmod{4}$ and $k\not\equiv 1\pmod{3}$; (c)  $n=O(q)$ if $r=3$ and $d\equiv 1\pmod{4}$ and the Main MDS conjecture holds; (d) $n=O(q^2)$ if $r=3$ and $d\equiv 2,3\pmod{4}$; (e) $n=O(q)$ if $r=3$ and $d\equiv 3\pmod{4}$ with $3|k$ and the Main MDS conjecture holds;
\item[{\rm (iv)}] (a) $n=O(q^2)$ if $r=4$ and $d\equiv 0,1,2\pmod{5}$; (b) $n=O(q^2)$ if $r=4$ and $d\equiv 3,4\pmod{5}$.
\end{itemize}

\end{theorem}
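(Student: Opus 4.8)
The plan is to use the propagation rule of Lemma~\ref{lem:4.2} to reduce every case to a Singleton-optimal locally repairable code with the same locality $r$ and the same dimension $k$, but with minimum distance lying in the small window $\{5,6,\dots,r+5\}$, and then to quote the bounds already established for small distances: Theorems~\ref{thm:3.10}, \ref{thm:3.11} and \ref{thm:3.12}, the remark following Theorem~\ref{thm:3.12}, and the bounds of \cite{XY18} for $(r,d)=(3,8),(4,8),(4,9)$.

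First I would make the reduction precise. Fix $r\in\{1,2,3,4\}$ and let $d_0$ be the least integer with $d_0\ge 5$ and $d_0\equiv d\pmod{r+1}$; then $d_0\in\{5,\dots,r+5\}$, and since $d\ge 5$ we have $d\ge d_0$. Starting from the given code, apply Lemma~\ref{lem:4.2} with $a=0$ exactly $(d-d_0)/(r+1)$ times. Taking $a=0$ leaves $k$ unchanged (so that congruences such as $2\mid k$ for $r=2$ and the class of $k\bmod 3$ for $r=3$ are the same for the reduced code), and each step decreases both $n$ and $d$ by $r+1$; hence $n-d$ is invariant, and because $n=\Omega(dr^2)$ we have $n-d>r^2+2r$ throughout, which (as in the proof of Corollary~\ref{cor:4.3}) guarantees that the current code still has a recovery set of size $r+1$, so every step is legitimate and disjointness of the recovery sets is preserved. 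The outcome is a Singleton-optimal $[n_0,k,d_0;r]$-locally repairable code with disjoint recovery sets, $(r+1)\mid n_0$ (since $(r+1)\mid n$ and $(r+1)\mid(d-d_0)$), $n_0=n-(d-d_0)$, and $n_0=\Omega(d_0r^2)$. Since $d=o(n)$ in this regime, $n_0=\Theta(n)$, so any bound $n_0\le f(q)$ yields $n\le f(q)+d=(1+o(1))f(q)$; thus it is enough to bound $n_0$.

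Next I would run through the residues. For $r=1$: $2\mid d$ gives $d_0=6$ and Theorem~\ref{thm:3.11}(i), hence $n=O(q)$ under the Main MDS conjecture; $2\nmid d$ gives $d_0=5$ and Theorem~\ref{thm:3.10}(i), hence $n=O(1)$. For $r=2$: $d\equiv1,2,0\pmod{3}$ give $d_0=7,5,6$, and Theorem~\ref{thm:3.12}(ii), Theorem~\ref{thm:3.10}(ii), Theorem~\ref{thm:3.11}(ii) give $n=O(q)$ (Main MDS), $n=O(q)$ (Main MDS), and $n=O(q)$ if $2\mid k$ (Main MDS) or $n=O(q^{1.5})$ in general. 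For $r=3$: $d\equiv1,2,3,0\pmod{4}$ give $d_0=5,6,7,8$; Theorem~\ref{thm:3.10}(ii) gives $n=O(q)$ (Main MDS), Theorem~\ref{thm:3.11}(iii) gives $n=O(q^2)$, Theorem~\ref{thm:3.12}(iii) gives $n=O(q)$ if $3\mid k$ (Main MDS) and $n=O(q^2)$ otherwise, and for $d_0=8$ the remark after Theorem~\ref{thm:3.12} gives $n=O(q^2)$ when $k\not\equiv1\pmod{3}$ while \cite{XY18} gives $n=O(q^2)$ unconditionally. For $r=4$: $d\equiv0,1,2,3,4\pmod{5}$ give $d_0=5,6,7,8,9$, and Theorem~\ref{thm:3.10}(iv), Theorem~\ref{thm:3.11}(iii), Theorem~\ref{thm:3.12}(iv), and the $(4,8)$ and $(4,9)$ bounds of \cite{XY18} each give $n=O(q^2)$. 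The provisos (Main MDS conjecture; or a congruence of $k$) carried by each invoked bound match those in the statement, and when $d=d_0$ already, no reduction is performed and the small-distance result applies to the original code directly.

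The step I expect to be the main obstacle is bookkeeping rather than anything conceptual: in each sub-case one must verify that the reduced code genuinely satisfies the hypotheses of the quoted small-distance result — in particular $(r+1)\mid n_0$, disjointness of recovery sets (which by \cite{GXY19} follows from $(r+1)\mid n_0$ together with $n_0=\Omega(d_0 r^2)$), and $n_0=\Omega(d_0 r^2)$, the last of which forces the implied constant in the hypothesis $n=\Omega(dr^2)$ to be chosen large enough (large only as a function of $r\le 4$). One must also check that the congruence conditions on $k$ in Theorems~\ref{thm:3.11} and \ref{thm:3.12} and in the remark after Theorem~\ref{thm:3.12} line up with ``$2\mid k$'', ``$3\mid k$'' and ``$k\not\equiv1\pmod{3}$'' in the statement, using that the reduction with $a=0$ does not change $k$.
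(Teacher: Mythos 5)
Your approach is the same as the paper's: iterate Lemma~\ref{lem:4.2} with $a=0$ (that is, Corollary~\ref{cor:4.3}) to pull $d$ down to a representative $d_0\in\{5,\dots,r+5\}$ without changing $k$, verify that $(r+1)\mid n_0$, that a size-$(r+1)$ recovery set persists because $n-d$ stays above $r^2+2r$, and that disjointness carries through, and then invoke the small-distance bounds. The per-residue bookkeeping is correct everywhere except one spot, which is a genuine gap:

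For $r=2$, $d\equiv 0\pmod 3$ (case (ii)(c)), you reduce to $d_0=6$ and cite Theorem~\ref{thm:3.11}(ii) as if it yielded $n=O(q^{1.5})$ ``in general.'' It does not: Theorem~\ref{thm:3.11}(ii) gives $O(q)$ when $2\mid k$ (under the MDS conjecture) and only $O(q^2)$ when $k\equiv 1\pmod 2$. The $O(q^{1.5})$ bound for $(d,r)=(6,2)$ is an external input from \cite{CFXHF} (stated as Proposition~\ref{prop:1.3}(ii) in this paper), which the paper explicitly mixes into Theorem~\ref{thm:r=2} precisely for this cell. Without invoking \cite{CFXHF} your reduction chain for (ii)(c) terminates at $O(q^2)$, short of the claim. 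Listing \cite{CFXHF}'s $(d,r)=(6,2)$ bound alongside the \cite{XY18} bounds for $(3,8),(4,8),(4,9)$ as an ingredient closes the gap; the rest of the proposal, including the treatment of $d_0=8$ for $r=3$ via the remark after Theorem~\ref{thm:3.12} plus \cite{XY18}, matches the paper.
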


\subsection{The case that $r>4$}

We now proceed to discuss the case where $r$ is relatively large. Note that in the previous subsection, we only assume that recovery sets are disjoint. In this subsection, we further assume that $r+1$ divides $n$.

Based on Lemma \ref{lem:4.2}, we are able to derive a stronger upper bound.

\begin{theorem}\label{thm:4.7}
Let ${\mC}$ be {a Singleton-optimal} $[n,k,d\geq r+2;r]$-locally repairable code over $\F_q$ with disjoint recovery sets such that $(r+1)|n$, $d\equiv1,2,3,4,5 \pmod {r+1}$. Then, $n=O(q^{2})$.
\end{theorem}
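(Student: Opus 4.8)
The plan is to drive the minimum distance down to $5$ by iterating the propagation rule of Lemma~\ref{lem:4.2}, and then to quote the $d=5$ bound from Section~\ref{sec:3}. Write $b:=[d\pmod{r+1}]$; since $r\ge 5$ this is the principal remainder and lies in $\{1,2,3,4,5\}$. I would first rule out $b=1$: a Singleton-optimal code obeys $k+\left\lceil k/r\right\rceil=n-d+2$, and writing $k=qr+s$ with $0\le s\le r-1$ one has $k+\left\lceil k/r\right\rceil\equiv 0\pmod{r+1}$ if $s=0$ and $k+\left\lceil k/r\right\rceil\equiv s+1\in\{2,\dots,r\}\pmod{r+1}$ if $s\ge 1$, so $k+\left\lceil k/r\right\rceil$ is never congruent to $1$ modulo $r+1$. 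But $(r+1)\mid n$ together with $d\equiv 1\pmod{r+1}$ would force $n-d+2\equiv 1\pmod{r+1}$, a contradiction, so the subcase $b=1$ is vacuous; from now on $b\in\{2,3,4,5\}$.

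For the reduction, note that $d\ge r+2$ and $b\le 5\le r$ give $t:=\frac{d-b}{r+1}\ge 1$, and that Lemma~\ref{lem:2.4} pins the dimension down to $k\equiv 2-b\pmod r$. Apply Lemma~\ref{lem:4.2} exactly $t$ times: the first $t-1$ times with parameter $a=0$, and the last time with $a=5-b$. For the $a=0$ steps the condition $\left\lceil k/r\right\rceil=\left\lceil k/r\right\rceil$ is automatic and the dimension is unchanged; for the last step, using $k\equiv 2-b\pmod r$ and $r\ge 5$, one checks $0\le 5-b\le r-1$ and $\left\lceil k/r\right\rceil=\left\lceil(k-(5-b))/r\right\rceil$. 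Throughout, the inequality $n=\Omega(dr^2)$, i.e.\ $r^2+2r<n-d$, is preserved, since each step drops both $n$ and $d$ by $r+1$, so a recovery set of size $r+1$ stays available as Lemma~\ref{lem:4.2} requires. The outcome is a Singleton-optimal $[\tilde n,\tilde k,5;r]$-locally repairable code with disjoint recovery sets, where $\tilde n=n-t(r+1)=n-(d-b)$; for $b=5$ this reduction is nothing but Corollary~\ref{cor:4.3}.

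To pass back to $n$, observe $\tilde n\ge n-d=\Omega(dr^2)$, so $r=o(\tilde n)$; then Theorem~\ref{thm:3.10}(iv) gives $\tilde n=O(q^2)$, hence $n=\tilde n+(d-b)\le O(q^2)+d$. If $d\le n/2$ this already gives $n=O(q^2)$. If instead $d>n/2$, then $d=\lambda n$ with $\frac12<\lambda<1$ (indeed $\lambda<1$, since $k>r$ forces $d\le n-k\le n-r-1$), so Theorem~\ref{thm:3.5} gives $n\le 2q/\lambda<4q$. Either way $n=O(q^2)$. (For $b\in\{2,3,4,5\}$ one can instead simply invoke Proposition~\ref{prop:1.2}, so the genuinely new content of the theorem is really the vacuity of the case $b=1$.)

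The step I expect to be the main obstacle is the middle one: one must track the residue of the dimension modulo $r$ along the chain of reductions --- here this is painless, since the $a=0$ steps leave the dimension fixed and only the single $a=5-b$ step needs a congruence check --- and one must check that $n=\Omega(dr^2)$ survives every application of Lemma~\ref{lem:4.2}, this being used both to keep supplying a size-$(r+1)$ recovery set and to guarantee $r=o(\tilde n)$ at the end. It is also worth noting why the distance must be pushed all the way to $5$ and not stopped at the residue $b$: for $b\in\{2,3,4\}$ a Singleton-optimal $[n,k,b;r]$-locally repairable code with $(r+1)\mid n$ and disjoint recovery sets can have arbitrarily large length (for $b=2$, a direct sum of $[r+1,r,2]$ single-parity codes), so no length bound can be read off from it; only at $d=5$ does Section~\ref{sec:3} give a bound of the form $O(q^2)$ valid for all large $r$ (reducing to $d=6$ or $7$ would give only $O(q^{(4r-2)/(r+1)})$).
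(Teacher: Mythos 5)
Your argument is correct and follows essentially the same route as the paper's: use the propagation rule (Lemma~\ref{lem:4.2}, tracking the dimension residue modulo $r$) to descend to a Singleton-optimal code of minimum distance $5$, then invoke a known $d=5$ length bound $O(q^2)$; the paper performs the residue-adjusting step ($a=5-b$) first and then uses Corollary~\ref{cor:4.3}, whereas you do it last, which is immaterial. One point you make explicit that the paper glosses over is that the subcase $d\equiv 1\pmod{r+1}$ is actually vacuous when $(r+1)\mid n$: the Singleton identity $k+\lceil k/r\rceil=n-d+2$ forces $n-d+2$ to lie in $\{0,2,\dots,r\}$ modulo $r+1$, never $1$, so the paper's treatment of $a=1$ (deriving $r\mid k$ from Lemma~\ref{lem:2.4}) is formally consistent only because it refers to an empty family of codes. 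Your remaining refinements — using Theorem~\ref{thm:3.10}(iv) rather than Proposition~\ref{prop:1.1} at the endpoint, verifying that $n=\Omega(dr^2)$ survives each reduction, and handling the passage from $\tilde n=O(q^2)$ back to $n=O(q^2)$ by splitting on $d\le n/2$ versus $d>n/2$ — are sound and merely make explicit what the paper leaves implicit.
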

\begin{proof}
Note that our setting is $r>3$. The case $r\leq 3$ is proved in our previous subsection.
Let $d\equiv a \pmod {r+1}$ with $a=1,2,3,4,5$. By the Singleton-type bound in Lemma~\ref{lem:2.4}, we have
$$
k=\frac{rn}{r+1}-\left(d-2-\left\lfloor \frac{d-2}{r+1}\right\rfloor\right).
$$
This implies $k\equiv 0 \pmod{ r}$ when $a=1$ and $k\equiv r-a+2 \pmod{ r }$ otherwise. If $k\equiv 0 \pmod{ r}$, then $5-a<r$ and we have ${\lceil \frac{k}{r}\rceil}=\lceil \frac{k-(5-a)}{r} \rceil$. If $k\equiv r-a+2 \pmod{ r }$, then $r-a+2-(5-a)=r-3\geq 1$. This implies that
{ ${\lceil \frac{k}{r}\rceil}=\lceil \frac{k-(5-a)}{r} \rceil$.}
Therefore, by  Lemma \ref{lem:4.2} we obtain {a Singleton-optimal} $[n-(r+1),k-(5-a),d-(r+1)+5-a;r]$-locally repairable code ${\mC}$ over $\F_q$ with disjoint recovery sets and $(r+1)|n$. Since $d-(r+1)+5-a\equiv5 \pmod {r+1}$, by Corollary \ref{cor:4.3} and Proposition \ref{prop:1.1}, we have $n-(r+1)=O(q^2)$. The proof is completed.
\end{proof}

We now proceed to prove an upper bound for $d\pmod{ r+1}>5$. Note that this implicitly assumes $r+1>5$.
\begin{theorem}\label{thm:4.8}
Let ${\mC}$ be {a Singleton-optimal} $[n,k,d\geq r+2;r]$-locally repairable code over $\F_q$ with disjoint recovery sets such that $(r+1)|n$. If $d\equiv 4t-i \pmod{ r+1}$
with $1\le t\le r/4$ and $i\in \{-1,0,1,2\}$, then $n=O(q^{3-\frac{1}{t}})$.
\end{theorem}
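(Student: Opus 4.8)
The plan is to apply the propagation rule of Lemma~\ref{lem:4.2} repeatedly to reduce $\mC$ to a Singleton-optimal locally repairable code of the same locality $r$ but with minimum distance exactly $d^{*}:=4t+1$, and then invoke Proposition~\ref{prop:1.1} in the case $d^{*}\equiv 1\pmod 4$, which gives $n^{*}\le O\!\left((4t+1)\,q^{\,3-4/(4t)}\right)=O\!\left(q^{\,3-1/t}\right)$ since $d^{*}-1=4t$. Because each application of Lemma~\ref{lem:4.2} removes exactly $r+1$ coordinates while altering $n-d$ by at most a bounded amount, and because $n=\Omega(dr^{2})$ forces $d=O(n/r^{2})$, one gets $n^{*}\ge n-d-O(1)\ge n/2$ for all but finitely many $n$, hence $n=O(n^{*})=O\!\left(q^{\,3-1/t}\right)$, which is the claim.

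For $i=-1$ the residue of $d$ modulo $r+1$ equals $4t+1$ (the hypothesis $1\le t\le r/4$ gives $4t+1\le r+1$), and $d\ge r+2>4t+1$, so $d-(4t+1)$ is a positive multiple of $r+1$. Applying Lemma~\ref{lem:4.2} with $a=0$ exactly $(d-4t-1)/(r+1)$ times — which preserves Singleton-optimality, the divisibility $(r+1)\mid n$, the disjointness of the recovery sets, and the quantity $n-d$ (hence, in particular, the existence of a recovery set of size $r+1$ needed for each step) — lands on a Singleton-optimal $[\,n^{*},k^{*},4t+1;r\,]$ code with $n^{*}-d^{*}=n-d=\Omega(d^{*}r^{2})$, to which Proposition~\ref{prop:1.1} applies directly.

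For $i\in\{0,1,2\}$ the steps with $a=0$ alone cannot reach $4t+1$ because they preserve $d\bmod(r+1)$; instead I would first descend with $a=0$ to the intermediate distance $d'=4t+(r+1),\,4t+r,\,4t+r-1$ for $i=0,1,2$ respectively (the hypotheses $1\le t\le r/4$ and $d\ge r+2$ make each of these congruent to $d\bmod(r+1)$ and at most $d$), and then apply Lemma~\ref{lem:4.2} once with $a=i+1$, landing on $d'-(r+1)+(i+1)=4t+1$. The only substantive point is the hypothesis $\lceil k'/r\rceil=\lceil (k'-a)/r\rceil$ of Lemma~\ref{lem:4.2} at the intermediate code, which is equivalent to $k'\bmod r\notin\{1,\dots,i+1\}$. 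The intermediate code is Singleton-optimal with $(r+1)\mid n'$, so Lemma~\ref{lem:2.4} applies; using $1\le t\le r/4$ one checks $\lfloor (d'-2)/(r+1)\rfloor=1$, hence $k'\equiv 3-d'\pmod r$, i.e.\ $k'\equiv 2-4t,\ 3-4t,\ 4-4t\pmod r$ for $i=0,1,2$. The forbidden values $k'\equiv 1,2,3\pmod r$ translate into $r\mid 4t-1$, $r\mid 4t-2$, $r\mid 4t-3$, each impossible since $1\le t\le r/4$ forces $0<4t-j<r$ for $j=1,2,3$. Thus the reduction goes through, the resulting $[\,n^{*},k^{*},4t+1;r\,]$ code is Singleton-optimal with disjoint recovery sets, $(r+1)\mid n^{*}$, and $n^{*}-d^{*}=(n-d)-O(1)=\Omega(d^{*}r^{2})$, so Proposition~\ref{prop:1.1} again applies.

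I expect the residue-shifting step for $i\in\{0,1,2\}$ to be the main obstacle: recognizing that a single application of Lemma~\ref{lem:4.2} with $a=i+1$ moves $d\bmod(r+1)$ from $4t-i$ to $4t+1$, and then verifying the ceiling-equality hypothesis $\lceil k'/r\rceil=\lceil (k'-a)/r\rceil$ at the intermediate code. This is precisely where the exact range $1\le t\le r/4$ enters, to rule out $r\mid 4t-j$ for $j\in\{1,2,3\}$. Everything else — the $a=0$ descent, the comparison $n\asymp n^{*}$, and the final call to Proposition~\ref{prop:1.1} with $3-4/(d^{*}-1)=3-1/t$ — is routine bookkeeping.
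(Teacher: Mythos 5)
Your proposal is correct and follows essentially the same route as the paper: both reduce to a Singleton-optimal code with minimum distance exactly $4t+1\equiv 1\pmod 4$ by repeated application of Lemma~\ref{lem:4.2} (once with $a=i+1$ to shift the residue of $d$ modulo $r+1$, and otherwise with $a=0$), verify the required ceiling identity $\lceil k/r\rceil=\lceil(k-(i+1))/r\rceil$ via the Singleton-type formula of Lemma~\ref{lem:2.4} together with $1\le t\le r/4$, and then invoke Proposition~\ref{prop:1.1}. The only (harmless) difference is the order of operations: the paper performs the residue-shifting step with $a=i+1$ first, at the original distance $d$, and then descends with Corollary~\ref{cor:4.3}; you descend first to a small intermediate $d'$ and then shift — but since the $a=0$ steps preserve both $k\bmod r$ and $d\bmod(r+1)$, the verification of the ceiling condition is the same computation in either order.
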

\begin{proof}
The proof is almost the same as Theorem~\ref{thm:4.7}.
Let $d\equiv a \pmod {r+1}$ with $a=4t-i$. By the Singleton-type bound in Lemma~\ref{lem:2.4}, we have
$$
k=\frac{rn}{r+1}-\left(d-2-\left\lfloor \frac{d-2}{r+1}\right\rfloor\right).
$$
This implies $k\equiv r-a+2 \pmod r $ as $a\geq 2$.
{One can easily verify that $\lceil \frac{k}{r}\rceil=\lceil \frac{k-(i+1)}{r} \rceil$ as {$r-a+2-(i+1)=r+1-4t\geq 1$.}}
Therefore, by  Lemma \ref{lem:4.2} we obtain {a Singleton-optimal} $[n-(r+1),k-(i+1),d-(r+1)+i+1;r]$-locally repairable code over $\F_q$ with disjoint recovery sets and $(r+1)|n$. Since $d-(r+1)+i+1\equiv 4t+1\pmod {r+1}$, by Corollary \ref{cor:4.3} and Proposition \ref{prop:1.1}, the proof is completed.
\end{proof}
\begin{rmk}
Theorem \ref{thm:4.7} and Theorem \ref{thm:4.8} show that in many cases, the maximum length of the Singleton-optimal locally repairable code is strictly less than $O(q^3)$ when $(r+1)|n$.
\end{rmk}

\end{document}